\newcommand{\p}{\partial}
\newcommand{\sgn}{\mathop{\rm sgn}\nolimits}
\newcommand{\const}{{\rm const}}
\newcommand{\pr}{\mathop{\rm pr}\nolimits}
\newcommand{\e}{\varepsilon}
\newtheorem{theorem}{Theorem}
\newtheorem{lemma}[theorem]{Lemma}
\newtheorem{proposition}[theorem]{Proposition}
\newtheorem*{proposition*}{Proposition}
{\theoremstyle{definition}

\newtheorem{remark}[theorem]{Remark}

}
\newcommand{\todo}[1][\null]{\ensuremath{\clubsuit}}
\newcommand{\noprint}[1]{}
\newcounter{tbn}
\newcounter{mcasenum}
\begin{document}
\par\noindent {\LARGE\bf
Enhanced group classification\\ of nonlinear diffusion--reaction equations\\ with gradient-dependent diffusivity
\par}

{\vspace{5mm}\par\noindent {\large Stanislav Opanasenko$^{\dag\S}$, Vyacheslav Boyko$^\S$ and Roman O.\ Popovych$^{\ddag\S}$
} \par\vspace{3mm}\par}

{\vspace{3mm}\par\noindent {\it
$^{\dag}$Department of Mathematics and Statistics, Memorial University of Newfoundland,\\
$\phantom{^{\dag}}$St.\ John's (NL) A1C 5S7, Canada
}}
{\vspace{3mm}\par\noindent {\it
$^\ddag$Fakult\"at f\"ur Mathematik, Universit\"at Wien, Oskar-Morgenstern-Platz 1, 1090 Wien, Austria
$\phantom{^\ddag}$\& Mathematical Institute, Silesian University in Opava, Na Rybn\'\i{}\v{c}ku 1, 746 01 Opava,\\
$\phantom{^\ddag}$Czech Republic\par

}}
{\vspace{3mm}\par\noindent {\it
$^{\S}$Institute of Mathematics of NAS of Ukraine, 3 Tereshchenkivs'ka Str., 01004 Kyiv, Ukraine
}}

{\vspace{3mm}\par\noindent {\it
\textup{E-mail:} sopanasenko@mun.ca, boyko@imath.kiev.ua, rop@imath.kiev.ua
}\par}

\vspace{9mm}\par\noindent\hspace*{10mm}\parbox{140mm}{\small
We carry out the enhanced group classification of a class
of (1+1)-dimensional nonlinear diffusion--reaction equations with gradient-dependent diffusivity
using the two-step version of the method of furcate splitting.
For simultaneously finding the equivalence groups
of an unnormalized class of differential equations and a collection of its subclasses,
we suggest an optimized version of the direct method.
The optimization includes the preliminary study of admissible transformations within the entire class
and the successive splitting of the corresponding determining equations with respect to arbitrary elements
and their derivatives depending on auxiliary constraints associated with each of required subclasses.
In the course of applying the suggested technique to subclasses of the class under consideration,
we construct, for the first time, a nontrivial example of finite-dimensional effective generalized equivalence group.
Using the method of Lie reduction and the generalized separation of variables,
exact solutions of some equations under consideration are found.
}\par\vspace{5mm}

\noprint{
MSC: 35B06, 35K57, 35C05, 35A30
35-XX   Partial differential equations
  35A30   Geometric theory, characteristics, transformations [See also 58J70, 58J72]
  35B06   Symmetries, invariants, etc.
 35Cxx  Representations of solutions
  35C05   Solutions in closed form
  35C06   Self-similar solutions
  35C07   Traveling wave solutions
  35C08   Soliton solutions
  35K57   Reaction-diffusion equations

Keywords: group classification of differential equations, method of furcate splitting,
          diffusion--reaction equations, Lie symmetry, equivalence group, Lie reduction,
          effective generalized equivalence group, invariant solution, equivalence algebra

}

\section{Introduction}

The recent researches in biology showed that diffusion--reaction equations
draw an attention as a prototype model for pattern formation.
The various patterns, such as fronts, spirals, targets etc.,
can be found in various types of diffusion--reaction systems depending on large discrepancies~\cite{Murray2002,Murray2003}.
The latest application of diffusion--reaction processes are connected to process of morphogenesis
as well as can be relevant to animal coats and skin pigmentation~\cite{Grindrod1996}.
Among other applications are ecological invasions, spread of epidemics, tumour growth and wound healing~\cite{Smoller1994}.

The initial purpose of the present paper was the group classification of the class~$\mathcal R$ of (1+1)-dimensional diffusion--reaction equations
with a gradient-dependent diffusivity,
\begin{equation}\label{eq:RDEsClass}
u_t=f(u_x)u_{xx}+g(u),
\end{equation}
where $f=f(u_x)$ and $g=g(u)$ are smooth functions of their arguments with~$f\ne0$.
This problem had been considered in~\cite{ChernihaKingKovalenko2016} with several weaknesses
(see the conclusion of the present paper),
which made necessary to accurately study it once more.
Using the classical method of Lie reduction and other techniques,
we also planned to construct exact solutions of equations from the regular subclass of the class~$\mathcal R$
that are the most interesting from the Lie-symmetry point of view.

According to the formalized definition of a class of differential equations~\cite{Popovych2006,PopovychKunzingerEshraghi2010},
the complete system of auxiliary equations and inequalities for the arbitrary elements~$f$ and~$g$ of the class~$\mathcal R$ is given by
\begin{gather}\label{eq:RDEsClassAuxiliarySystem}
\begin{split}
&f_t=f_x=f_u=f_{u_t}=f_{u_{tt}}=f_{u_{tx}}=f_{u_{xx}}=0,\quad f\ne0,\\
&g_t=g_x=g_{u_t}=g_{u_x}=g_{u_{tt}}=g_{u_{tx}}=g_{u_{xx}}=0.
\end{split}
\end{gather}
The structure of the class~$\mathcal R$ is not nice from the point of view of equivalence transformations and Lie symmetries.
This is why it is convenient to represent this class as a union of four subclasses,
\[
\mathcal R=\mathcal H\cup\mathcal L\cup\mathcal F\cup\mathcal C.
\]

The subclass~$\mathcal H$ of semilinear equations (called ``nonlinear \emph{heat} equations with source'') is singled out by the constraint~$f_{u_x}=0$.
This class includes all linear equations from the class~$\mathcal R$, which additionally satisfy the constraint~$g_{uu}=0$
and are reduced by simple point transformations to the linear heat equation $u_t=u_{xx}$.
The complete group classification of the subclass~$\mathcal H$ was carried out in~\cite{Dorodnitsyn1982}
in the course of the group classification of the wider class of diffusion--reaction equations
of the general form $u_t=f(u)u_{xx}+g(u)$ with $f\ne0$.
See also \cite[pp.~133--136]{Ames1994V1} for an enhanced representation of these results.

The subclass~$\mathcal L$ consists of equations from the class~$\mathcal R$,
where the values of the arbitrary element~$f$ satisfy the constraint~$(u_x{}^{\!2}f)_{u_x}=0$.
The subclass~$\mathcal L$ is special because each equation from it can be \emph{linearized} to a Kolmogorov equation.
More specifically, the hodograph transformation $\tilde t=t$, $\tilde x=u$, $\tilde u=x$
with~$(\tilde t,\tilde x)$ and~$\tilde u$ being the new independent and dependent variables, respectively,
maps an equation of the form~\eqref{eq:RDEsClass}, where $f=cu_x^{-2}$ with $c=\const\ne0$, to the Kolmogorov equation
\begin{gather}\label{eq:RDEsKolmogorovEqs}
\tilde u_{\tilde t}=c\tilde u_{\tilde x\tilde x}-g(\tilde x)\tilde u_{\tilde x}.
\end{gather}
In particular, this means that the subclass~$\mathcal L$
has completely different point-transformation and Lie-symmetry properties in comparison with the other subclasses,
and its group classification reduces to the group classification of the class of Kolmogorov equations,
which was presented, e.g., in~\cite[Corollary~7]{PopovychKunzingerIvanova2008} up to general point equivalence.

The subclass~$\mathcal F$ is singled out by the constraint $g_u=0$.
The singularity of the subclass~$\mathcal F$ is exhibited by properties of its equivalence transformations,
see Section~\ref{sec:RDEEquivalenceTransformations}.
In particular, the subclass~$\mathcal F$ admits an extension of equivalence group
in comparison with the entire class~$\mathcal R$,
and it is mapped by a family of its equivalence transformations to its subclass~$\mathcal F'$ associated with the additional constraint $g=0$.
Thus, the group classification of the subclass~$\mathcal F$ reduces to the group classification of the subclass~$\mathcal F'$.
The latter subclass consists of ``nonlinear \emph{filtration} equations'',
which are of the form~\eqref{eq:RDEsClass} with $f\ne0$ and~$g=0$.
The group classification of the class~$\mathcal F'$ was carried out in~\cite{AkhatovGazizovIbragimov1987,AkhatovGazizovIbragimov1991}.
Lists of inequivalent Lie-symmetry extensions in this class up to its complete equivalence group and up to a proper subgroup of this group
can also be singled out from the corresponding lists for potential diffusion--convection equations,
presented in~\cite{PopovychIvanova2005}, by selecting cases with the zero convection coefficient.

Each of the additional auxiliary constraints associated with subclasses~$\mathcal H$, $\mathcal L$ and~$\mathcal F$
is related to a special case of solving the group classification problem for the class~$\mathcal R$.
This is why we call the \emph{complement}~$\mathcal C$ of $\mathcal H\cup\mathcal L\cup\mathcal F$ in~$\mathcal R$
the \emph{regular subclass} of~$\mathcal R$.
It is associated, as a subclass of~$\mathcal R$, with the system of the inequalities
$f_{u_x}\ne0$, $(u_x{}^{\!2}f)_{u_x}\ne0$ and $g_u\ne0$.

Note that the union of the subclasses~$\mathcal H$, $\mathcal L$ and~$\mathcal F$ is not disjoint
since there are two nonempty intersections among the pairwise intersections of these subclasses,
$\mathcal H\cap\mathcal F$ and $\mathcal L\cap\mathcal F$.
Unfortunately, there is no partition of the class~$\mathcal R$ into subclasses
that is convenient for group classification.
For example, the equivalence group of the subclass $\mathcal F\setminus(\mathcal H\cup\mathcal L)$
is merely a proper subgroup of the equivalence group of the subclass~$\mathcal F$,
which essentially complicates the group classification of $\mathcal F\setminus(\mathcal H\cup\mathcal L)$
in comparison with~$\mathcal F$.

Although the group classifications for the subclasses~$\mathcal H$, $\mathcal L$ and~$\mathcal F$ are in fact known,
the complete exclusion of these subclasses from the consideration,
as this was done in~\cite{ChernihaKingKovalenko2016}, is not natural.
Thus, there are point transformations
mapping equations from the subclass~$\mathcal F$ to equations from the subclass~$\mathcal C$,
and related Lie-symmetry extensions in~$\mathcal F$ are simpler than their counterparts in~$\mathcal C$.
The same claim is true for the subclass pair $(\mathcal L,\mathcal H)$.
Therefore, to solve the group classification problem for the class~$\mathcal R$,
we carry out the group classification of the regular subclass~$\mathcal C$
using the technique of furcate splitting~\cite{IvanovaPopovychSophocleous2010,NikitinPopovych2001,VaneevaPopovychSophocleous2012},
combine the result with the known group classification of the subclass~$\mathcal F$
and supplement it with the additional inequivalent cases of Lie-symmetry extension from the subclasses~$\mathcal H$ and~$\mathcal L$.

As mentioned above, the initial purpose of the present paper was to correctly solve the group classification problem for the class~$\mathcal R$
but it was changed after the careful analysis of admissible and equivalence transformations within this class.
The generalized equivalence group~$\bar G^{\sim}_{\mathcal F}$ of the subclass~$\mathcal F$ turns out to be nontrivial,
and using it as a conditional generalized equivalence group of the class~$\mathcal R$
simplifies the computation in the course of the group classification of this class
and makes it more consistent with the subclass hierarchy considered for the class~$\mathcal R$.
We also constructed an effective generalized equivalence group~$\hat G^{\sim}_{\mathcal F}$ of the subclass~$\mathcal F$,
which is perhaps the most interesting and unexpected result of the paper
since it gives the first example of nontrivial finite-dimensional effective generalized equivalence group in the literature.
Moreover, the group~$\hat G^{\sim}_{\mathcal F}$ is a proper but not normal subgroup of the group~$\bar G^{\sim}_{\mathcal F}$,
and hence it is not a unique effective generalized equivalence group~$\hat G^{\sim}_{\mathcal F}$ the class~$\mathcal F$.
One more interesting feature of the class~$\mathcal F$ is that
its usual equivalence group is contained in no effective generalized equivalence group of this class.

The rest of the paper is organized as follows.
In Section~\ref{sec:RDEEquivalenceTransformations} we simultaneously compute the equivalence groups
of the entire class~$\mathcal R$ and of the above subclasses of this class
using an original optimized version of the direct method.
The classification of Lie symmetries of equations from the class~$\mathcal R$
is presented in~Section~\ref{sec:RDEsClassGroupClassification}.
Section~\ref{sec:RDEsFurcateSplit} contains the first comprehensive description of the method of furcate splitting.
The two-step version of this method is used in Section~\ref{sec:RDEsGroupClassificationOfC} to solve,
as a part of the group classification problem for the entire class~$\mathcal R$,
the group classification problem for its regular subclass~$\mathcal C$.
In Section~\ref{sec:RDEsSolutions}, we select the three most interesting cases
in the classification list obtained for the subclass~$\mathcal C$
and construct exact solutions of the related equations using Lie reduction or the generalized separation of variables.
Results of the paper are discussed in Section~\ref{sec:RDEsConclusions}.

\section{Equivalence transformations}\label{sec:RDEEquivalenceTransformations}

For simultaneously finding the equivalence groups of an unnormalized class 
of differential equations and a collection of its subclasses,
we suggest an optimized version of the direct method,
which involves the preliminary study of admissible transformations within the entire class
and the successive splitting of the determining equations for these transformations
with respect to the corresponding arbitrary elements and their derivatives,
depending on auxiliary constraints associated with each of required subclasses.

By definition, equivalence transformations for the class~$\mathcal R$ and its subclasses
are point transformations in the joint space of the independent variables~$(t,x)$,
the dependent variable~$u$, its first- and second-order derivatives and the arbitrary elements $f$ and~$g$.
Due to specific form of the arbitrary elements $f$ and~$g$, these transformations
can be defined on spaces with a smaller number of coordinates; cf.~\cite{OpanasenkoBihloPopovych2017}.
Since the arbitrary element~$f$ depends on~$u_x$,
this derivative should be among the coordinates of such a space
although the corresponding transformation components can still be computed
from the $x$- and $u$-components using the chain rule.
Due to the evolution form of equations, the derivative~$u_t$ is not involved in the transformation components for~$u_x$
and thus can be excluded from the coordinates of such a space.
As a result, the minimal list of coordinates for a space underlying equivalence transformations
for the classes~$\mathcal R$, $\mathcal L$, $\mathcal F$ and~$\mathcal C$ is $(t,x,u,u_x,f,g)$.
(The $u_x$-component of equivalence transformations was missed in \cite[Theorem~2]{ChernihaKingKovalenko2016}.)
Since the subclass~$\mathcal H$ is associated with the constraint $f_{u_x}=0$,
the coordinate $u_x$ can be neglected when defining equivalence transformations of this subclass
but for unification we will not use this possibility.
At the same time, in view of the constraint $g=0$
it is convenient to exclude the $g$-component when defining equivalence transformations of the subclass~$\mathcal F'$.
\looseness=-1

In order to compute the equivalence groups of the class~$\mathcal R$ and its subclasses in a uniform way,
we begin this computation with the preliminary study of admissible transformations in this class.
Since the class~$\mathcal R$ consists of (1+1)-dimensional second-order evolution equations
whose right hand sides are affine in the derivative~$u_{xx}$,
any contact admissible transformation in this class
is a prolongation of a point admissible transformation~\cite{PopovychSamoilenko2008}.
Moreover, the $t$-component of any admissible transformation within~$\mathcal R$ depends only on~$t$;
see the related assertions in~\cite{Magadeev1993} and~\cite{Kingston1991} for contact and point transformations
between (1+1)-dimensional evolution equations, respectively.
Therefore, we consider a point transformation of independent and dependent variables of the~form\looseness=-1
\begin{gather}\label{eq:RDEsGenPointTrans}
\tilde t=T(t),\quad \tilde x=X(t,x,u),\quad\tilde u=U(t,x,u),
\end{gather}
where $T_t(X_xU_u-X_uU_x)\ne0$,
that connects the source and the target equations from the class~$\mathcal R$,
\begin{gather}\label{eq:RDEsSource&TargetEqs}
u_t=f(u_x)u_{xx}+g(u)\quad\text{and}\quad
\tilde u_{\tilde t}=\tilde f(\tilde u_{\tilde x})\tilde u_{\tilde x\tilde x}+\tilde g(\tilde u).
\end{gather}
Proceeding with the direct method of finding the equivalence groupoid of a class of equations,
we write the differentiation operators~$\p_{\tilde t}$ and~$\p_{\tilde x}$
with respect to new independent variables in terms of old ones as
\[
\p_{\tilde t}=\frac1{T_t}\left(\mathrm D_t-\frac{\mathrm D_tX}{\mathrm D_xX}\mathrm D_x\right),
\quad \p_{\tilde x}=\frac1{\mathrm D_xX}\mathrm D_x,
\]
where ${\rm D}_t=\p_t+u_t\p_u+u_{tt}\p_{u_t}+u_{tx}\p_{u_x}+\cdots$ and
${\rm D}_x=\p_x+u_x\p_u+u_{tx}\p_{u_t}+u_{xx}\p_{u_x}+\cdots$
are the total derivative operators with respect to~$t$ and~$x$, respectively.
We substitute the expressions for~$\tilde u$, $\tilde u_{\tilde t}$, $\tilde u_{\tilde x}$ and~$\tilde u_{\tilde x\tilde x}$
in terms of old variables into the target equation,
\[
\tilde u_{\tilde x}=V:=\frac{\mathrm D_xU}{\mathrm D_xX}, \quad
\tilde u_{\tilde t}=\frac1{T_t}\left(\mathrm D_tU-V\mathrm D_tX\right),\quad
\tilde u_{\tilde x\tilde x}=\frac{\mathrm D_xV}{\mathrm D_xX},
\]
replace~$u_t$ by $fu_{xx}+g$ and successively split the equation obtained with respect to~$u_{xx}$.
This results in ``expressions'' for the target arbitrary elements~$\tilde f$ and~$\tilde g$,
\begin{gather}\label{eq:RDEsExprForTargetArbitraryElementF}
\tilde f(V)=\frac{(\mathrm D_xX)^2}{T_t}f(u_x),
\\[.5ex]\label{eq:RDEsExprForTargetArbitraryElementG}
\tilde g(U)=\frac{\Delta}{T_t\mathrm D_xX}g(u)-\frac{\mathrm D_xX}{T_t}(V_x+u_xV_u)f(u_x)+\frac{U_t\mathrm D_xX-X_t\mathrm D_xU}{T_t\mathrm D_xX},
\end{gather}
where $\Delta:=X_xU_u-X_uU_x\ne0$.
Since both the source and target arbitrary-element tuples satisfy the auxiliary system~\eqref{eq:RDEsClassAuxiliarySystem},
the equations~\eqref{eq:RDEsExprForTargetArbitraryElementF} and~\eqref{eq:RDEsExprForTargetArbitraryElementG}
imply further determining equations for admissible transformations in the class~$\mathcal R$.
There are two ways for deriving these determining equations.

The first way is to express the operators~$\p_{\tilde t}$, $\p_{\tilde x}$, $\p_{\tilde u}$ and~$\p_{\tilde u_{\tilde x}}$
in terms of the operators~$\p_t$, $\p_x$, $\p_u$ and $\p_{u_x}$, which act on functions of $(t,x,u,u_x)$,
using the equalities implied by the chain rule for these operators:
\begin{gather*}
\p_t=T_t\p_{\tilde t}+X_t\p_{\tilde x}+U_t\p_{\tilde u}+V_t    \p_{\tilde u_{\tilde x}},
\quad
\begin{array}{l}
\p_x=                 X_x\p_{\tilde x}+U_x\p_{\tilde u}+V_x    \p_{\tilde u_{\tilde x}},\\[1ex]
\p_u=                 X_u\p_{\tilde x}+U_u\p_{\tilde u}+V_u    \p_{\tilde u_{\tilde x}},
\end{array}
\quad
\p_{u_x}=                                               V_{u_x}\p_{\tilde u_{\tilde x}}.
\end{gather*}
Then one acts by the operators~$\p_{\tilde t}$, $\p_{\tilde x}$ and~$\p_{\tilde u}$
on the equation~\eqref{eq:RDEsExprForTargetArbitraryElementF}
and the operators~$\p_{\tilde t}$, $\p_{\tilde x}$ and~$\p_{\tilde u_{\tilde x}}$
on the equation~\eqref{eq:RDEsExprForTargetArbitraryElementG}.
The determining equations derived in this way are appropriate
in order to solve the problem of describing the equivalence groupoid~$\mathcal G^\sim_{\mathcal R}$ of the class~$\mathcal R$.
This problem is reduced to the classification of admissible transformations,
which is similar to but more complicated than
the classification of Lie symmetries in Sections~\ref{sec:RDEsClassGroupClassification} and~\ref{sec:RDEsGroupClassificationOfC}
below.
It is not a subject of the present paper although we will need a partial classification of admissible transformations
for proving Theorem~\ref{thm:RDEGroupClassificationUpToGenPointEquiv}.

We use the other way, which gives more compact determining equations for equivalence transformations.
We solve the equations~\eqref{eq:RDEsExprForTargetArbitraryElementF} and~\eqref{eq:RDEsExprForTargetArbitraryElementG}
with respect to $f$ and~$g$, respectively,
\begin{gather}\label{eq:RDEsExprForSourceArbitraryElementF}
f(u_x)=\frac{T_t}{(\mathrm D_xX)^2}\tilde f(V),
\\[.5ex]\label{eq:RDEsExprForSourceArbitraryElementG}
g(u)=\frac{T_t\mathrm D_xX}{\Delta}\tilde g(U)+\frac{(\mathrm D_xX)^2}{\Delta}(V_x+u_xV_u)f(u_x)
-\frac{U_t\mathrm D_xX-X_t\mathrm D_xU}{\Delta},
\end{gather}
separately differentiate the equation~\eqref{eq:RDEsExprForSourceArbitraryElementF} with respect to~$t$, $x$ and~$u$,
then separately differentiate the equation~\eqref{eq:RDEsExprForSourceArbitraryElementG} with respect to~$t$ and $x$
and successively substitute for~$f$ in view of~\eqref{eq:RDEsExprForSourceArbitraryElementF}
as well as substitute the expression~\eqref{eq:RDEsExprForSourceArbitraryElementF} for~$f$
into~\eqref{eq:RDEsExprForSourceArbitraryElementG} and then differentiate the resulting equation with respect to~$u_x$.
This leads to the following classifying equations for admissible transformations within the class~$\mathcal R$:
\begin{gather}\label{eq:RDEsDetEqForAdmTrans1}
\frac{T_tV_z}{(\mathrm D_xX)^2}\tilde f_{\tilde u_{\tilde x}}(V)+\left(\frac{T_t}{(\mathrm D_xX)^2}\right)_z\tilde f(V)=0,\quad z\in\{t,x,u\},
\\[2.5ex]\label{eq:RDEsDetEqForAdmTrans2}
\begin{split}
&\frac{T_t\mathrm D_xX}{\Delta}U_y\tilde g_{\tilde u}(U)+\left(\frac{T_t\mathrm D_xX}{\Delta}\right)_y\tilde g(U)
+\left(\frac{(\mathrm D_xX)^2}{\Delta}(V_x+u_xV_u)\right)_y\frac{T_t}{(\mathrm D_xX)^2}\tilde f(V)
\\
&\qquad-\left(\frac{U_t\mathrm D_xX-X_t\mathrm D_xU}{\Delta}\right)_y=0,\quad y\in\{t,x\},
\end{split}
\\[2ex]\label{eq:RDEsDetEqForAdmTrans3}
\frac{X_u}{\Delta}\tilde g(U)+\frac{V_x+u_xV_u}{(\mathrm D_xX)^2}\tilde f_{\tilde u_{\tilde x}}(V)
+(V_u+V_{xu_x}+u_xV_{uu_x})\frac{\tilde f(V)}{\Delta}
-\frac{U_tX_u-X_tU_u}{T_t\Delta}=0.
\end{gather}
(We divided the last equation by~$T_t$.)
Since the $t$-, $x$- and $u$-components of usual equivalence transformations do not depend on the arbitrary elements~$f$ and~$g$,
in the course of computing the usual equivalence groups of the class~$\mathcal R$ and its subclasses
we can split the classifying equations~\eqref{eq:RDEsDetEqForAdmTrans1}--\eqref{eq:RDEsDetEqForAdmTrans3}
with respect to all parametric derivatives of the target arbitrary elements including~$\tilde f$ and~$\tilde g$ themselves.
At the same time, the complete system of classifying equations for admissible transformations
within a subclass of the class~$\mathcal R$ may be more restrictive
than the equations~\eqref{eq:RDEsDetEqForAdmTrans1}--\eqref{eq:RDEsDetEqForAdmTrans3};
see the proofs of propositions below.

\begin{remark}\label{rem:RDEsEquivTransComponents}
Each equivalence transformation of any subclass of the class~$\mathcal R$
is completely defined by its the $t$-, $x$- and $u$-components.
Indeed, if these components are known,
then the $u_x$-component is computed by the chain rule,
and the expressions for $f$- and $g$-components follow
from the equations~\eqref{eq:RDEsExprForTargetArbitraryElementF} and~\eqref{eq:RDEsExprForTargetArbitraryElementG}.
This is why we do not discuss the derivation of the latter expressions below.
\end{remark}

\begin{proposition}\label{pro:RDEsUsualEquivGroupOfR}
The usual equivalence group~$G^\sim_{\mathcal R}$ of the class~$\mathcal R$
coincides with the usual equivalence groups of its subclasses~$\mathcal H$, $\mathcal L$ and~$\mathcal C$ and
consists of the point transformations in the space with the coordinates $(t,x,u,u_x,f,g)$, whose components are of the form
\begin{gather}\label{eq:RDEsEquivTrans}
\begin{split}&
\tilde t=T_1 t + T_0,\quad
\tilde x=X_1 x + X_0,\quad
\tilde u=U_2 u + U_0,\quad
\tilde u_{\tilde x}=\frac{U_2}{X_1}u_x,\\&
\tilde f=\frac{X_1^{\,2}}{T_1}f,\quad
\tilde g=\frac{U_2}{T_1}g,
\end{split}
\end{gather}
where $T$'s, $X$'s and~$U$'s are arbitrary constants with $T_1X_1U_2\not=0$.
\end{proposition}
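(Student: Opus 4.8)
The plan is to start from the classifying equations~\eqref{eq:RDEsDetEqForAdmTrans1}--\eqref{eq:RDEsDetEqForAdmTrans3} specialized to the situation of \emph{usual} equivalence transformations, where the components $T$, $X$, $U$ of~\eqref{eq:RDEsGenPointTrans} are independent of the arbitrary elements $f$ and~$g$. As noted in the excerpt, this independence allows us to split each of these equations with respect to $\tilde f$, $\tilde g$ and all their derivatives, treated as formally independent quantities. First I would exploit~\eqref{eq:RDEsDetEqForAdmTrans1}: splitting with respect to $\tilde f_{\tilde u_{\tilde x}}(V)$ and $\tilde f(V)$ separately yields, for each $z\in\{t,x,u\}$, both $T_tV_z=0$ and $(T_t/(\mathrm D_xX)^2)_z=0$. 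Since $T_t\ne0$, the first family gives $V_t=V_x=V_u=0$, i.e.\ $V=\tilde u_{\tilde x}$ depends only on~$u_x$. Recalling $V=\mathrm D_xU/\mathrm D_xX=(U_x+u_xU_u)/(X_x+u_xX_u)$, the condition that this be a function of $u_x$ alone, combined with the already-known fact (cited from~\cite{Kingston1991}) that $T=T(t)$ and with $\Delta\ne0$, forces $X$ and~$U$ to be affine in $(x,u)$ with coefficients depending only on~$t$; a short computation then shows $X_u=U_x=0$, so $X=X_1(t)x+X_0(t)$ and $U=U_2(t)u+U_0(t)$, whence $V=(U_2/X_1)u_x$.

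Next I would feed this structure back in. From $(T_t/(\mathrm D_xX)^2)_z=0$ with $\mathrm D_xX=X_1(t)$ we get $T_t/X_1^2$ independent of $x$ and $u$ (automatic now) but, more importantly, its $z=t$ instance is vacuous here; the substantive constancy will come from~\eqref{eq:RDEsDetEqForAdmTrans2}. Splitting~\eqref{eq:RDEsDetEqForAdmTrans2} (for $y\in\{t,x\}$) with respect to $\tilde g_{\tilde u}(U)$, $\tilde g(U)$, $\tilde f(V)$ and the $\tilde f$-free remainder: the $\tilde g(U)$-coefficient gives $(T_tX_1/\Delta)_y=0$ with $\Delta=X_1U_2$, i.e.\ $(T_t/U_2)_y=0$; the $\tilde g_{\tilde u}$-coefficient gives $(T_tX_1/\Delta)U_y=0$, hence $U_t=0$ once we use the previous relation, so $U_2$ and $U_0$ are constants and then $T_t$ is constant, giving $T=T_1t+T_0$. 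The $\tilde f(V)$-coefficient forces $\big((\mathrm D_xX)^2(V_x+u_xV_u)/\Delta\big)_y=0$, which is automatic since $V_x=V_u=0$. The remaining $\tilde f$- and $\tilde g$-free term $\big((U_t\mathrm D_xX-X_t\mathrm D_xU)/\Delta\big)_y=0$ together with $U_t=0$ reduces to $(X_tU_2u/(X_1U_2))_y=0$, i.e.\ $(X_t x/X_1)$ and $(X_t/X_1)$ must be $t$- and $x$-independent after differentiation; this yields $X_{1t}=0$ and $X_{0t}=0$, so $X=X_1x+X_0$ with $X_1$ constant. Equation~\eqref{eq:RDEsDetEqForAdmTrans3} is then checked to be identically satisfied (its $X_u$-, $V_x$-, $V_u$-, $V_{xu_x}$- and $V_{uu_x}$-terms all vanish, and $U_tX_u-X_tU_u=0$), so it imposes nothing new. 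Collecting, the $t$-, $x$- and $u$-components have the claimed affine form with constant coefficients and $T_1X_1U_2\ne0$ from the nondegeneracy $T_t\Delta\ne0$; the $u_x$-, $f$- and $g$-components then follow from~\eqref{eq:RDEsExprForTargetArbitraryElementF}, \eqref{eq:RDEsExprForTargetArbitraryElementG} and Remark~\ref{rem:RDEsEquivTransComponents}, giving exactly~\eqref{eq:RDEsEquivTrans}.

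Finally, for the coincidence with the equivalence groups of the subclasses~$\mathcal H$, $\mathcal L$ and~$\mathcal C$: one inclusion is trivial, since every transformation of the form~\eqref{eq:RDEsEquivTrans} visibly preserves each of the defining constraints $f_{u_x}=0$ (for~$\mathcal H$), $(u_x^{\,2}f)_{u_x}=0$ (for~$\mathcal L$) and the system of inequalities defining~$\mathcal C$ — indeed the scaling $\tilde u_{\tilde x}=(U_2/X_1)u_x$ together with $\tilde f=(X_1^2/T_1)f$ shows that $\tilde u_{\tilde x}^{\,2}\tilde f$ is proportional to $u_x^{\,2}f$ and that $\tilde g_{\tilde u}$ is proportional to $g_u$, so all three constraints/inequalities are invariant. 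For the reverse inclusion one observes that the classifying equations~\eqref{eq:RDEsDetEqForAdmTrans1}--\eqref{eq:RDEsDetEqForAdmTrans3} are necessary conditions for \emph{any} admissible transformation within any subclass of~$\mathcal R$, so in particular the usual equivalence group of each of~$\mathcal H$, $\mathcal L$, $\mathcal C$ is contained in what these equations allow after splitting off the arbitrary elements — which is precisely~$G^\sim_{\mathcal R}$ as just derived. (Here one must be slightly careful: for a subclass the complete classifying system may have \emph{extra} equations, as the excerpt warns, but extra equations only shrink the group, so the subclass equivalence group is still contained in~$G^\sim_{\mathcal R}$; combined with the trivial inclusion it equals~$G^\sim_{\mathcal R}$.) The main obstacle I anticipate is the bookkeeping in the first step — extracting the affine form of $X$ and $U$ from the single scalar condition ``$V$ depends only on~$u_x$'' without prematurely assuming more regularity than the chain-rule relations supply — since one has to argue that $(U_x+u_xU_u)/(X_x+u_xX_u)$ being $u_x$-only, for all admissible $(x,u)$, genuinely forces the numerator and denominator to be simultaneously $(x,u)$-affine with proportional $(x,u)$-gradients; the cleanest route is to differentiate this ratio with respect to $x$ and $u$ and use $\Delta\ne0$ to clear denominators, reducing to a polynomial identity in $u_x$.
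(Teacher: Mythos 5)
Your derivation breaks down at the very first step, where you claim that $V_t=V_x=V_u=0$ (obtained by splitting~\eqref{eq:RDEsDetEqForAdmTrans1} with respect to $\tilde f_{\tilde u_{\tilde x}}$ and~$\tilde f$) together with the affineness of $X$ and~$U$ in $(x,u)$ forces $X_u=U_x=0$ via ``a short computation''. No such computation exists: the hodograph-type transformation $\tilde t=t$, $\tilde x=u$, $\tilde u=x$ satisfies $V=1/u_x$ and $T_t/(\mathrm D_xX)^2=u_x^{-2}$, so all the constraints you extract from~\eqref{eq:RDEsDetEqForAdmTrans1} hold while $X_u\ne0$ and $U_x\ne0$ (compare Proposition~\ref{pro:RDEsUsualEquivGroupOfF'}, where exactly such transformations survive). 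The equation $X_u=0$ can only come from the coefficient of the unconstrained value~$\tilde g$ in~\eqref{eq:RDEsDetEqForAdmTrans3} --- the equation you declare ``identically satisfied''. That declaration hides a second error: the arbitrary-element-free term of~\eqref{eq:RDEsDetEqForAdmTrans3} equals $X_tU_u/(T_t\Delta)$ once $U_t=X_u=0$, and it is what forces $X_t=0$. Your attempt to get $X_{0t}=0$ from the $y=t$ instance of~\eqref{eq:RDEsDetEqForAdmTrans2} only yields $X_{0tt}=0$, so the Galilean-type drift $\tilde x=x+ct$ (which maps $u_t=fu_{xx}+g$ to an equation with an extra $-c\tilde u_{\tilde x}$ term and hence is \emph{not} an equivalence transformation of~$\mathcal R$) survives your argument. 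In short, \eqref{eq:RDEsDetEqForAdmTrans3} carries two indispensable constraints, and the paper's proof correspondingly starts from the $\tilde g_{\tilde u}$- and $\tilde g$-coefficients of~\eqref{eq:RDEsDetEqForAdmTrans2} and~\eqref{eq:RDEsDetEqForAdmTrans3} to get $U_t=U_x=0$ and $X_u=0$ before touching the $f$-equation at all.

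The final paragraph on the subclasses also contains a directional error. For~$\mathcal H$ and~$\mathcal L$ the values $\tilde f$ and $\tilde f_{\tilde u_{\tilde x}}$ are \emph{not} independent splitting variables ($\tilde f_{\tilde u_{\tilde x}}=0$, respectively $\tilde f_{\tilde u_{\tilde x}}=-2\tilde f/\tilde u_{\tilde x}$), so \eqref{eq:RDEsDetEqForAdmTrans1} yields a single combined condition rather than the pair $V_z=0$ and $(T_t/(\mathrm D_xX)^2)_z=0$. You correctly note that extra classifying equations can only shrink a subclass's group, but the relevant effect here is the opposite one: a constrained arbitrary element permits \emph{less} splitting and therefore a priori \emph{enlarges} the solution set of the determining equations. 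Establishing that $G^\sim_{\mathcal H}$ and $G^\sim_{\mathcal L}$ are nevertheless no larger than $G^\sim_{\mathcal R}$ requires the case-by-case analysis the paper performs (deriving $X_{xx}=0$ and $U_{uu}=0$ separately under each constraint on~$\tilde f$, using \eqref{eq:RDEsDetEqForAdmTrans3} again for~$\mathcal H$ and~$\mathcal L$); it does not follow from the general inclusion you invoke.
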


\begin{proof}
For each of the classes~$\mathcal R$, $\mathcal H$, $\mathcal L$ and~$\mathcal C$,
the arbitrary element~$\tilde g$ and its derivative~$\tilde g_{\tilde u}$ are not constrained.
Collecting the coefficients of~$\tilde g_{\tilde u}$ and~$\tilde g$
in the equations~\eqref{eq:RDEsDetEqForAdmTrans2} and~\eqref{eq:RDEsDetEqForAdmTrans3}, respectively,
gives the determining equations $U_t=U_x=0$ and $X_u=0$.
Therefore, $X_xU_u\ne0$ and $V=u_xU_u/X_x$.
Then we split the equation~\eqref{eq:RDEsDetEqForAdmTrans2} with $y=t$ with respect to~$\tilde g$ and derive $T_{tt}=0$.
The arbitrary element~$\tilde f$ can also be assumed as a parametric value for splitting,
and its derivative~$\tilde f_{\tilde u_{\tilde x}}$ is either unconstrained or equal to zero or $-2\tilde f/\tilde u_{\tilde x}$.
This is why we can select terms without arbitrary elements and their derivatives in the equation~\eqref{eq:RDEsDetEqForAdmTrans3},
which leads to $X_t=0$.

Now we only need to obtain the equations $X_{xx}=0$ and $U_{uu}=0$,
which is done by considering separately cases with different constraints for~$\tilde f$.
Then
the $t$-, $x$- and $u$-components of usual equivalence transformations have precisely the form~\eqref{eq:RDEsEquivTrans},
and further we follow Remark~\ref{rem:RDEsEquivTransComponents}.
All the constructed transformations preserve
each of the systems of auxiliary constraints for the arbitrary elements
that are associated with the classes~$\mathcal R$, $\mathcal H$, $\mathcal L$ and~$\mathcal C$.

In particular, the values of~$\tilde f$ and~$\tilde f_{\tilde u_{\tilde x}}$ are not constrained by equations
in the classes~$\mathcal R$ and~$\mathcal C$.
This allows us to split with respect to these values and get the equations $V_z=0$,
which are expanded, for $z=x$ and~$z=u$ to the requested equations $X_{xx}=0$ and $U_{uu}=0$, respectively.

In view of the constraint $\tilde f_{\tilde u_{\tilde x}}=0$ for the class~$\mathcal H$,
the equation~\eqref{eq:RDEsDetEqForAdmTrans1} with $z=x$ and the equation~\eqref{eq:RDEsDetEqForAdmTrans3}
respectively reduce to $X_{xx}=0$ and $V_u+V_{xu_x}+u_xV_{uu_x}=0$.
The expansion of the last equation implies $U_{uu}=0$.

Since $\tilde f_{\tilde u_{\tilde x}}=-2\tilde f/\tilde u_{\tilde x}\ne0$ within the class~$\mathcal L$,
the equation~\eqref{eq:RDEsDetEqForAdmTrans1} with $z=u$ is then equivalent to~$V_u=0$, i.e., $U_{uu}=0$,
and the equation~\eqref{eq:RDEsDetEqForAdmTrans3} reduces to $X_{xx}=0$.
\end{proof}

Solving the additional auxiliary equation~$(u_x{}^{\!2}f)_{u_x}=0$ for the arbitrary element~$f$ in the subclass~$\mathcal L$,
we obtain the representation $f=cu_x^{-2}$ with an arbitrary nonzero constant~$c$.
If we reparameterize the subclass~$\mathcal L$ by taking this constant as a new arbitrary element instead of~$f$,
then the corresponding transformation component is $\tilde c=U_2^{\,2}T_1^{-1}c$.

\begin{proposition}\label{pro:RDEsUsualEquivGroupOfF}
The usual equivalence group~$G^\sim_{\mathcal F}$ of the class~$\mathcal F$
is constituted by the point transformations in the space with the coordinates $(t,x,u,u_x,f,g)$, whose components are of the~form
\begin{gather}\label{eq:RDEsEquivTransOfF}
\begin{split}
&\tilde t=T_1t+T_0,\quad
 \tilde x=X_1x+X_0,\quad
 \tilde u=U_1x+U_2u+U_3t+U_0,\quad
 \tilde u_{\tilde x}=\frac{U_1+U_2u_x}{X_1},\\
&\tilde f=\frac{(X_1)^2}{T_1}f,\quad
 \tilde g=\frac{U_2}{T_1}g+\frac{U_3}{T_1},
\end{split}
\end{gather}
where $T$'s, $X$'s and~$U$'s are arbitrary constants with $T_1X_1U_2\ne0$.
\end{proposition}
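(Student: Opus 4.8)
The plan is to rerun the argument of the proof of Proposition~\ref{pro:RDEsUsualEquivGroupOfR}, the essential new feature being that in the class~$\mathcal F$ the arbitrary element~$g$ is a constant, so $\tilde g_{\tilde u}(U)=0$ identically; consequently the coefficient of~$\tilde g_{\tilde u}$ in~\eqref{eq:RDEsDetEqForAdmTrans2} no longer supplies the constraints $U_t=U_x=0$, and the $t$-, $x$- and $u$-dependence of~$U$ must be controlled by other means. Since in~$\mathcal F$ the quantities $\tilde f$, $\tilde f_{\tilde u_{\tilde x}}$ and the constant~$\tilde g$ carry no further restrictions beyond $\tilde f\ne0$, throughout we may split the classifying equations~\eqref{eq:RDEsDetEqForAdmTrans1}--\eqref{eq:RDEsDetEqForAdmTrans3} with respect to these three quantities treated as independent parameters.

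First I would split~\eqref{eq:RDEsDetEqForAdmTrans1} with respect to~$\tilde f$ and~$\tilde f_{\tilde u_{\tilde x}}$: the vanishing of the two coefficients gives $V_z=0$ for $z\in\{t,x,u\}$, i.e.\ $V=V(u_x)$, together with $\bigl(T_t/(\mathrm D_xX)^2\bigr)_z=0$. With $V$ a function of~$u_x$ alone, all terms of~\eqref{eq:RDEsDetEqForAdmTrans3} containing~$\tilde f$ and~$\tilde f_{\tilde u_{\tilde x}}$ drop out and that equation collapses to $X_u\tilde g/\Delta-(U_tX_u-X_tU_u)/(T_t\Delta)=0$; splitting with respect to the free constant~$\tilde g$ forces $X_u=0$ and then $X_tU_u=0$, and since $\Delta=X_xU_u\ne0$ implies $U_u\ne0$, we obtain $X_t=0$, so $X=X(x)$. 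Expanding next the condition that $V=(U_x+u_xU_u)/X_x$ be a function of~$u_x$ only, and splitting $V_t=V_u=V_x=0$ with respect to~$u_x$, yields $U_{ut}=U_{xt}=U_{uu}=U_{ux}=0$ and $X_{xx}U_u=0$, hence $X_{xx}=0$ and then $U_{xx}=0$. Therefore $X=X_1x+X_0$ with $\mathrm D_xX=X_1$, and $U=U_1x+U_2u+h(t)$ with $X_1U_2\ne0$; moreover the case $z=t$ of the already-derived relation $\bigl(T_t/(\mathrm D_xX)^2\bigr)_z=0$ becomes $T_{tt}=0$, so $T=T_1t+T_0$ with $T_1\ne0$.

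It remains to determine~$h$, for which I would use~\eqref{eq:RDEsDetEqForAdmTrans2}. With the relations already obtained and $V_x=V_u=0$, its case $y=x$ holds identically, whereas its case $y=t$ reduces to $T_{tt}\tilde g=h''$; since $h$ is a function of~$t$ alone it cannot involve the parameter~$\tilde g$, whence $h''=0$ and $h(t)=U_3t+U_0$. Collecting all the components, $\tilde u_{\tilde x}=V=(U_1+U_2u_x)/X_1$, and by Remark~\ref{rem:RDEsEquivTransComponents} the $f$- and $g$-components follow from~\eqref{eq:RDEsExprForTargetArbitraryElementF} and~\eqref{eq:RDEsExprForTargetArbitraryElementG} as $\tilde f=(X_1)^2T_1^{-1}f$ and $\tilde g=U_2T_1^{-1}g+U_3T_1^{-1}$, which is precisely~\eqref{eq:RDEsEquivTransOfF}. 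One then checks that $T_1X_1U_2\ne0$ makes the transformation invertible and that the image lies in~$\mathcal F$: the new $\tilde g$ is constant, and, since $u_x=(X_1\tilde u_{\tilde x}-U_1)/U_2$, the new $\tilde f$ is indeed a function of~$\tilde u_{\tilde x}$ alone. Conversely, substituting a transformation of the form~\eqref{eq:RDEsEquivTransOfF} into~\eqref{eq:RDEsSource&TargetEqs} confirms directly that it maps~$\mathcal F$ onto itself, which finishes the proof.

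The main difficulty — minor, but genuinely the new point relative to Proposition~\ref{pro:RDEsUsualEquivGroupOfR} — is exactly that the identical vanishing of~$\tilde g_{\tilde u}$ removes the shortcut $U_t=U_x=0$, so the affine form $U=U_1x+U_2u+U_3t+U_0$ has to be squeezed out in two stages: from~\eqref{eq:RDEsDetEqForAdmTrans3} (to kill $X_u$ and~$X_t$, thereby reducing $X$ to an affine function of~$x$ and $U$ to a function affine in~$x$ and~$u$ with $t$-dependent coefficients) and then from the $y=t$ component of~\eqref{eq:RDEsDetEqForAdmTrans2} (to force $h''=0$). This is precisely the mechanism through which the extra parameters $U_1$ and~$U_3$ enter, i.e.\ through which $G^\sim_{\mathcal F}$ properly contains $G^\sim_{\mathcal R}$.
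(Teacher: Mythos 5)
Your proof is correct and follows essentially the same route as the paper: substitute $\tilde g_{\tilde u}=0$ into the classifying equations~\eqref{eq:RDEsDetEqForAdmTrans1}--\eqref{eq:RDEsDetEqForAdmTrans3}, split with respect to the unconstrained values $\tilde f$, $\tilde f_{\tilde u_{\tilde x}}$ and $\tilde g$, and expand $V_z=0$ to force the affine forms of $T$, $X$ and $U$. The only cosmetic difference is that the paper formally adjoins the $z=u$ analogue of~\eqref{eq:RDEsDetEqForAdmTrans2} (reflecting the auxiliary constraint $g_u=0$) and checks the extended system is identically satisfied, whereas you achieve the same sufficiency check by directly verifying that transformations of the form~\eqref{eq:RDEsEquivTransOfF} map $\mathcal F$ onto itself.
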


\begin{proof}
For the class~$\mathcal F$ we should extend the system
of the classifying equations~\eqref{eq:RDEsDetEqForAdmTrans1}--\eqref{eq:RDEsDetEqForAdmTrans3}
with one more equation by replacing the subscript~$y$ by~$z$ in the equation~\eqref{eq:RDEsDetEqForAdmTrans2},
which takes into account the additional auxiliary constraint $g_u=0$ of this class.
Then we substitute $\tilde g_{\tilde u}=0$ into the extended system
and split it with respect to the varying values~$\tilde g$, $\tilde f$ and~$\tilde f_{\tilde u_{\tilde x}}$.
Thus, vanishing the coefficient of~$\tilde g$
and the term without the varying values in the equation~\eqref{eq:RDEsDetEqForAdmTrans3}
results in the equations $X_u=0$ (and hence $X_xU_u\ne0$) and $X_t=0$.
From the equation~\eqref{eq:RDEsDetEqForAdmTrans1} we derive $V_z=0$ and $T_{tt}=0$.
We successively expand the equations $V_z=0$ for $z=t$, $z=u$ and~$z=x$ and split them with respect to~$u_x$,
obtaining 
\[
U_{tx}=U_{tu}=0,\quad U_{xu}=U_{uu}=0\quad\mbox{and}\quad X_{xx}=U_{xx}=0,
\]
respectively.
Then terms in the equation~\eqref{eq:RDEsDetEqForAdmTrans2} without the varying values merely give $U_{tt}=0$.
The obtained equations for the transformations components constitute
the complete system of determining equations for usual equivalence transformations of the class~$\mathcal F$
since the extended version of the system~\eqref{eq:RDEsDetEqForAdmTrans1}--\eqref{eq:RDEsDetEqForAdmTrans3}
is identically satisfied in view of the collection of these equations.
Therefore, the components of all transformations from the group~$G^\sim_{\mathcal F}$ are of the form~\eqref{eq:RDEsEquivTransOfF}
and each transformation whose components are of the form~\eqref{eq:RDEsEquivTransOfF} belongs to this group.
\end{proof}

The group~$G^\sim_{\mathcal F}$ is a nontrivial conditional usual equivalence group of the class~$\mathcal R$
under the condition $g_u=0$
since the usual equivalence group~$G^\sim_{\mathcal R}$ of the class~$\mathcal R$
is a proper subgroup of~$G^\sim_{\mathcal F}$,
which is singled out by the constraints $U_1=U_3=0$ for group parameters.
Elements of~$G^\sim_{\mathcal F}$ with $(U_1,U_3)\ne(0,0)$ are purely conditional equivalence transformations
for the class~$\mathcal R$ under the condition $g_u=0$.

The family of transformations from~$G^\sim_{\mathcal F}$
with $(t,x,u)$-components $\tilde t=t$, $\tilde x=x$, $\tilde u=u-gt$,
which is parameterized by the arbitrary element~$g$,
maps the class~$\mathcal F$ onto its subclass~$\mathcal F'$ singled out by the constraint~$g=0$.
The usual equivalence group of the class~$\mathcal F'$
was found in~\cite{AkhatovGazizovIbragimov1987,AkhatovGazizovIbragimov1991}.
We can easily prove this result using the classifying equations~\eqref{eq:RDEsDetEqForAdmTrans1}--\eqref{eq:RDEsDetEqForAdmTrans3}.

\begin{proposition}\label{pro:RDEsUsualEquivGroupOfF'}
The usual equivalence group~$G^\sim_{\mathcal F'}$ of the class~$\mathcal F'$
consists of the point transformations in the space with the coordinates $(t,x,u,u_x,f)$, whose components are of the form
\begin{gather*}
\tilde t=T_1t+T_0,\quad
\tilde x=X_1x+X_2u+X_0,\quad
\tilde u=U_1x+U_2u+U_0,\quad
\tilde u_{\tilde x}=\frac{U_1+U_2u_x}{X_1+X_2u_x},\\
\tilde f=\frac{(X_1+X_2u_x)^2}{T_1}f,
\end{gather*}
where $T$'s, $X$'s and~$U$'s are arbitrary constants with $T_1(X_1U_2-X_2U_1)\ne0$.
\end{proposition}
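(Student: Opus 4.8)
The plan is to follow the scheme of the proof of Proposition~\ref{pro:RDEsUsualEquivGroupOfF}, now with the substitution $g=\tilde g=0$ (hence also $\tilde g_{\tilde u}=0$) and with the $g$-component discarded, keeping track of where the enlargement of the equivalence group comes from. The decisive observation is that, since $\tilde g\equiv0$, the summand $X_u\tilde g(U)/\Delta$ in~\eqref{eq:RDEsDetEqForAdmTrans3} disappears completely, so this equation no longer forces $X_u=0$; this is precisely what admits the new parameter~$X_2$.

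First I would exploit~\eqref{eq:RDEsDetEqForAdmTrans1}. As in the preceding propositions, the values $\tilde f$ and $\tilde f_{\tilde u_{\tilde x}}$ are unconstrained for the class~$\mathcal F'$, so splitting~\eqref{eq:RDEsDetEqForAdmTrans1} with respect to them gives $V_z=0$ and $(T_t/(\mathrm D_xX)^2)_z=0$ for $z\in\{t,x,u\}$. The equalities $V_t=V_x=V_u=0$ say that $V$ depends at most on~$u_x$, and $(T_t/(\mathrm D_xX)^2)_x=(T_t/(\mathrm D_xX)^2)_u=0$ give $(\mathrm D_xX)_x=(\mathrm D_xX)_u=0$, whence, collecting coefficients of powers of~$u_x$, $X_{xx}=X_{xu}=X_{uu}=0$, i.e.\ $X=X_1(t)x+X_2(t)u+X_0(t)$. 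Feeding this into $V_x=V_u=0$ and again splitting with respect to~$u_x$ yields $U_{xx}=U_{xu}=U_{uu}=0$, i.e.\ $U=U_1(t)x+U_2(t)u+U_0(t)$, so that $\mathrm D_xX=X_1+u_xX_2$ and $\mathrm D_xU=U_1+u_xU_2$.

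Next I would use the constraint $g=\tilde g=0$ directly: under it, equation~\eqref{eq:RDEsExprForTargetArbitraryElementG} is no longer an expression for a target arbitrary element but a genuine determining equation, namely $U_t\mathrm D_xX-X_t\mathrm D_xU=(\mathrm D_xX)^2(V_x+u_xV_u)f$, which with the already established $V_x=V_u=0$ collapses to the polynomial identity $U_t\mathrm D_xX-X_t\mathrm D_xU=0$ in~$(x,u,u_x)$. Its coefficients at the six monomials $x$, $u$, $1$, $xu_x$, $uu_x$, $u_x$ say that $U_{i,t}(X_1,X_2)=X_{i,t}(U_1,U_2)$ for $i\in\{0,1,2\}$, i.e.\ each vector $(U_{i,t},-X_{i,t})$ is orthogonal to both $(X_1,U_1)$ and $(X_2,U_2)$; since $X_1U_2-X_2U_1=\Delta\ne0$ these two vectors are linearly independent, hence $U_{i,t}=X_{i,t}=0$ and all of $X_1,X_2,X_0,U_1,U_2,U_0$ are constants. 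Then $\mathrm D_xX$ is $t$-independent, so~\eqref{eq:RDEsDetEqForAdmTrans1} with $z=t$ reduces to $T_{tt}\tilde f/(\mathrm D_xX)^2=0$, whence $T_{tt}=0$ and $T=T_1t+T_0$. The nondegeneracy $T_t\Delta\ne0$ becomes $T_1(X_1U_2-X_2U_1)\ne0$, and the $u_x$- and $f$-components follow from the chain rule and~\eqref{eq:RDEsExprForTargetArbitraryElementF} as in Remark~\ref{rem:RDEsEquivTransComponents}, yielding exactly the asserted form. Conversely, any transformation of that form satisfies~\eqref{eq:RDEsDetEqForAdmTrans1}--\eqref{eq:RDEsDetEqForAdmTrans3} and the above constraint identically, and $\tilde f$ is a well-defined nonvanishing function of~$V$ because $V$ is a M\"obius function of~$u_x$ which is invertible thanks to $\Delta\ne0$.

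The step deserving the most care—more a subtlety than a real obstacle—is the justification, exactly as in the two preceding propositions, that $\tilde f$ and $\tilde f_{\tilde u_{\tilde x}}$ may be treated as independent parametric values when splitting; this rests on the openness of the defining inequality $f\ne0$ of~$\mathcal F'$. I note that no case distinction on whether $X_1$ or $X_2$ vanishes is needed, since the linear independence of the vectors $(X_1,U_1)$ and $(X_2,U_2)$ used above is exactly the nondegeneracy condition $\Delta\ne0$.
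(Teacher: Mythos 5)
Your proposal is correct and follows essentially the same route as the paper's proof: split~\eqref{eq:RDEsDetEqForAdmTrans1} (and the $g=\tilde g=0$ reduction of the $\tilde g$-expression, which the paper takes in the equivalent form~\eqref{eq:RDEsExprForSourceArbitraryElementG}) with respect to $\tilde f$ and $\tilde f_{\tilde u_{\tilde x}}$, deduce affineness of $X$ and $U$ in $(x,u)$ and then $X_t=U_t=0$ from the nondegenerate homogeneous linear system, and finally $T_{tt}=0$. Your packaging of the $X_t=U_t=0$ step via orthogonality to the two linearly independent vectors $(X_1,U_1)$ and $(X_2,U_2)$ is just a post-affine restatement of the paper's $2\times2$ system argument, so there is no substantive difference.
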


\begin{proof}
Restricted to the subclass~$\mathcal F'$ by the substitution $g=0$ and $\tilde g=0$,
the determining equation~\eqref{eq:RDEsExprForSourceArbitraryElementG} itself becomes
classifying for admissible transformations in this subclass.
Therefore, it replaces its differential consequences,
including the equations~\eqref{eq:RDEsDetEqForAdmTrans2} and~\eqref{eq:RDEsDetEqForAdmTrans3}.
The splitting of the equations~\eqref{eq:RDEsExprForSourceArbitraryElementG} and~\eqref{eq:RDEsDetEqForAdmTrans1}
with respect~$\tilde f$ and~$\tilde f_{\tilde u_{\tilde x}}$ merely implies the equations
\begin{gather}\label{eq:RDEsEquivTransOfF'}
V_z=0,\quad
\left(\frac{T_t}{(\mathrm D_xX)^2}\right)_z=0,\quad z\in\{t,x,u\},\quad
U_t\mathrm D_xX-X_t\mathrm D_xU=0,
\end{gather}
which can be further split with respect to~$u_x$.
The second and the first equations of~\eqref{eq:RDEsEquivTransOfF'}
with $z\in\{x,u\}$ successively imply the equations
$X_{xx}=X_{ux}=X_{uu}=0$ and $U_{xx}=U_{ux}=U_{uu}=0$.
The last equation of~\eqref{eq:RDEsEquivTransOfF'} splits into the system $X_zU_t-U_zX_t=0$ with $z\in\{x,u\}$,
which has, as a homogeneous nondegenerate linear system of algebraic equations
with respect to~$(X_t,U_t)$, the zero solution only, i.e., $X_t=U_t=0$.
Then the second equation with $z=t$ is equivalent to $T_{tt}=0$.
The derived equations exhaustively define the $(t,x,u)$-components of equivalence transformations
of the class~$\mathcal F'$.
In view of Remark~\ref{rem:RDEsEquivTransComponents}, this completes the~proof.
\end{proof}

The group~$G^\sim_{\mathcal F'}$ is a nontrivial conditional usual equivalence group
of both the classes~$\mathcal R$ and~$\mathcal F$ under the condition $g=0$.
Elements of~$G^\sim_{\mathcal F'}$ with $X_2\ne0$ have no counterparts
in~$G^\sim_{\mathcal R}$ and~$G^\sim_{\mathcal F}$
and hence they are purely conditional equivalence transformations
for the classes~$\mathcal R$ and~$\mathcal F$ under the condition $g=0$.

The subgroup of~$G^\sim_{\mathcal F}$ preserving the subclass~$\mathcal F'$ of~$\mathcal F$
is associated with constraint $U_3=0$,
and the projection to the space with the coordinates $(t,x,u,u_x,f)$
maps this subgroup to a proper subgroup of~$G^\sim_{\mathcal F'}$.
This is why the group classification of the class~$\mathcal F$ up to $G^\sim_{\mathcal F}$-equivalence
does not reduce to the group classification of the class~$\mathcal F'$ up to $G^\sim_{\mathcal F'}$-equivalence
under the above map of the class~$\mathcal F$ onto its subclass~$\mathcal F'$.
To make the group classifications of the classes~$\mathcal F$ and~$\mathcal F'$ consistent,
we should consider the stronger equivalence in the class~$\mathcal F$
that is associated with generalized equivalence group of this class.

\begin{proposition}\label{pro:RDEsGenEquivGroupOfF}
The generalized equivalence group~$\bar G^\sim_{\mathcal F}$ of the class~$\mathcal F$ is constituted
by the point transformations in the space with the coordinates $(t,x,u,u_x,f,g)$, whose components are of the~form
\begin{gather*}
\tilde t=\bar T^1t+\bar T^0,\quad
\tilde x=\bar X^1x+\bar X^2u-g\bar X^2t+\bar X^0,\quad
\tilde u=\bar U^1x+\bar U^2u+(\bar T^1\bar F-g\bar U^2)t+\bar U^0,\\
\tilde u_{\tilde x}=\frac{\bar U^1+\bar U^2u_x}{\bar X^1+\bar X^2u_x},\quad
\tilde f=\frac{(\bar X^1+\bar X^2u_x)^2}{\bar T^1}f,\quad
\tilde g=\bar F, 
\end{gather*}
where $\bar T$'s, $\bar X$'s, $\bar U$'s and~$\bar F$ are arbitrary smooth functions of~$g$ with $\bar T^1(\bar X^1\bar U^2-\bar X^2\bar U^1)\bar F_g\ne0$.
\end{proposition}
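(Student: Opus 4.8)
The plan is to reduce the problem for the class~$\mathcal F$ to the already-obtained description of the usual equivalence group~$G^\sim_{\mathcal F'}$ of its subclass~$\mathcal F'$ (Proposition~\ref{pro:RDEsUsualEquivGroupOfF'}), using the family~$\Phi_g$ of point transformations with $(t,x,u)$-components $\tilde t=t$, $\tilde x=x$, $\tilde u=u-gt$, which, as noted before the proposition, belongs to~$G^\sim_{\mathcal F}$ and maps, for each value of the constant arbitrary element~$g$, the equation $u_t=f(u_x)u_{xx}+g$ of~$\mathcal F$ onto the equation $\tilde u_{\tilde t}=f(\tilde u_{\tilde x})\tilde u_{\tilde x\tilde x}$ of~$\mathcal F'$ with the unchanged~$f$. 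An element $\mathcal T$ of~$\bar G^\sim_{\mathcal F}$ maps a general equation $\mathcal E_{f,g}\in\mathcal F$ to some equation $\mathcal E_{\hat f,\hat g}\in\mathcal F$ with $(\hat f,\hat g)$ depending on $(f,g)$; I would show that for each~$g$ the conjugate $\Psi_g:=\Phi_{\hat g}\circ\mathcal T\circ\Phi_g^{-1}$ maps the whole class~$\mathcal F'$ into itself, i.e.\ $\Psi_g\in G^\sim_{\mathcal F'}$, and that the identity $\mathcal T=\Phi_{\hat g}^{-1}\circ\Psi_g\circ\Phi_g$, after composing the affine maps involved, recovers the components in the statement.

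First I would establish the structural prerequisites that make this conjugation legitimate. Since every contact admissible transformation in~$\mathcal R$ (hence in~$\mathcal F$) is a prolongation of a point one, and since the $t$-, $x$- and $u$-components of a generalized equivalence transformation may depend only on $(t,x,u)$ and on the jet of the arbitrary elements, one checks that these components carry no dependence on~$f$ or its $u_x$-derivatives: evaluated on an equation with non-constant~$f$, such a dependence would introduce a $u_x$-dependence incompatible with the point character of the transformation of that equation. Thus the $t$-, $x$- and $u$-components of~$\mathcal T$ depend only on $(t,x,u)$ and on the constant~$g$, with the $t$-component depending only on $(t,g)$ by the evolution-type restriction on admissible transformations. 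Splitting the classifying equation~\eqref{eq:RDEsDetEqForAdmTrans1} with respect to~$\tilde f$ and~$\tilde f_{\tilde u_{\tilde x}}$, which is now justified, gives $V_z=0$ for $z\in\{t,x,u\}$, so the term proportional to~$(V_x+u_xV_u)f$ disappears from~\eqref{eq:RDEsExprForTargetArbitraryElementG} and the remaining expression for~$\tilde g$ no longer involves~$f$; being a constant, it is then a function of~$g$ alone, i.e.\ $\hat g=\hat g(g)$. Consequently $\Psi_g$ is a well-defined $f$-independent point transformation for each~$g$, it sends every equation of~$\mathcal F'$ to an equation of~$\mathcal F'$, and it depends smoothly on~$g$; hence $\Psi_g\in G^\sim_{\mathcal F'}$, and Proposition~\ref{pro:RDEsUsualEquivGroupOfF'} gives it the affine form stated there with ``constants'' that are now smooth functions of~$g$. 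Unwinding $\mathcal T=\Phi_{\hat g}^{-1}\circ\Psi_g\circ\Phi_g$ and collecting terms reproduces exactly the components in the proposition, with the $\bar T$'s, $\bar X$'s, $\bar U$'s being the $g$-dependent parameters of~$\Psi_g$ and $\bar F=\hat g$.

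Conversely, every map of the stated form factors as $\Phi_{\bar F}^{-1}\circ\Psi_g\circ\Phi_g$ with $\Psi_g$ the corresponding $g$-family in~$G^\sim_{\mathcal F'}$; since~$\Phi_g$ carries~$\mathcal F$ onto~$\mathcal F'$ equation by equation, $\Psi_g$ preserves~$\mathcal F'$, and $\Phi_{\bar F}^{-1}$ carries~$\mathcal F'$ back onto~$\mathcal F$, so the composition preserves~$\mathcal F$ together with its system of auxiliary constraints. It remains to verify that such a map is a genuine point transformation of the space with coordinates $(t,x,u,u_x,f,g)$ respecting its fibration: the $\tilde t$-component must be invertible in~$t$ (hence $\bar T^1\ne0$), the $(\tilde x,\tilde u)$-part must be invertible in $(x,u)$ for fixed~$t$ (hence $\bar X^1\bar U^2-\bar X^2\bar U^1\ne0$, which also makes the induced M\"obius map $u_x\mapsto\tilde u_{\tilde x}$ invertible), and the $\tilde g$-component must be invertible in~$g$ (hence $\bar F_g\ne0$); together these give the nondegeneracy condition $\bar T^1(\bar X^1\bar U^2-\bar X^2\bar U^1)\bar F_g\ne0$. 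The $u_x$-, $f$- and $g$-components are the ones forced by Remark~\ref{rem:RDEsEquivTransComponents}, equivalently read off directly from the composition, which completes the proof.

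The main obstacle is precisely the bookkeeping in the second paragraph: proving rigorously that the $t$-, $x$- and $u$-components of an element of~$\bar G^\sim_{\mathcal F}$ carry no dependence on the functional arbitrary element~$f$, and that the target value~$\hat g$ depends on the source~$g$ only; once these are in hand, the rest is routine composition of affine maps. An alternative that avoids the conjugation is the direct computation in the style of the proof of Proposition~\ref{pro:RDEsUsualEquivGroupOfF}: augment the classifying equations~\eqref{eq:RDEsDetEqForAdmTrans1}--\eqref{eq:RDEsDetEqForAdmTrans3} by the copy of~\eqref{eq:RDEsDetEqForAdmTrans2} with the subscript~$u$, substitute $\tilde g_{\tilde u}=0$, and split with respect to~$\tilde f$ and~$\tilde f_{\tilde u_{\tilde x}}$ but \emph{not} with respect to~$\tilde g=\bar F(g)$; this gives $V_z=0$ and $(T_t/(\mathrm D_xX)^2)_z=0$ for $z\in\{t,x,u\}$, whence $X$ and~$U$ are affine in $(x,u)$ with $g$-dependent coefficients and $T_{tt}=0$, while the unsplit remainders of~\eqref{eq:RDEsDetEqForAdmTrans2} and~\eqref{eq:RDEsDetEqForAdmTrans3} fix the $t$-coefficients as $X_t=-gX_u$ and $U_t=T_t\bar F-gU_u$, reproducing the same family at the cost of a longer but conceptually straightforward calculation.
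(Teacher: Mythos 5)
Your argument is correct, but it takes a genuinely different route from the paper's. The paper proves the proposition by a direct computation entirely inside the class~$\mathcal F$: after the same preliminary observation that the $(t,x,u)$-components can depend on the arbitrary elements only through~$g$, it splits the classifying equation~\eqref{eq:RDEsDetEqForAdmTrans1} to get $V_z=0$ and $(T_t/(\mathrm D_xX)^2)_z=0$, then extracts from~\eqref{eq:RDEsExprForSourceArbitraryElementG} the full set of second-order determining equations ($X_{xt}=U_{xt}=X_{ut}=U_{ut}=X_{tt}=U_{tt}=T_{tt}=0$), arrives at the affine ansatz with $g$-dependent coefficients, and finally solves the residual linear system~\eqref{eq:RDEsDetEqsEquivTransOfF'Final} for the $t$-coefficients $\bar X^3,\bar U^3$ in terms of~$\bar F$ --- essentially the ``alternative'' you sketch in your last paragraph. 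Your main route instead stops the direct computation as soon as $V_z=0$ yields $\hat g=\hat g(g)$, and then conjugates by the gauge family~$\Phi_g$ to transplant the problem into~$\mathcal F'$, where Proposition~\ref{pro:RDEsUsualEquivGroupOfF'} already gives the affine form; unwinding $\mathcal T=\Phi_{\hat g}^{-1}\circ\Psi_g\circ\Phi_g$ does reproduce the stated components (the extra term $\hat g\bar T^0$ in the $u$-component being absorbed into~$\bar U^0$). The two structural points you flag as the main obstacles are handled the same way as in the paper (no $f$-dependence of the components because $\mathcal F$ has no nontrivial contact admissible transformations, and $\tilde g$ independent of~$f$ once the $(V_x+u_xV_u)f$ term in~\eqref{eq:RDEsExprForTargetArbitraryElementG} is killed), and the fiberwise argument that $\Psi_g$ preserves all of~$\mathcal F'$ and qualifies as a \emph{usual} equivalence transformation of~$\mathcal F'$ is sound. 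What your approach buys is the elimination of the second half of the paper's computation by reusing Proposition~\ref{pro:RDEsUsualEquivGroupOfF'}, at the price of the conjugation bookkeeping; it also makes transparent \emph{why} the only essential $g$-dependence is the one sitting in the $t$-coefficients, namely that it is entirely generated by the gauge~$\Phi_g$. The paper's direct computation is self-contained and does not rely on the prior classification of~$G^\sim_{\mathcal F'}$.
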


\begin{proof}
By the definition of generalized equivalence transformations
\cite{Meleshko1994,Meleshko1996,OpanasenkoBihloPopovych2017,PopovychKunzingerEshraghi2010},
their $t$-, $x$- and $u$-components may in general depend on arbitrary elements.
At the same time, these components of transformations from~$\bar G^\sim_{\mathcal F}$
can depend only on~$g$ by the following reasons.
The arbitrary element~$f$ depends on~$u_x$.
Each generalized equivalence transformation generates a family of admissible transformations
parameterized by the arbitrary elements~\cite{OpanasenkoBihloPopovych2017}.
Each contact admissible transformation in the class~$\mathcal F$
is the prolongation of a point admissible transformation to the first-order derivatives of~$u$,
i.e., the class~$\mathcal F$ possesses no nontrivial contact admissible transformations.

We follow the proof of Proposition~\ref{pro:RDEsUsualEquivGroupOfF} but,
due to potential dependence of the $t$-, $x$- and $u$-components of generalized equivalence transformations on~$g$,
here we cannot split with respect to this arbitrary element.
This is the only essential difference with the proof of Proposition~\ref{pro:RDEsUsualEquivGroupOfF}.
Similarly to the proof of Proposition~\ref{pro:RDEsUsualEquivGroupOfF'},
the splitting of the equation~\eqref{eq:RDEsDetEqForAdmTrans1}
with respect~$\tilde f$ and~$\tilde f_{\tilde u_{\tilde x}}$ merely implies the equations
\begin{gather}\label{eq:RDEsDetEqsEquivTransOfF'1}
V_z=0,\quad
\left(\frac{T_t}{(\mathrm D_xX)^2}\right)_z=0,\quad z\in\{t,x,u\}.
\end{gather}
The further splitting of the second and the first equations of~\eqref{eq:RDEsDetEqsEquivTransOfF'1} with $z\in\{x,u\}$
with respect to~$u_x$ successively yields the equations
$X_{xx}=X_{ux}=X_{uu}=0$ and $U_{xx}=U_{ux}=U_{uu}=0$.
Thus, $\Delta_x=\Delta_u=0$.
Then the equation~\eqref{eq:RDEsExprForSourceArbitraryElementG} reduces to
$g\Delta+U_t\mathrm D_xX-X_t\mathrm D_xU=\tilde gT_t\mathrm D_xX$.
Differentiating it with respect to~$x$ and~$u$ and splitting with respect to~$u_x$,
we obtain two systems, \mbox{$X_zU_{xt}-U_zX_{xt}=0$} and $X_zU_{ut}-U_zX_{ut}=0$, where $z\in\{x,u\}$,
which are equivalent, in view of the condition $\Delta\ne0$, to
the equations $X_{xt}=U_{xt}=0$ and $X_{ut}=U_{ut}=0$, respectively.
Hence $\Delta_t=0$, and differentiating the equation~\eqref{eq:RDEsExprForSourceArbitraryElementG}
with respect to~$t$ gives $X_zU_{tt}-U_zX_{tt}=0$ with $z\in\{x,u\}$,
i.e., we also have $X_{tt}=U_{tt}=0$.
Taking into account the derived equations $X_{xt}=X_{ut}=0$
in the second equation of~\eqref{eq:RDEsDetEqsEquivTransOfF'1} with $z=t$
immediately gives $T_{tt}=0$.

In view of the constructed equations for admissible transformations,
the $(t,x,u)$-components of generalized equivalence transformations within the class~$\mathcal F$
should be of the form
\[
\tilde t=\bar T^1t+\bar T^0,\quad
\tilde x=\bar X^1x+\bar X^2u+\bar X^3t+\bar X^0,\quad
\tilde u=\bar U^1x+\bar U^2u+\bar U^3t+\bar U^0,
\]
where $\bar T$'s, $\bar X$'s and~$\bar U$'s are smooth functions of~$g$
with $\bar T^1\ne0$ and $\bar X^1\bar U^2-\bar X^2\bar U^1=\Delta\ne0$.
We represent the result of the splitting of the equation~\eqref{eq:RDEsExprForSourceArbitraryElementG}
with respect to~$u_x$ as the system
\begin{gather}\label{eq:RDEsDetEqsEquivTransOfF'Final}
\bar X^1\bar U^3-\bar U^1\bar X^3=\bar X^1T_t\tilde g-g\Delta, \quad
\bar X^2\bar U^3-\bar U^2\bar X^3=\bar X^2T_t\tilde g.
\end{gather}
It is clear from this system that the $g$-component of any generalized equivalence transformation within the class~$\mathcal F$
is a function~$\bar F$ of~$g$ only, and we can choose this function
as a parameter function merely constrained by the inequality $\bar F_g\ne 0$,
which is needed for the transformation nondegeneracy.
Then the system~\eqref{eq:RDEsDetEqsEquivTransOfF'Final} considered as a linear system of algebraic equations
with respect to $(\bar X^3,\bar U^3)$ possesses a unique solution resulting in the form of transformations
from proposition's statement.
\end{proof}

The usual equivalence group~$G^\sim_{\mathcal F}$ is a (finite-dimensional) subgroup
of the generalized equivalence group~\smash{$\bar G^\sim_{\mathcal F}$}
that is singled out from~\smash{$\bar G^\sim_{\mathcal F}$}
by the following system of constraints for the group parameters:
\[
\bar T^0_g=\bar T^1_g=0,\quad
\bar X^0_g=\bar X^1_g=0,\quad \bar X^2=0,\quad
\bar U^0_g=\bar U^1_g=\bar U^2_g=0,\quad \bar T^1\bar F_g=\bar U^2.
\]
Denote by~$\mathcal G^\sim_{\mathcal F}$ the equivalence groupoid of the class~$\mathcal F$
and by~$\mathcal S^\sim_{\mathcal F}$ the subgroupoid of~$\mathcal G^\sim_{\mathcal F}$
generated by the generalized equivalence group~\smash{$\bar G^\sim_{\mathcal F}$}.
The subgroupoid of~$\mathcal G^\sim_{\mathcal F}$ generated by the usual equivalence group~$G^\sim_{\mathcal F}$
is a proper subgroupoid  of~$\mathcal S^\sim_{\mathcal F}$.
Hence the group~\smash{$\bar G^\sim_{\mathcal F}$} is an example of a nontrivial generalized equivalence group,
and it is also a nontrivial conditional generalized equivalence group of the class~$\mathcal R$,
where the specializing attribute ``nontrivial'' is related to both the attributes ``conditional'' and ``generalized''.
The dependence of group parameters on~$g$ is needless for generating admissible transformations in the class~$\mathcal F$
and is merely a manifestation of the fact that the arbitrary element~$g$ is constant within the subclass~$\mathcal F$.
This is why we need to consider an effective generalized equivalence group of the class~$\mathcal F$,
which is a minimal subgroup of~\smash{$\bar G^\sim_{\mathcal F}$}
generating the subgroupoid~$\mathcal S^\sim_{\mathcal F}$ of~$\mathcal G^\sim_{\mathcal F}$.
See~\cite{OpanasenkoBihloPopovych2017} for related definitions.
The only dependence on~$g$ that is essential for generalized equivalence
is the explicit involvement of~$g$ in the $t$-coefficient of the $x$-component.
At the same time, setting the group parameters $\bar T$'s, $\bar X$'s, $\bar U$'s and~$\bar T^1\bar F-\bar U^2$ to be constants
singles out the subset of elements from~\smash{$\bar G^\sim_{\mathcal F}$}
that is not a subgroup of~\smash{$\bar G^\sim_{\mathcal F}$}
although this subset is minimal among subsets of~\smash{$\bar G^\sim_{\mathcal F}$} generating~$\mathcal S^\sim_{\mathcal F}$.
The construction of an effective generalized equivalence group of the class~$\mathcal F$ is in fact more tricky.

\begin{proposition}\label{pro:RDEsEffectiveGenEquivGroupOfF}
An effective generalized equivalence group~$\hat G^\sim_{\mathcal F}$ of the subclass~$\mathcal F$ is constituted by the point transformations
\begin{gather*}
\tilde t=T_1t+T_0,\quad
\tilde x=X_1x+X_2u-X_2gt+X_0,\\
\tilde u=U_1x+U_2u+(1-U_2)gt+U_3t+\frac{T_0}{T_1}g+U_0,\quad
\tilde u_{\tilde x}=\frac{U_1+U_2u_x}{X_1+X_2u_x},\\
\tilde f=\frac{(X_1+X_2u_x)^2}{T_1}f,\quad
\tilde g=\frac{g+U_3}{T_1},
\end{gather*}
where $T$'s, $X$'s and~$U$'s are arbitrary constants with $T_1(X_1U_2-X_2U_1)\ne0$.
\end{proposition}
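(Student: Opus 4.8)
The plan is to verify that the set $\hat G^\sim_{\mathcal F}$ written down in the statement (i)~is a subgroup of the generalized equivalence group $\bar G^\sim_{\mathcal F}$ from Proposition~\ref{pro:RDEsGenEquivGroupOfF}, (ii)~induces within the class~$\mathcal F$ the same family of admissible transformations, i.e.\ the same subgroupoid $\mathcal S^\sim_{\mathcal F}$, as $\bar G^\sim_{\mathcal F}$ does, and (iii)~is minimal among its own subgroups with property~(ii). For~(i), I would first observe that $\hat G^\sim_{\mathcal F}$ is exactly the subset of $\bar G^\sim_{\mathcal F}$ cut out by requiring the parameter functions $\bar T^0,\bar T^1,\bar X^0,\bar X^1,\bar X^2,\bar U^1,\bar U^2$ to be constant and $\bar U^0,\bar F$ to be affine in~$g$ with $\bar T^1\bar F_g=1$ and $\bar T^1\bar U^0_g=\bar T^0$ (identifying $U_3=\bar T^1\bar F(0)$, $U_0=\bar U^0(0)$, and so on). One then composes two transformations of the stated form, collects in each component the coefficients of $x$, $u$, $t$ and~$g$, and checks that the composition again has the stated form; concretely, the $T_1$-parameter of the composition is the product of the $T_1$-parameters of the factors and its $U_3$-parameter equals $U_3+T_1U_3'$, whence the defining constraints above are preserved, and inversion is handled the same way. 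This is the most computational step, but it is routine affine algebra in $(t,x,u)$ with $g$-dependent coefficients.

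For~(ii), the inclusion $\hat G^\sim_{\mathcal F}\subseteq\bar G^\sim_{\mathcal F}$ already gives one direction. For the converse, any admissible transformation in $\mathcal S^\sim_{\mathcal F}$ is induced by some $\varphi\in\bar G^\sim_{\mathcal F}$ acting on a source equation of the class~$\mathcal F$ with a fixed constant value $g=g_0$ of the arbitrary element; since $g$ is constant on that equation, the admissible transformation depends only on the numbers $\bar T^i(g_0)$, $\bar X^j(g_0)$, $\bar U^k(g_0)$, $\bar F(g_0)$. I would then set $T_1=\bar T^1(g_0)$, $T_0=\bar T^0(g_0)$, $X_i=\bar X^i(g_0)$ for $i=0,1,2$, $U_1=\bar U^1(g_0)$, $U_2=\bar U^2(g_0)$, $U_3=\bar T^1(g_0)\bar F(g_0)-g_0$, $U_0=\bar U^0(g_0)-g_0\bar T^0(g_0)/\bar T^1(g_0)$, and check, by substituting $g=g_0$ into the formulas of Proposition~\ref{pro:RDEsEffectiveGenEquivGroupOfF}, that the corresponding element of $\hat G^\sim_{\mathcal F}$ (which is admissible since $T_1\ne0$ and $X_1U_2-X_2U_1=\bar X^1\bar U^2-\bar X^2\bar U^1\ne0$ at $g_0$) induces the same admissible transformation on the same source equation; the only identity requiring a line of computation is $\bar T^1(g_0)\bar F(g_0)-g_0\bar U^2(g_0)=(1-U_2)g_0+U_3$, which matches the $t$-coefficient of the $u$-component. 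Hence $\hat G^\sim_{\mathcal F}$ and $\bar G^\sim_{\mathcal F}$ generate one and the same subgroupoid $\mathcal S^\sim_{\mathcal F}$.

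For~(iii), I would consider the map sending a transformation of $\hat G^\sim_{\mathcal F}$ to the admissible transformation it induces on a source equation with $g=0$. Reading off the proposition at $g=0$, this admissible transformation has components $\tilde t=T_1t+T_0$, $\tilde x=X_1x+X_2u+X_0$, $\tilde u=U_1x+U_2u+U_3t+U_0$ together with target value $\tilde g=U_3/T_1$, so all nine constants are recovered from it and the map is injective. Since the family of admissible transformations induced by a subgroup of $\bar G^\sim_{\mathcal F}$ is automatically closed under composition and inversion (the composition of the transformations induced by $\varphi_1,\varphi_2$ on composable equations is induced by $\varphi_2\circ\varphi_1$), a proper subgroup $H\subsetneq\hat G^\sim_{\mathcal F}$ would induce a proper subfamily of those admissible transformations with source value $g=0$ and therefore could not generate $\mathcal S^\sim_{\mathcal F}$. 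Together with~(ii) this shows that $\hat G^\sim_{\mathcal F}$ is a minimal subgroup of $\bar G^\sim_{\mathcal F}$ generating $\mathcal S^\sim_{\mathcal F}$, i.e., an effective generalized equivalence group of the subclass~$\mathcal F$. I expect the subgroup verification in~(i) to be the main obstacle; the minimality argument in~(iii) is conceptually short but needs care in recording precisely which admissible transformations a given subgroup induces.
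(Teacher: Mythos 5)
Your proposal is correct, and its three-part skeleton (subgroup of~$\bar G^\sim_{\mathcal F}$, generation of~$\mathcal S^\sim_{\mathcal F}$, minimality) is forced by the definition and hence shared with the paper; the execution, however, differs in two respects. For the group property, the paper does not verify closure on the stated formulas directly: it embeds the candidate into a $16$-dimensional ambient group~$H_1$ of transformations affine in $(x,u,t,g)$, intersects with~$\bar G^\sim_{\mathcal F}$ to get a $12$-dimensional group~$H_0$, reparameterizes $B_{21}\mapsto B_{21}'+T_0/T_1$, and reads off from the composition law of~$H_0$ that the three constraints $C_1=1$, $B_{11}=B_{21}'=0$ are preserved under composition and inversion. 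That route also explains how the three extra constraints (and hence the formulas of the proposition) were found, whereas your direct composition check only certifies them; both are valid, and your computation does close (e.g., the $g$-terms free of $t,x,u$ in the composed $x$-component cancel, and the coefficient of $gt$ there is $-(\tilde X_1X_2+\tilde X_2U_2)$, minus the new $X_2$-parameter, as required). For minimality, the paper argues by the dimension count $\dim\hat G^\sim_{\mathcal F}=9$ against the precisely nine-parameter family of admissible transformations with a fixed source; your argument via the injectivity of the evaluation-at-$g=0$ map, combined with the observation that the set of admissible transformations induced by a subgroup is already a subgroupoid, makes the same point explicit and is arguably cleaner. Your step~(ii), with the explicit substitutions $U_3=\bar T^1(g_0)\bar F(g_0)-g_0$ and $U_0=\bar U^0(g_0)-g_0\bar T^0(g_0)/\bar T^1(g_0)$, spells out a generation claim that the paper only asserts.
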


\begin{proof}
Consider the set~$H_1$ of the point transformations in the space with the coordinates $(t,x,u,u_x,f,g)$, whose components are of the~form
\begin{gather}\label{eq:RDEsSupersetForEffectiveEquivGroup}
\begin{split}
&\tilde t=T_1t+T_0,\\
&\tilde x=X_1x+X_2u+(A_{11}g+A_{10})t+B_{11}g+B_{10},\\
&\tilde u=U_1x+U_2u+(A_{21}g+A_{20})t+B_{21}g+B_{20},\\
&\tilde u_{\tilde x}=\frac{U_1+U_2u_x}{X_1+X_2u_x},\quad
 \tilde f=\frac{(X_1+X_2u_x)^2}{T_1}f,\quad
 \tilde g=\frac{C_1g+C_0}{T_1},
\end{split}
\end{gather}
where $T$'s, $X$'s, $U$'s, $A$'s, $B$'s and $C$'s are arbitrary constants with $T_1(X_1U_2-X_2U_1)C_1\ne0$.
It is obvious that this set is closed with respect to the composition of transformations and taking the inverse,
i.e., it is a (local) transformation group with $\dim H_1=16$.
Then the intersection $H_0:=H_1\cap\bar G^\sim_{\mathcal F}$ of~$H_1$ with~$\bar G^\sim_{\mathcal F}$,
which is singled out from~$H_1$ by the constraints $A_{10}=0$, $A_{11}=-X_2$, $A_{20}=C_0$ and $A_{21}=C_1-U_2$,
is also a group, and $\dim H_0=12$.
The subgroup~$H_0$ of~$\bar G^\sim_{\mathcal F}$ generates the entire subgroupoid~$\mathcal S^\sim_{\mathcal F}$
of~$\mathcal G^\sim_{\mathcal F}$, which is generated by~$\bar G^\sim_{\mathcal F}$.
At the same time, for each fixed pair of the arbitrary elements $(f,g)$,
the subgroupoid~$\mathcal S^\sim_{\mathcal F}$ contains
a precisely nine-parameter family of admissible transformations with the source $(f,g)$.
This is why we should try to find three more constraints for group parameters of the group~$H_1$
in order to construct a nine-dimensional subgroup of~$H_0$ that still generates the entire~$\mathcal S^\sim_{\mathcal F}$.

We analyze the composition of two arbitrary elements from the group~$H_0$,
$\hat{\mathcal T}=\tilde{\mathcal T}\mathcal T$ with $\tilde{\mathcal T},\mathcal T\in H_0$.
These generalized equivalence transformations have the general form~\eqref{eq:RDEsSupersetForEffectiveEquivGroup},
where group parameters satisfy the above constraints for the subgroup~$H_0$.
We additionally reparameterize~$H_0$ with replacing the parameter~$B_{21}$ by $B_{21}'+T_0/T_1$ and
mark the group-parameter values corresponding to~$\hat{\mathcal T}$ and~$\tilde{\mathcal T}$
by hats and tildes, respectively.
We obtain, in particular, the following expressions
for group-parameter values of the composition~$\hat{\mathcal T}$:
\begin{gather*}
\hat C_1=\tilde C_1C_1,\quad
\hat B_{11} =\tilde X_1B_{11}+\tilde X_2B_{21}'+\frac{\tilde B_{11} }{T_1},\quad
\hat B_{21}'=\tilde U_1B_{11}+\tilde U_2B_{21}'+\frac{\tilde B_{21}'}{T_1},
\end{gather*}
which imply that the constrains $C_1=1$, $B_{11}=B_{21}'=0$
singling out~$\hat G^\sim_{\mathcal F}$ from the subgroup~$H_0$
are preserved by the composition of transformations and taking the inverse in~$H_0$.
Therefore, $\hat G^\sim_{\mathcal F}$ is really a group.
It generates the entire subgroupoid~$\mathcal S^\sim_{\mathcal F}$ of~$\mathcal G^\sim_{\mathcal F}$,
and any its proper subset does not possess this property,
i.e., it is a minimal subgroup of~$\bar G^\sim_{\mathcal F}$ with this property.
\end{proof}

The usual equivalence group~\smash{$G^\sim_{\mathcal F}$} of the subclass~$\mathcal F$
is not contained in the effective generalized equivalence group~\smash{$\hat G^\sim_{\mathcal F}$}
constructed in Proposition~\ref{pro:RDEsEffectiveGenEquivGroupOfF}.
The intersection \smash{$G^\sim_{\mathcal F}\cap\hat G^\sim_{\mathcal F}$}
is singled out from~$G^\sim_{\mathcal F}$ by the constraints $T_0=0$ and $U_2=1$.

To prove an assertion generalizing the above claim,
we need to consider the infinitesimal counterparts of related groups.
For convenience, we introduce the following dual notation
for relevant vector fields on the space with the coordinates $(t,x,u,u_x,f,g)$:
\begin{gather*}
Q^1=P^t=\p_t,\quad
Q^2=D^t=t\p_t-f\p_f-g\p_g,\\
Q^3=P^x=\p_x,\quad
Q^4=D^x=x\p_x-u_x\p_{u_x}+2f\p_f,\\
Q^5=P^u=\p_u,\quad
Q^6=D^u=u\p_u+u_x\p_{u_x}+g\p_g,\\
Q^7=Z^t=t\p_u+\p_g,\quad
Q^8=Z^x=x\p_u+\p_{u_x},\quad
Q^9=R=(u-gt)\p_x-u_x^{\,2}\p_{u_x}+2u_xf\p_f.
\end{gather*}
Up to the anticommutativity of the Lie bracket,
the nonzero commutation relations between these vector fields are exhausted by
\begin{gather*}
[P^t,D^t]= P^t,\ [P^x,D^x]= P^x,\ [P^u,D^u]=P^u,\ [P^t,Z^t]=P^u,\ [P^x,Z^x]=P^u,\\
[Z^t,D^t]=-Z^t,\ [Z^x,D^x]=-Z^x,\ [Z^t,D^u]=Z^t,\ [Z^x,D^u]=Z^x,\\
[P^t,R]= -gP^x,\ [P^u,R  ]= P^x,\ [D^x,R  ]= -R,\ [D^u,R  ]=  R,\ [Z^x,R  ]=D^x-D^u+gZ^t.
\end{gather*}

The Lie algebras~$\mathfrak g^\sim_{\mathcal F}$, $\bar{\mathfrak g}^\sim_{\mathcal F}$ and $\hat{\mathfrak g}^\sim_{\mathcal F}$
of the groups~$G^\sim_{\mathcal F}$, $\hat G^\sim_{\mathcal F}$ and $\bar G^\sim_{\mathcal F}$
are naturally called the usual equivalence algebra, the generalized equivalence algebra
and an effective generalized equivalence algebra of the class~$\mathcal F$, respectively.
Each of them is merely the set of infinitesimal generators of one-parameter subgroups of the corresponding group.
In order to construct all such generators,
we successively take one of the group parameter in the respective general form of group elements
to depend on a continuous subgroup parameter~$\delta$ and
set the other parameter-functions to their values corresponding to the identity transformations,
which are $T_1=X_1=U_2=1$ and $T_0=X_0=X_2=U_0=U_1=U_3=0$ for the groups~$G^\sim_{\mathcal F}$ and~$\hat G^\sim_{\mathcal F}$
(the parameter~$X_2$ is relevant only for~$\hat G^\sim_{\mathcal F}$)
and similarly $\bar T^1=\bar X^1=\bar U^2=1$, $\bar T^0=\bar X^0=\bar X^2=\bar U^0=\bar U^1=0$ and $\bar F=g$
for the group~$\bar G^\sim_{\mathcal F}$.
Then we differentiate the transformation components with respect to~$\delta$ and evaluate the result at $\delta=0$.
As a result, we derive that
\begin{gather*}
\mathfrak g^\sim_{\mathcal F}=\langle Q^1,\dots,Q^8\rangle,\quad
\bar{\mathfrak g}^\sim_{\mathcal F}=\left\{\sum_{i=1}^9\vartheta^i(g)Q^i\right\},\\
\hat{\mathfrak g}^\sim_{\mathcal F}=\langle P^t+gP^u,\,D^t,\,P^x,\,D^x,\,P^u,\,D^u-gZ^t,\,Z^t,\,Z^x,\,R\rangle,
\end{gather*}
where the coefficients $\vartheta$'s run through the set of smooth functions of~$g$,
i.e., the algebra~$\bar{\mathfrak g}^\sim_{\mathcal F}$ is the module over the ring of smooth functions of~$g$
with basis $(Q^1,\dots,Q^9)$ equipped with the Lie bracket of vector fields.

\begin{theorem}\label{thm:RDEsOnContainingUsualEquivGroupOfF}
Any effective generalized equivalence group of the class~$\mathcal F$
does not contain the usual equivalence group~$G^\sim_{\mathcal F}$ of this class.
\end{theorem}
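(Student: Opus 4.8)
The plan is to work at the level of Lie algebras, using the fact established just before the theorem that an effective generalized equivalence group of $\mathcal F$, being a subgroup of $\bar G^\sim_{\mathcal F}$ that generates the whole subgroupoid $\mathcal S^\sim_{\mathcal F}$, must have a $9$-dimensional Lie algebra sitting inside $\bar{\mathfrak g}^\sim_{\mathcal F}$ and projecting onto the ``essential'' part of the generalized equivalence. More precisely, I would argue that if $\hat G$ is any effective generalized equivalence group, then its algebra $\hat{\mathfrak g}$ is a $9$-dimensional Lie subalgebra of the module $\bar{\mathfrak g}^\sim_{\mathcal F}=\{\sum_{i=1}^9\vartheta^i(g)Q^i\}$ whose ``leading coefficients'' (the values of the $\vartheta^i$ in an appropriate sense) reproduce, for each value of $g$, the full nine-parameter family of admissible transformations with that source. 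The claim to prove is then purely algebraic: no such $\hat{\mathfrak g}$ can contain $\mathfrak g^\sim_{\mathcal F}=\langle Q^1,\dots,Q^8\rangle$.

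First I would suppose, for contradiction, that $\mathfrak g^\sim_{\mathcal F}\subseteq\hat{\mathfrak g}$ for some effective generalized equivalence algebra $\hat{\mathfrak g}$. Since $\hat{\mathfrak g}$ is $9$-dimensional and already contains the $8$-dimensional $\langle Q^1,\dots,Q^8\rangle$, it must be spanned by $Q^1,\dots,Q^8$ together with one extra vector field of the form $W=\sum_{i=1}^9\vartheta^i(g)Q^i$ with $\vartheta^9\not\equiv 0$ (the coefficient of $R=Q^9$ cannot vanish identically, since otherwise $\hat{\mathfrak g}$ would project trivially in the $R$-direction and could not generate the transformations coming with nonzero $R$-component, i.e.\ those with $\bar X^2\ne 0$). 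Modulo $\langle Q^1,\dots,Q^8\rangle$ we may absorb all terms $\vartheta^i Q^i$, $i\le 8$, that have constant coefficients, but we cannot absorb terms whose coefficients genuinely depend on $g$; so the real content is that $W$ may be taken to have the form $\vartheta^9(g)R+\sum_{i\le 8}\psi^i(g)Q^i$ with the $\psi^i$ not all reducible to constants unless forced to be.

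The heart of the argument is then to impose the closure of $\hat{\mathfrak g}$ under the Lie bracket and derive a contradiction with $\vartheta^9\not\equiv0$. The key commutators are those of $W$ with $Q^1=P^t=\p_t$ and with $Q^5=P^u=\p_u$, using the relations $[P^t,R]=-gP^x$ and $[P^u,R]=P^x$ from the table, together with the brackets of the $D$'s and $Z$'s with $R$. Since $P^t$ and $P^u$ kill the $g$-dependence of no coefficient (they act trivially on functions of $g$), computing $[P^t,W]$ and $[P^u,W]$ produces vector fields whose $R$-component again has coefficient $\vartheta^9(g)$ but whose $P^x$-component picks up $-g\vartheta^9(g)$ and $\vartheta^9(g)$ respectively, plus lower-order corrections from the $\psi^i$-terms; requiring these brackets to lie back in $\hat{\mathfrak g}=\langle Q^1,\dots,Q^8,W\rangle$ forces relations between $\vartheta^9$ and the $\psi^i$. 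Iterating — in particular bracketing again, or bracketing with $D^t=Q^2$, which does act on functions of $g$ via $-g\p_g$ — one gets an over-determined system on $\vartheta^9$ that is inconsistent unless $\vartheta^9\equiv0$. I expect the cleanest route is: the span must be closed, it contains $P^x$ (as $Q^3$), and taking successive brackets of $W$ with $P^t$ or with $D^t$ generates, from a single $\vartheta^9(g)R$ term, an infinite ascending chain of linearly independent vector fields $g^k R + (\text{lower order})$ unless $\vartheta^9$ is killed, contradicting finite-dimensionality.

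The main obstacle, and the step that needs genuine care rather than hand-waving, is the bookkeeping of the ``lower-order'' corrections: when one brackets $W$ with $D^t$ or $D^u$, the coefficient functions $\vartheta^i(g)$ get differentiated (via the $-g\p_g$ and $g\p_g$ pieces) and simultaneously get shuffled among the $Q^i$ by the structure constants, so one must check that these two effects cannot conspire to keep everything inside a $9$-dimensional space while allowing $\vartheta^9\not\equiv0$. Concretely, I would set up the action of $\mathrm{ad}_{D^t}$ and $\mathrm{ad}_{P^t}$ on the two-dimensional ``tower'' generated by $R$ and $P^x$ over the ring $C^\infty(g)$, observe that $D^t$ acts there as (a shift by the structure constant $-R\mapsto$ itself, $P^x\mapsto$ itself) combined with the derivation $-g\,d/dg$ on coefficients, and note that the only $C^\infty(g)$-submodules of finite dimension over $\mathbb R$ that are invariant under this operator and contain a nonzero multiple of $R$ are the zero module — this is the polynomial-degree-growth phenomenon. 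Once that sub-claim is isolated and proved (a short linear-algebra/ODE argument), the theorem follows: $\vartheta^9\equiv0$, so $\hat{\mathfrak g}$ has trivial $R$-component, hence $\hat G$ cannot generate the admissible transformations with $\bar X^2\ne0$, contradicting effectiveness. Therefore no effective generalized equivalence group of $\mathcal F$ contains $G^\sim_{\mathcal F}$.
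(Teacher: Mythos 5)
Your argument hinges on the assertion, made at the outset and never justified, that any effective generalized equivalence group of~$\mathcal F$ has a \emph{nine-dimensional} Lie algebra. By definition (cf.\ the discussion before Proposition~\ref{pro:RDEsEffectiveGenEquivGroupOfF}), an effective generalized equivalence group is a \emph{minimal} subgroup of~$\bar G^\sim_{\mathcal F}$ generating the subgroupoid~$\mathcal S^\sim_{\mathcal F}$; minimality with respect to inclusion bounds neither the dimension from above nor forces finite-dimensionality, and the fact that each source $(f,g)$ admits a nine-parameter family of admissible transformations only gives a lower bound. The paper's proof must therefore treat arbitrary subalgebras~$\mathfrak h$ of the module~$\bar{\mathfrak g}^\sim_{\mathcal F}$ containing $\mathfrak g^\sim_{\mathcal F}=\langle Q^1,\dots,Q^8\rangle$ together with some $\sum_i\zeta^i(g)Q^i$ with $\zeta^9\ne0$ --- including infinite-dimensional ones with nonconstant~$\zeta^9$ --- and its strategy is the opposite of yours: rather than deriving a contradiction with existence, it exhibits in every such~$\mathfrak h$ a \emph{proper} subalgebra~$\mathfrak s$ (e.g.\ $\hat{\mathfrak g}^\sim_{\mathcal F}$ when $R\in\mathfrak h$, and an explicitly constructed span involving $\zeta^9R$ and $\zeta^9Z^x$ otherwise) that still contains a full-rank system $K^j=\sum_i\chi^{ij}Q^i$ with $\det(\chi^{ij})\ne0$, so that~$\mathfrak h$ fails minimality. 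Your computation, even granting its correctness, only excludes nine-dimensional effective groups containing $G^\sim_{\mathcal F}$ and leaves the general case untouched; this is a genuine gap, not a presentational one.

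Two secondary remarks. First, within the nine-dimensional setting your core commutator argument does work: with $\mathfrak h=\langle Q^1,\dots,Q^8\rangle\oplus\mathbb R W$ and $W=\vartheta^9(g)R+\cdots$, the bracket $[P^u,W]=\vartheta^9P^x+\cdots$ has zero $R$-coefficient and hence must lie in the constant-coefficient span, forcing $\vartheta^9=\const$, after which $[P^t,W]=-g\vartheta^9P^x+\cdots$ gives the contradiction. Second, your isolated sub-claim that the only finite-dimensional subspace of $C^\infty(g)R\oplus C^\infty(g)P^x$ invariant under $\mathrm{ad}_{D^t}$ and $\mathrm{ad}_{P^t}$ and containing a nonzero multiple of~$R$ is the zero module is false as stated: $\langle R,\,gP^x\rangle$ is such a subspace, since $[D^t,R]=0$, $[P^t,R]=-gP^x$ and $[D^t,gP^x]=-gP^x$. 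What actually saves the nine-dimensional computation is the membership constraint in $\langle Q^1,\dots,Q^8\rangle\oplus\mathbb R W$, not the invariant-submodule statement, so even the part of the argument you flag as needing care is organized around the wrong lemma.
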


\begin{proof}
We prove the following re-formulated assertion:
Suppose that a subgroup of the generalized equivalence group~$\bar G^\sim_{\mathcal F}$ of the class~$\mathcal F$
contains the usual equivalence group~$G^\sim_{\mathcal F}$ of this class
and generates the same subgroupoid of the equivalence groupoid~$\mathcal G^\sim_{\mathcal F}$
as the entire group~$\bar G^\sim_{\mathcal F}$ does.
Then this subgroup is not an effective generalized equivalence group of the class~$\mathcal F$.

A complete list of discrete usual equivalence transformations of the class~$\mathcal F$
that are independent up to combining with each other
and with continuous usual equivalence transformations of this class
is exhausted by the involutions~$I^t$, $I^x$ and~$I^u$
alternating the signs of $(t,f,g)$, $(x,u_x)$ and~$(u,u_x,g)$, respectively.
Among generalized equivalence transformations,
there is one more independent discrete transformation~$I^g$:
$(\tilde t,\tilde x,\tilde u,\tilde u_{\tilde x},\tilde f,\tilde g)=(t,x,u-2gt,u_x,f,-g)$.
Discrete equivalence transformations play an auxiliary role in the course of the proof.

It suffices to prove the infinitesimal counterpart of the above assertion, which states the following.
Let a subalgebra~$\mathfrak h$ of~$\bar{\mathfrak g}^\sim_{\mathcal F}$
contain~$\mathfrak g^\sim_{\mathcal F}$ and a vector field $Q=\sum_{i=1}^9\zeta^iQ^i$,
where $\zeta^i=\zeta^i(g)$ are smooth functions of~$g$ with $\zeta^9\ne0$,
be invariant with respect to discrete transformations in~$G^\sim_{\mathcal F}$,
$I^t_*\mathfrak h,I^x_*\mathfrak h,I^u_*\mathfrak h\subseteq\mathfrak h$,
and be associated with a transformation (pseudo)group.
Then this subalgebra properly contains
another subalgebra~$\mathfrak s$ among whose elements there are $K^j=\sum_{i=1}^9\chi^{ij}Q^i$,
where $\chi^{ij}$, $i,j=1,\dots,9$, are smooth functions of~$g$ with $\det(\chi^{ij})\ne0$,
and which is also invariant with respect to~$I^t_*$, $I^x_*$ and~$I^u_*$
and is associated with a transformation (pseudo)group.
Here the subscript ``*'' combined with the notation of a point transformation
denotes pushing forward vector fields on the same manifold by this transformation.

If the algebra~$\mathfrak h$ contains the pure vector field~$R$,
then we commute~$R$ with elements of~$\mathfrak g^\sim_{\mathcal F}$ and successively obtain that
\[
[R,P^t]=gP^x\in\mathfrak h,\quad
[gP^x,Z^x]=gP^u\in\mathfrak h,\quad
[Z^x,R]=D^t-D^u+gZ^t\in\mathfrak h.
\]
Hence $gZ^t\in\mathfrak h$, i.e.,
$\mathfrak h\supseteq\mathfrak g^\sim_{\mathcal F}+\langle gP^x,\,gP^u,\,gZ^t\rangle\varsupsetneq \hat{\mathfrak g}^\sim_{\mathcal F}$.
We can choose $\mathfrak s=\hat{\mathfrak g}^\sim_{\mathcal F}$.
Then we also have $I^t_*\mathfrak s=I^x_*\mathfrak s=I^u_*\mathfrak s=I^g_*\mathfrak s=\mathfrak s$.

Otherwise, we compute the commutators
\begin{gather*}
[Q,D^x]=\zeta^9R-\zeta^8Z^x+\zeta^3P^x\in\mathfrak h,\\
[\zeta^9R-\zeta^8Z^x+\zeta^3P^x,D^x]=\zeta^9R+\zeta^8Z^x+\zeta^3P^x\in\mathfrak h,\\
[\zeta^9R-\zeta^8Z^x+\zeta^3P^x,D^t+D^u]=-\zeta^9R-\zeta^8Z^x\in\mathfrak h,
\end{gather*}
and thus derive that $\zeta^9R\in\mathfrak h$, and $\zeta^9\ne\const$.
In the same way, we can show that for any element $\sum_{i=1}^9\vartheta^i(g)Q^i\in\mathfrak h$,
the element $\vartheta^3P^x$ and thus the element $\vartheta^3P^u=[\vartheta^3P^x,Z^x]$ also belong to~$\mathfrak h$.
Taking two more commutators,
\begin{gather*}
[Z^x,\zeta^9R]=\zeta^9(D^x-D^u+gZ^t)\in\mathfrak h,\\
[Z^x,\zeta^9(D^x-D^u+gZ^t)]=-2\zeta^9 Z^x\in\mathfrak h,
\end{gather*}
we get $\zeta^9 Z^x\in\mathfrak h$.
Consider the span
\[
\mathfrak s=\langle P^t,\,D^t,\,Z^t,\,D^x,\,D^u,\,\beta P^x,\,\beta P^u,\,\alpha R,\,\alpha(D^x-D^u+gZ^t),\,\alpha Z^x\mid\alpha R,\beta P^x\in\mathfrak h \rangle.
\]
It is a subalgebra of~$\mathfrak h$.
Since the entire algebra~$\mathfrak h$ is invariant with respect to~$I^t_*$, $I^x_*$ and~$I^u_*$
and is associated with a transformation (pseudo)group,
the subalgebra~$\mathfrak s$ has the same properties.
In view of $R\notin\mathfrak h$, the parameter function~$\alpha$ does not take constant values.
Hence $Z^x\notin\mathfrak s$, i.e., $\mathfrak s\subsetneq\mathfrak h$.
As the required elements $K^j$, $j=1,\dots,9$, we can choose
$P^t$, $D^t$, $Z^t$, $D^x$, $D^u$, $P^x$, $P^u$, $\zeta^9R$ and~$\zeta^9Z^x$.

Therefore, the algebra~$\mathfrak h$ is not an effective generalized equivalence algebra of the class~$\mathcal F$.
\end{proof}

\section{Classification of Lie symmetries}\label{sec:RDEsClassGroupClassification}

In order to compute the maximal Lie invariance algebras of equations from the class~$\mathcal R$,
we employ the infinitesimal method~\cite{Olver1993,Ovsiannikov1982}.
The infinitesimal generator of a one-parameter Lie-symmetry group of an equation from the class~$\mathcal R$
is a vector field $Q=\tau(t,x,u)\p_t+\xi(t,x,u)\p_x+\eta(t,x,u)\p_u$ on the space with coordinates~$(t,x,u)$
with the components~$\tau$, $\xi$ and~$\eta$ being the smooth functions of these coordinates.
The infinitesimal invariance criterion requires that
\begin{gather}\label{eq:RDEsInfInvCriterion}
Q^{(2)}(u_t-f(u_x)u_{xx}-g(u))\big|_{u_t=f(u_x)u_{xx}+g(u)}=0,
\end{gather}
where $Q^{(2)}$ is the second prolongation of the vector field~$Q$ defined by the well-known prolongation formula~\cite{Olver1993},
$
 Q^{(2)}=Q+\eta^{(1,0)}\p_{u_t}+\eta^{(0,1)}\p_{u_x}+\eta^{(2,0)}\p_{u_{tt}}+\eta^{(1,1)}\p_{u_{tx}}+\eta^{(0,2)}\p_{u_{xx}},
$
with the coefficients~$\eta$'s defined as
$
\eta^\alpha=\mathrm D^\alpha\left(\eta-\tau u_t-\xi u_x\right)+\tau u_{\alpha+\delta_1}+\xi u_{\alpha+\delta_2}.
$
Here $\alpha=(\alpha_1,\alpha_2)$ is a multi-index,
$\mathrm D^\alpha=\mathrm D_t^{\alpha_1}\mathrm D_x^{\alpha_2}$,
$\mathrm D_t=\p_t+\sum_\alpha u_{\alpha+\delta_1}\p_{u_\alpha}$
and $\mathrm D_x=\p_x+\sum_\alpha u_{\alpha+\delta_2}\p_{u_\alpha}$
are the total derivative operators with respect to~$t$ and~$x$,
respectively, $\delta_1=(1,0)$ and~$\delta_2=(0,1)$.

Since the class~$\mathcal R$ consists of evolution equations,
the $t$-component of~$Q$ does not depend on~$x$ and~$u$, i.e., $\tau=\tau(t)$~\cite{Kingston1991,Magadeev1993}.

We expand the expression in the left hand side of~\eqref{eq:RDEsInfInvCriterion},
substitute $f(u_x)u_{xx}+g(u)$ for~$u_t$ and then split the resulting equation with respect to~$u_{xx}$.
This gives the system of determining equations for the components of Lie-symmetry vector fields of equations from the class~$\mathcal R$,
\begin{subequations}\label{eq:RDEsDetEqSystem}
\begin{gather}\label{eq:RDEsDetEqA}
\left(-\xi_uu_x{}^{\!2}+(\eta_u-\xi_x)u_x+\eta_x\right)f_{u_x}+(-2\xi_uu_x+\tau_t-2\xi_x)f=0,
\\[1ex]\label{eq:RDEsDetEqB}
\begin{split}
&\left(-\xi_{uu}u_x^3+(\eta_{uu}-2\xi_{xu})u_x{}^{\!2}+(2\eta_{xu}-\xi_{xx})u_x+\eta_{xx}\right)f+(\xi_t+\xi_ug)u_x
\\
&\qquad{}+\eta g_u+(\tau_t-\eta_u)g-\eta_t=0.
\end{split}
\end{gather}
\end{subequations}
Thus, the problem of group classification for the class~$\mathcal R$ reduces to
the classification of solutions of the system~\eqref{eq:RDEsDetEqSystem},
depending on values of the arbitrary elements~$f$ and~$g$
up to $G^\sim_{\mathcal R}$-equivalence or up to the general point equivalence.

To find the kernel Lie invariance algebras of the class~$\mathcal R$
and of its subclasses~$\mathcal H$, $\mathcal L$, $\mathcal F$, $\mathcal F'$ and~$\mathcal C$
(each of these algebras is the intersection of the maximal Lie invariance algebras of equations
from the corresponding (sub)class),
we successively split the equations~\eqref{eq:RDEsDetEqA} and~\eqref{eq:RDEsDetEqB}
with respect to the arbitrary elements and their derivatives and with respect to~$u_x$,
taking into account the associated auxiliary equations for arbitrary elements.
See also the papers~\cite{ChernihaKingKovalenko2016}, \cite{Dorodnitsyn1982} and~\cite{AkhatovGazizovIbragimov1987,AkhatovGazizovIbragimov1991}
for the kernel Lie invariance algebras of the classes~$\mathcal R$, $\mathcal H$ and~$\mathcal F'$, respectively.

\begin{proposition}
The kernel Lie invariance algebra~$\mathfrak g^\cap_{\mathcal R}$ of the class~$\mathcal R$
coincides with the kernel Lie invariance algebras~$\mathfrak g^\cap_{\mathcal H}$ and~$\mathfrak g^\cap_{\mathcal C}$
of its subclasses~$\mathcal H$ and~$\mathcal C$,
and it is spanned by the vector fields~$\p_t$ and~$\p_x$,
\[
\mathfrak g^\cap_{\mathcal R}=\mathfrak g^\cap_{\mathcal H}=\mathfrak g^\cap_{\mathcal C}=\langle\p_t,\,\p_x\rangle.
\]
The kernel Lie invariance algebras of the subclasses~$\mathcal L$, $\mathcal F$ and~$\mathcal F'$ are respectively
\[
\mathfrak g^\cap_{\mathcal L}=\langle\p_t,\,\p_x,\,x\p_x\rangle,\quad
\mathfrak g^\cap_{\mathcal F}=\langle\p_t,\,\p_x,\p_u\rangle,\quad
\mathfrak g^\cap_{\mathcal F'}=\langle\p_t,\,\p_x,\,\p_u,\,2t\p_t+x\p_x+u\p_u\rangle.
\]
\end{proposition}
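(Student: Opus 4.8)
The plan is to pin down, for each (sub)class, exactly which vector-field components $(\tau,\xi,\eta)$ — already cut down to $\tau=\tau(t)$ — solve the determining equations \eqref{eq:RDEsDetEqA}--\eqref{eq:RDEsDetEqB} identically as the arbitrary elements run through that (sub)class. That the vector fields listed in the statement lie in the corresponding kernels is the routine direction: substitute each of them into \eqref{eq:RDEsDetEqSystem} (taking $f=cu_x^{-2}$ for~$\mathcal L$ and $g=0$ for~$\mathcal F'$) and check the identities. All the effort goes into the converse, and it rests on one standard splitting device: since for each (sub)class the admissible values of~$f$ form an open subset of the smooth functions of~$u_x$, those of~$g$ an open subset of the smooth functions of~$u$, and $f$,~$g$ vary independently, a relation linear in the jet of~$f$ (or of~$g$) with coefficients free of~$f$ (resp.~$g$) splits into the separate vanishing of those coefficients.

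First I would treat~$\mathcal R$ and~$\mathcal C$ together. Splitting \eqref{eq:RDEsDetEqA} with respect to~$f$ and~$f_{u_x}$ (unconstrained here) and then with respect to powers of~$u_x$ gives $\xi_u=0$, $\eta_x=0$ and $\xi_x=\eta_u=\tfrac12\tau_t$, so $\xi=\tfrac12\tau_t(t)x+\sigma(t)$ and $\eta=\tfrac12\tau_t(t)u+\rho(t)$. Feeding these back into \eqref{eq:RDEsDetEqB} makes the whole coefficient of~$f$ vanish, leaving $\xi_tu_x+\eta g_u+(\tau_t-\eta_u)g-\eta_t=0$. The coefficient of~$u_x$ here is $\tfrac12\tau_{tt}x+\sigma'$, a polynomial in~$x$, so splitting off~$u_x$ forces $\tau_{tt}=0$ and $\sigma'=0$; splitting the remaining relation with respect to~$g$ and~$g_u$ then forces $\tau_t=0$ and $\eta=0$. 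Hence $\tau$ and~$\xi$ are constant and $\eta=0$, i.e., $\mathfrak g^\cap_{\mathcal R}=\mathfrak g^\cap_{\mathcal C}=\langle\p_t,\p_x\rangle$; for~$\mathcal C$ I would only remark that the open inequalities $f_{u_x}\ne0$, $(u_x{}^{\!2}f)_{u_x}\ne0$ and $g_u\ne0$ still leave enough freedom for all the splittings used for~$\mathcal R$.

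For~$\mathcal F$ and~$\mathcal F'$ the treatment of \eqref{eq:RDEsDetEqA} is identical, so again $\xi=\tfrac12\tau_t(t)x+\sigma(t)$, $\eta=\tfrac12\tau_t(t)u+\rho(t)$ and \eqref{eq:RDEsDetEqB} collapses to $\bigl(\tfrac12\tau_{tt}x+\sigma'\bigr)u_x+(\tau_t-\eta_u)g-\eta_t=0$, now with $g_u=0$. Splitting off~$u_x$ gives $\tau_{tt}=0$, $\sigma'=0$ and $\tfrac12\tau_tg-\rho'=0$. For~$\mathcal F$, where $g$ is a free constant, the last relation splits into $\tau_t=0$ and $\rho'=0$ but puts no further constraint on $\eta=\rho$, so the translation~$\p_u$ survives and $\mathfrak g^\cap_{\mathcal F}=\langle\p_t,\p_x,\p_u\rangle$. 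For~$\mathcal F'$, where $g=0$, the $g$-terms drop out entirely and only $\tau_{tt}=0$, $\sigma'=0$, $\rho'=0$ remain: $\tau$ may be affine in~$t$ and $\xi,\eta$ carry the matching linear terms, so the scaling $2t\p_t+x\p_x+u\p_u$ joins the translations and $\mathfrak g^\cap_{\mathcal F'}=\langle\p_t,\p_x,\p_u,\,2t\p_t+x\p_x+u\p_u\rangle$.

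The subclass~$\mathcal H$ needs a side remark only because $f_{u_x}=0$ kills the first summand of \eqref{eq:RDEsDetEqA}, which then gives merely $\xi_u=0$, $\tau_t=2\xi_x$ and leaves~$\eta$ a priori free; here the $g$-splitting of \eqref{eq:RDEsDetEqB} carries the argument, forcing $\eta=0$, hence $\eta_u=0$, $\tau_t=0$, $\xi_x=0$ and, from the residual $\sigma'u_x=0$, $\sigma'=0$, so $\mathfrak g^\cap_{\mathcal H}=\langle\p_t,\p_x\rangle$ again. I expect the one genuinely delicate case to be~$\mathcal L$, where $f=cu_x^{-2}$ is \emph{not} a free function and the $f$/$f_{u_x}$ splitting is unavailable: one must substitute $f_{u_x}=-2f/u_x$ into \eqref{eq:RDEsDetEqA}, divide by~$f$ and split the resulting Laurent polynomial in~$u_x$, which yields $\eta_x=0$, $\tau_t=2\eta_u$ but — crucially — imposes nothing on~$\xi$, and this is exactly why $x\p_x$ can remain in the kernel. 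Then I would substitute $f=cu_x^{-2}$ into \eqref{eq:RDEsDetEqB}, split off the free constant~$c$ to get $\xi_{uu}=\xi_{xu}=\xi_{xx}=0$, and split the remainder first with respect to~$u_x$ and then with respect to~$g$ and~$g_u$; this kills~$\xi_u$ and the $t$-dependence of the coefficients of~$\xi$ and forces $\tau_t=0$, $\eta=0$, leaving precisely $\mathfrak g^\cap_{\mathcal L}=\langle\p_t,\p_x,x\p_x\rangle$. Beyond the repeated use of the splitting device, it is this change of technique in the non-free case~$\mathcal L$ that I see as the main point requiring care.
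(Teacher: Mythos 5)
Your proposal is correct and follows exactly the route the paper indicates for this proposition: successive splitting of the determining equations \eqref{eq:RDEsDetEqA}--\eqref{eq:RDEsDetEqB} with respect to the arbitrary elements, their derivatives and $u_x$, taking the auxiliary constraints of each subclass into account (with the substitution $f=cu_x^{-2}$ replacing the $f$-splitting in the non-free case~$\mathcal L$). The individual computations check out, so nothing further is needed.
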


Accurately merging the group classifications of the subclasses~$\mathcal C$, $\mathcal F'$, $\mathcal H$ and $\mathcal L$
that is presented in Lemma~\ref{pro:RDEsGroupClassificationOfC} below,
that were carried out in~\cite{AkhatovGazizovIbragimov1987,AkhatovGazizovIbragimov1991} and in~\cite{Dorodnitsyn1982},
and that can be obtained by mapping with the hodograph transformation $\tilde t=t$, $\tilde x=u$, $\tilde u=x$
from the group classification of the class of Kolmogorov equations given in~\cite{PopovychKunzingerIvanova2008}, respectively,
we get the following assertion.

\begin{theorem}\label{thm:RDEGroupClassification}
A complete list of inequivalent Lie-symmetry extensions
in the class~$\mathcal R$ is exhausted by the cases given in Table~\ref{tab:RDEGroupClassification},
where we use $G^\sim_{\mathcal R}$-, $\bar G^\sim_{\mathcal F}$-, $G^\sim_{\mathcal R}$- and $\mathcal G^\sim_{\mathcal L}$-equivalence
for equations from the subclasses~$\mathcal C$, $\mathcal F$, $\mathcal H$ and~$\mathcal L$, respectively.
\begin{table}[!p]\footnotesize
\caption{Group classification of the class~$\mathcal R$}
\label{tab:RDEGroupClassification}
\begin{center}
\renewcommand{\arraystretch}{1.43}
\tabcolsep=.5ex
\setcounter{tbn}{-1}
\begin{tabular}{|c|c|c@{\hspace{\tabcolsep}\vline\hspace{1.5ex}}l|}
\hline
no.\!\! &     $f(u_x)$     &$g(u)$                  & \hfil Vector fields spanning the maximal Lie invariance algebra\\
\hline
\multicolumn{4}{|c|}{the subclass~$\mathcal C$, \ up to $G^\sim_{\mathcal R}$-equivalence}\\
\hline
\refstepcounter{tbn}\thetbn  \label{RDE:f=any,g=any}         &$\forall$        &$\forall$                 & $\p_t$, $\p_x$\\
\refstepcounter{tbn}\thetbn  \label{RDE:f=any,g=u}           &$\forall$        &$u$                       & $\p_t$, $\p_x$, $e^t\p_u$\\
\refstepcounter{tbn}\thetbn  \label{RDE:f=any,g=1/u}         &$\forall$        &$u^{-1}$                  & $\p_t$, $\p_x$, $2t\p_t+x\p_x+u\p_u$\\
\refstepcounter{tbn}\thetbn  \label{RDE:f=u_x^k,g=e^u}       &$|u_x|^n$        &$\varepsilon e^u$         & $\p_t$, $\p_x$, $(n+2)t\p_t+x\p_x-(n+2)\p_u$\\
\refstepcounter{tbn}\thetbn a\label{RDE:f=u_x^k,g=u^m}       &$|u_x|^n$        &$|u|^m$                   & $\p_t$, $\p_x$, $(1-m)(n+2)t\p_t+(n+1-m)x\p_x+(n+2)u\p_u$\\
\noprint{          }\thetbn b                                &$|u_x|^n$        &$|u|^{n+1}+\varepsilon u$ & $\p_t$, $\p_x$, $e^{-\varepsilon nt}(\varepsilon \p_t+u\p_u)$\\
\refstepcounter{tbn}\thetbn  \label{RDE:f=1/(u_x+1),g=u}     &$(u_x+1)^{-1}$   &$\varepsilon u$           & $\p_t$, $\p_x$, $e^{\varepsilon t}\p_u$, $e^{\varepsilon t}(\p_t+\varepsilon(u+x)\p_u)$\\
\refstepcounter{tbn}\thetbn a\label{RDE:f=u_x,g=u^2}         &$u_x$            &$u^2$                     & $\p_t$, $\p_x$, $t\p_t-u\p_u$, $t^2\p_t-(2tu+1)\p_u$\\
\noprint{          }\thetbn b                                &$u_x$            &$u^2+1$                   & $\p_t$, $\p_x$, $\cos2t\,(\p_t+2\p_u)+2u\sin2t\,\p_u$, $\sin2t\,(\p_t+2\p_u)-2u\cos2t\,\p_u$\\
\noprint{          }\thetbn c                                &$u_x$            &$u^2-1$                   & $\p_t$, $\p_x$, $e^{2t}(\p_t-2(u+1)\p_u)$, $e^{-2t}(\p_t+2(u-1)\p_u)$\\
\refstepcounter{tbn}\thetbn  \label{RDEPeculiarCase}         &$u_x(u_x+1)^{-3}$&$\varepsilon u$           & $\p_t$, $\p_x$, $e^{\varepsilon t}\p_u$, $e^{-\varepsilon t}(\p_t-\varepsilon u\p_x+\varepsilon u\p_u)$\\
\ref{RDE:f=u_x^k,g=0}b                                       &$|u_x|^n$        &$\varepsilon u$           & $\p_t$, $\p_x$, $e^{\varepsilon t}\p_u$, $nx\p_x+(n+2)u\p_u$, $e^{-\varepsilon nt}(\p_t+\varepsilon u\p_u)$\\
\hline
\multicolumn{4}{|c|}{the subclass~$\mathcal F$, \ up to $\bar G^\sim_{\mathcal F}$-equivalence}\\
\hline
\refstepcounter{tbn}\thetbn  \label{RDE:f=any,g=0}           &$\forall$        &0                         & $\p_t$, $\p_x$, $\p_u$, $2t\p_t+x\p_x+u\p_u$\\
\refstepcounter{tbn}\thetbn a\label{RDE:f=u_x^k,g=0}         &$|u_x|^n$        &0                         & $\p_t$, $\p_x$, $\p_u$, $2t\p_t+x\p_x+u\p_u$, $nt\p_t-u\p_u$\\
\refstepcounter{tbn}\thetbn  \label{RDE:f=e^u_x,g=0}         &$e^{u_x}$        &0                         & $\p_t$, $\p_x$, $\p_u$, $2t\p_t+x\p_x+u\p_u$, $t\p_t-x\p_u$\\
\refstepcounter{tbn}\thetbn  \label{RDE:fWithArctan,g=0}     &$\dfrac{e^{m\arctan u_x}}{u_x{}^{\!2}+1}$&0       & $\p_t$, $\p_x$, $\p_u$, $2t\p_t+x\p_x+u\p_u$, $mt\p_t+u\p_x-x\p_u$\\
\refstepcounter{tbn}\thetbn a\label{RDE:f=1,g=0}             &1                &0                         & $\p_t$, $\p_x$, $\p_u$, $2t\p_t+x\p_x$, $u\p_u$, $2t\p_x-xu\p_u$,\\[-.5ex]&&& $4t^2\p_t+4tx\p_x-(x^2+2t)u\p_u$, $h\p_u$
\\
\hline
\multicolumn{4}{|c|}{the subclass~$\mathcal H$, \ up to $G^\sim_{\mathcal R}$-equivalence, \ only cases additional to Case~\ref{RDE:f=1,g=0}a}\\
\hline
\ref{RDE:f=u_x^k,g=e^u}$'$                                  &1                 &$\varepsilon e^u$         & $\p_t$, $\p_x$, $2t\p_t+x\p_x-2\p_u$\\
\ref{RDE:f=u_x^k,g=u^m}a$'$                                 &1                 &$|u|^m$                   & $\p_t$, $\p_x$, $2(1-m)t\p_t+(1-m)x\p_x+2u\p_u$\\
\refstepcounter{tbn}\thetbn   \label{RDE:f=1,g=ulnu}        &1                 &$\varepsilon u\ln|u|$     & $\p_t$, $\p_x$, $e^{\varepsilon t}(2\p_x-\varepsilon xu\p_u)$, $e^{\varepsilon t}u\p_u$\\
\ref{RDE:f=1,g=0}b                                          &1                 &$\varepsilon u$           & $\p_t$, $\p_x$, $2t\p_t+x\p_x+2\varepsilon tu\p_u$, $u\p_u$, $2t\p_x-xu\p_u$,\\[-.5ex]&&& $4t^2\p_t+4tx\p_x-(x^2+2t-4\varepsilon t^2)u\p_u$, $h\p_u$\!\!\!\!\\
\cancel{\ref{RDE:f=1,g=0}c}                                 &1                 &$1$                       & $\p_t$, $\p_x$, $2t\p_t+x\p_x+2t\p_u$, $(u-t)\p_u$, $2t\p_x-x(u-t)\p_u$,\\[-.5ex]&&& $4t^2\p_t+4tx\p_x-\big((x^2{+}2t)(u-t)-4t^2\big)\p_u$, $h\p_u$
\\
\hline
\multicolumn{4}{|c|}{the subclass~$\mathcal L$, \ up to $\mathcal G^\sim_{\mathcal L}$-equivalence}\\
\hline
\refstepcounter{tbn}\thetbn  \label{RDE:f=1/u_x^2,g=any}    &$u_x^{-2}$        &$\forall$                 & $\p_t$, $\tilde h\p_x$, $x\p_x$\\
\refstepcounter{tbn}\thetbn  \label{RDE:f=1/u_x^2,g=1/u}    &$u_x^{-2}$        &$\mu u^{-1}$              & $\p_t$, $\tilde h\p_x$, $x\p_x$, $4t\p_t+2u\p_u$, $4t^2\p_t+4tu\p_u-\big(u^2+2(1+\nu)t\big)x\p_x$\\
\refstepcounter{tbn}\thetbn  \label{RDE:f=1/u_x^2,g=WithTan}&$u_x^{-2}$ &$\dfrac{1-2\nu\tan(\nu\ln|u|)}u$ & $\p_t$, $\tilde h\p_x$, $x\p_x$, $2t\p_t+u\p_u-xug\p_x$, $4t^2\p_t+4tu\p_u-(u^2+2t+2tug)x\p_x$\\
\cancel{\ref{RDE:f=1,g=0}d}                                 &$u_x^{-2}$        &0                         & $\p_t$, $\tilde h\p_x$, $x\p_x$, $2t\p_t+u\p_u$, $4t^2\p_t+4tu\p_u-(u^2+2t)x\p_x$, $2t\p_u-xu\p_x$
\\
\hline
\end{tabular}
\end{center} 
$n\in\mathbb R\setminus\{0,-2\}$, $m\in\mathbb R$.
$\varepsilon\ne0$, $\varepsilon=\pm1\bmod G^\sim_{\mathcal R}$.
$m\ne-1,0,1$ and $(n,m)\ne(1,2)$ in Case~\ref{RDE:f=u_x^k,g=u^m}a.
$m\ne0,1$ in Case~\ref{RDE:f=u_x^k,g=u^m}a$'$.
$n\ne\pm1$ in Case~\ref{RDE:f=u_x^k,g=u^m}b.
$n\geqslant-1\bmod\bar G^\sim_{\mathcal F}$ in Case~\ref{RDE:f=u_x^k,g=0}a.
$m\geqslant0\bmod G^\sim_{\mathcal R}$ in Case~\ref{RDE:fWithArctan,g=0}.
In~Cases~\ref{RDE:f=1,g=0}a, \ref{RDE:f=1,g=0}b and \ref{RDE:f=1,g=0}c,
the parameter function~$h=h(t,x)$ runs through the solution set of the corresponding equation,
$h_t=h_{xx}$, $h_t=h_{xx}+\varepsilon h$ and $h_t=h_{xx}+\varepsilon$, respectively.
In~Cases~\ref{RDE:f=1/u_x^2,g=any}--\ref{RDE:f=1/u_x^2,g=WithTan} and~\ref{RDE:f=1,g=0}d,
the real constant parameters~$\mu$ and~$\nu$ satisfy, up to point transformations, the constraints $\mu\geqslant1$, $\mu\ne 2$ and $\nu>0$,
and the parameter function~$\tilde h=\tilde h(t,u)$ runs through the solution set of
the corresponding linear equation, $\tilde h_t=\tilde h_{uu}-g(u)\tilde h_u$.
\\
Additional equivalence transformations between cases of the table are exhausted by the following:
\\
\ref{RDE:f=u_x,g=u^2}b${}\to{}$\ref{RDE:f=u_x,g=u^2}a: \
$\tilde t=\arctan t$, \
$\tilde x=x$, \
$\tilde u=(t^2+1)u+t$; \
\quad
\ref{RDE:f=u_x,g=u^2}c${}\to{}$\ref{RDE:f=u_x,g=u^2}a: \
$\tilde t=\frac12\ln|\frac{t-1}{t+1}|$, \
$\tilde x=x$, \
$\tilde u=( t^2-1)u+t$; \
\\
\ref{RDE:f=u_x^k,g=u^m}b${}\to{}$\ref{RDE:f=u_x^k,g=u^m}a$_{m=n+1}$, \ \ref{RDE:f=u_x^k,g=0}b${}\to{}$\ref{RDE:f=u_x^k,g=0}a: \
$\tilde t=e^{\varepsilon nt}/(\varepsilon n)$, \ $\tilde x=x$, \ $\tilde u=e^{-\varepsilon t}u$;  
\\
\ref{RDE:f=1,g=0}b${}\to{}$\ref{RDE:f=1,g=0}a: \ $\tilde t=t$, \ $\tilde x=x$, \ $\tilde u=e^{-\varepsilon t}u$;\quad
\ref{RDE:f=1,g=0}c${}\to{}$\ref{RDE:f=1,g=0}a: \ $\tilde t=t$, \ $\tilde x=x$, \ $\tilde u=u-t$;\quad
\ref{RDE:f=1,g=0}d${}\to{}$\ref{RDE:f=1,g=0}a: \ $\tilde t=t$, \ $\tilde x=u$, \ $\tilde u=x$.
\end{table}
\end{theorem}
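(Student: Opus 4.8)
The plan is to assemble Table~\ref{tab:RDEGroupClassification} block by block from the group classifications of the four subclasses in the decomposition $\mathcal R=\mathcal H\cup\mathcal L\cup\mathcal F\cup\mathcal C$, and then to reconcile the overlaps of these subclasses and the point equivalences that cross the block boundaries. As inputs I would take: the classification of the regular subclass~$\mathcal C$ up to $G^\sim_{\mathcal R}$-equivalence supplied by Lemma~\ref{pro:RDEsGroupClassificationOfC}; the classification of the semilinear subclass~$\mathcal H$ up to $G^\sim_{\mathcal R}$-equivalence from~\cite{Dorodnitsyn1982} (with the presentation of~\cite{Ames1994V1}); the classification of the filtration subclass~$\mathcal F'$ up to $G^\sim_{\mathcal F'}$-equivalence from~\cite{AkhatovGazizovIbragimov1987,AkhatovGazizovIbragimov1991}; and the classification of the class of Kolmogorov equations up to general point equivalence from~\cite{PopovychKunzingerIvanova2008}. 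Since every equation of~$\mathcal R$ with a Lie-symmetry extension beyond the kernel lies in one of the four subclasses, completeness of the merged list is automatic once each input list is in place; the substance of the proof is that the cases can be organized into a single non-redundant table with the equivalence used in each block as stated.

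Second, I would carry out the two reductions that bring~$\mathcal F$ and~$\mathcal L$ into the table. For~$\mathcal F$: by the discussion following Proposition~\ref{pro:RDEsUsualEquivGroupOfF}, the family of transformations from~$\bar G^\sim_{\mathcal F}$ with $(t,x,u)$-components $\tilde t=t$, $\tilde x=x$, $\tilde u=u-gt$ maps~$\mathcal F$ onto~$\mathcal F'$, and since each element of~$G^\sim_{\mathcal F'}$ extends through this family to an element of~$\bar G^\sim_{\mathcal F}$, the classification of~$\mathcal F$ up to $\bar G^\sim_{\mathcal F}$-equivalence is in bijection with the classification of~$\mathcal F'$ up to $G^\sim_{\mathcal F'}$-equivalence; this yields Cases~\ref{RDE:f=any,g=0}--\ref{RDE:f=1,g=0}a. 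For~$\mathcal L$: every equation is sent by the hodograph transformation $\tilde t=t$, $\tilde x=u$, $\tilde u=x$ to the Kolmogorov equation~\eqref{eq:RDEsKolmogorovEqs}, so pulling back the list of~\cite{PopovychKunzingerIvanova2008} through this transformation and through the reparameterization $f=cu_x^{-2}$ produces Cases~\ref{RDE:f=1/u_x^2,g=any}--\ref{RDE:f=1/u_x^2,g=WithTan} together with the case $f=u_x^{-2}$, $g=0$; the equivalence inherited in this block is the groupoid $\mathcal G^\sim_{\mathcal L}$, consistent with the general-point-equivalence classification of~\cite{PopovychKunzingerIvanova2008}.

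Third, I would eliminate the redundancies caused by the non-disjointness of the union. The only nonempty intersections are $\mathcal H\cap\mathcal F=\{f=\const\ne0,\ g=0\}$ (the linear heat equation, Case~\ref{RDE:f=1,g=0}a) and $\mathcal L\cap\mathcal F=\{f=cu_x^{-2},\ g=0\}$. Each must appear only once, so I would place both in the~$\mathcal F$ block, list in the~$\mathcal H$ block only the Lie-symmetry extensions additional to Case~\ref{RDE:f=1,g=0}a, and note—as in the crossed-out entry—that the $\mathcal L$-form $f=u_x^{-2}$, $g=0$ is identified with Case~\ref{RDE:f=1,g=0}a via the hodograph $\tilde t=t$, $\tilde x=u$, $\tilde u=x$. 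I would then verify by direct substitution that each transformation in the ``additional equivalence transformations'' list at the foot of the table maps its source equation onto its target; these record the phenomenon, already signalled in the Introduction, that certain equations of~$\mathcal C$ are point-equivalent to simpler equations of~$\mathcal F$ and certain equations of~$\mathcal L$ to equations of~$\mathcal H$—for instance the linearizing substitution $\tilde u=e^{-\varepsilon t}u$ (which removes a linear source, giving \ref{RDE:f=1,g=0}b${}\to{}$\ref{RDE:f=1,g=0}a and \ref{RDE:f=u_x^k,g=0}b${}\to{}$\ref{RDE:f=u_x^k,g=0}a) and the M\"obius-type map $\tilde t=\arctan t$, $\tilde u=(t^2+1)u+t$ (giving \ref{RDE:f=u_x,g=u^2}b${}\to{}$\ref{RDE:f=u_x,g=u^2}a).

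The hard part will be the joint \emph{inequivalence} check across blocks: within a block it is inherited verbatim from the corresponding subclass classification with the equivalence group fixed for that block, but between blocks the defining constraints $f_{u_x}=0$, $(u_x{}^{\!2}f)_{u_x}=0$, $g_u=0$ of the subclasses are \emph{not} preserved by general point transformations between equations of~$\mathcal R$—this is exactly why the linearizing and M\"obius-type maps above cross block boundaries—so one cannot simply invoke the subclass lists. I would settle this using the partial classification of admissible transformations of~$\mathcal R$ (the same ingredient needed for Theorem~\ref{thm:RDEGroupClassificationUpToGenPointEquiv}): for each pair of candidate cases one inspects whether the classifying equations~\eqref{eq:RDEsDetEqForAdmTrans1}--\eqref{eq:RDEsDetEqForAdmTrans3}, specialized to the two arbitrary-element tuples, admit a solution for the transformation components $(T,X,U)$, records such a solution when it exists (these are precisely the entries of the ``additional equivalence transformations'' list), and otherwise concludes inequivalence. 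Combining this with the within-block inequivalences completes the argument.
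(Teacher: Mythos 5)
Your proposal is correct and follows essentially the same route as the paper: Theorem~\ref{thm:RDEGroupClassification} is obtained there precisely by merging the classification of~$\mathcal C$ from Lemma~\ref{pro:RDEsGroupClassificationOfC}, the known classifications of~$\mathcal H$ and~$\mathcal F'$ (the latter lifted to~$\mathcal F$ via the $\tilde u=u-gt$ family in~$\bar G^\sim_{\mathcal F}$), and the Kolmogorov-equation classification pulled back by the hodograph transformation, with the overlaps $\mathcal H\cap\mathcal F$ and $\mathcal L\cap\mathcal F$ absorbed into the~$\mathcal F$ block. The cross-block inequivalence and the completeness of the footnoted additional equivalence transformations are handled by the paper exactly as you propose—combining algebraic invariants of the maximal Lie invariance algebras with the direct method applied to the classifying equations~\eqref{eq:RDEsExprForSourceArbitraryElementF}--\eqref{eq:RDEsExprForSourceArbitraryElementG}—though the paper packages that part as the proof of Theorem~\ref{thm:RDEGroupClassificationUpToGenPointEquiv}.
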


Here $\mathcal G^\sim_{\mathcal L}$ denotes the equivalence groupoid of the class~$\mathcal L$.

\medskip

\noindent{\bf Notation for Table~\ref{tab:RDEGroupClassification}.}
Numbers with the same Arabic numerals and different Roman letters
correspond to cases that are equivalent with respect to additional equivalence transformations.
Explicit formulas for these transformations are presented in the footnote of Table~\ref{tab:RDEGroupClassification}.
The cases that are numbered with different numbers in Arabic numerals are reciprocally inequivalent with respect to point transformations.
Lie invariance algebras presented in Cases~\ref{RDE:f=any,g=any}, \ref{RDE:f=any,g=u} and~\ref{RDE:f=any,g=1/u}
are maximal only if the corresponding tuples of the arbitrary elements~$(f,g)$ are $G^\sim_{\mathcal R}$-inequivalent
to those from other cases.

The (usual) equivalence groups of the subclasses~$\mathcal H$ and~$\mathcal C$
coincide with each other and with the group~$G^\sim_{\mathcal R}$.
This is why within the class~$\mathcal R$,
Cases~\ref{RDE:f=u_x^k,g=e^u}$'$ and~\ref{RDE:f=u_x^k,g=u^m}a$'$
can be attached to Cases~\ref{RDE:f=u_x^k,g=e^u} and~\ref{RDE:f=u_x^k,g=u^m}a
by allowing the value $n=0$ in the latter cases.

Crossing out the numbers of Cases~\ref{RDE:f=1,g=0}c and~\ref{RDE:f=1,g=0}d indicates
that they are implicitly presented in the course of listing Lie-symmetry extensions for the subclass~$\mathcal F$
since both of them are $\bar G^\sim_{\mathcal F}$-equivalent to Case~\ref{RDE:f=1,g=0}a.

\begin{remark}
We should carefully treat arguments of logarithm and bases of powers with noninteger exponents.
The condition of positivity is justified for~$u$ by physical arguments
but this is not the case for~$u_x$.
Moreover, assuming $u$ positive makes impossible the change of the sign of~$u$,
which results in the disappearance of some discrete equivalence transformations.
This is why we use the absolute values of~$u$, $u_x$ and similar expressions
as arguments of logarithm and as bases of related powers with general real exponents.
This is not necessary for some particular exponents (more specifically, integer exponents and rational exponents with odd denominators)
but replacing an absolute value by the value itself in a power base
may lead to modifying the gauging of the coefficient of this power
since such a replacement may affect the action of discrete equivalence transformations on power coefficients.
\end{remark}

\begin{remark}
To construct an exhaustive list of $G^\sim_{\mathcal R}$-inequivalent Lie-symmetry extensions for the subclass~$\mathcal F$,
we should extend Cases~\ref{RDE:f=any,g=0}--
\ref{RDE:f=1,g=0} using transformations from the generalized equivalence group~$\bar G^\sim_{\mathcal F}$ of~$\mathcal F$
(or, equivalently, from the effective generalized equivalence group~$\hat G^\sim_{\mathcal F}$, which is more convenient)
and then gauge arising parameters by elements of~$G^\sim_{\mathcal R}$.
In each of the $G^\sim_{\mathcal R}$-inequivalent cases constructed, we can set $g\in\{0,1\}\bmod G^\sim_{\mathcal R}$,
and a complete list of $G^\sim_{\mathcal R}$-inequivalent values of the arbitrary element~$f$ is exhausted by
the general value (the extended Case~\ref{RDE:f=any,g=0}) and
\[
|u_x+\delta|^n,\quad
\frac{|u_x+\mu|^n}{|u_x+\nu|^{n+2}};\quad
e^{u_x},\quad
\frac{e^{(u_x+\lambda)^{-1}}}{(u_x+\lambda)^2};\quad
\frac{e^{m\arctan(u_x+\lambda)}}{(u_x+\lambda)^2+1};\quad
1,\quad
(u_x+\delta)^{-2}.
\]
Here $n\in\mathbb R\setminus\{0,-2\}$, $\delta\in\{0,1\}\bmod G^\sim_{\mathcal R}$,
$\lambda,\mu,\nu,m\in\mathbb R$,
$\mu\ne\nu$, and one nonzero constant among $\mu$ and $\nu$ can be set to be equal~$1$ modulo $G^\sim_{\mathcal R}$-equivalence,
$m\geqslant0\bmod G^\sim_{\mathcal R}$.
Semicolons in the above displayed equation separate the extensions of
Cases~\ref{RDE:f=u_x^k,g=0}, \ref{RDE:f=e^u_x,g=0}, \ref{RDE:fWithArctan,g=0} and \ref{RDE:f=1,g=0}, respectively.
In the course of implementing the above procedure for Case~\ref{RDE:fWithArctan,g=0},
we should take into account the formula for sum of arctangents,
\begin{gather*}
\arctan y+\arctan z=\left\{\begin{array}{ll}
\arctan\dfrac{y+z}{1-yz}+\dfrac\pi2(\sgn y)(1+\sgn(1-yz)),\quad&  \mbox{if}\quad yz\ne1,\\
\dfrac\pi2\sgn y,&  \mbox{if}\quad yz=1.
\end{array}
\right.
\end{gather*}
\end{remark}

\begin{remark}
Recall that the class~$\mathcal L$ is similar to the class~$\mathcal K$ of Kolmogorov equations
of the general form~\eqref{eq:RDEsKolmogorovEqs} with respect to a point transformation,
which is the hodograph transformation $\tilde t=t$, $\tilde x=u$, $\tilde u=x$.
This is why the groups classifications of the class~$\mathcal L$
up to $\mathcal G^\sim_{\mathcal L}$- and $G^\sim_{\mathcal L}$-equivalences reduce to
their counterparts for the class~$\mathcal K$;
see the theoretical background on mappings between classes of differential equations
that are generated by point transformations in~\cite{VaneevaPopovychSophocleous2009}.
For the part of Table~\ref{tab:RDEGroupClassification} related to the class~$\mathcal L$,
we use a complete list of inequivalent (up to general point equivalence) Lie-symmetry extensions
that is constructed for the general class of (1+1)-dimensional Kolmogorov equations
in~\cite[Corollary~7]{PopovychKunzingerIvanova2008}
and coincides with such a list for the class~$\mathcal K$.
The reason for this is that a similar list for the group classification of the class~$\mathcal K$
up to the equivalence generated by its equivalence group
is too cumbersome~\cite{OpanasenkoPopovych2018}.
\end{remark}

\begin{theorem}\label{thm:RDEGroupClassificationUpToGenPointEquiv}
A complete list of $\mathcal G^\sim_{\mathcal R}$-inequivalent (i.e., inequivalent up point transformations)
cases of Lie-symmetry extensions in the class~$\mathcal R$
is exhausted by Cases
\ref{RDE:f=any,g=u}, \ref{RDE:f=any,g=1/u},
\ref{RDE:f=u_x^k,g=e^u}${}\cup{}$\ref{RDE:f=u_x^k,g=e^u}$'$, \ref{RDE:f=u_x^k,g=u^m}a${}\cup{}$\ref{RDE:f=u_x^k,g=u^m}a$'$,
\ref{RDE:f=1/(u_x+1),g=u}, \ref{RDE:f=u_x,g=u^2}a, \ref{RDEPeculiarCase},
\ref{RDE:f=any,g=0}, \ref{RDE:f=u_x^k,g=0}a, \ref{RDE:f=e^u_x,g=0}, \ref{RDE:fWithArctan,g=0}, \ref{RDE:f=1,g=0}a,
and \ref{RDE:f=1,g=ulnu}--%
\noprint{\ref{RDE:f=1/u_x^2,g=any}, \ref{RDE:f=1/u_x^2,g=1/u}, }\ref{RDE:f=1/u_x^2,g=WithTan}
of Table~\ref{tab:RDEGroupClassification}.
\end{theorem}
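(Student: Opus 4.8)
The plan is to deduce Theorem~\ref{thm:RDEGroupClassificationUpToGenPointEquiv} from Theorem~\ref{thm:RDEGroupClassification} by passing from the several subclass-adapted equivalences used in Table~\ref{tab:RDEGroupClassification} to the single relation of general point equivalence, i.e.\ $\mathcal G^\sim_{\mathcal R}$-equivalence. Since $G^\sim_{\mathcal R}$-, $\bar G^\sim_{\mathcal F}$- and $\mathcal G^\sim_{\mathcal L}$-equivalence each imply $\mathcal G^\sim_{\mathcal R}$-equivalence, the list of Table~\ref{tab:RDEGroupClassification} is a refinement of the desired one, and the proof splits into two tasks: (i)~exhibit all point-transformation identifications among the cases of the table, and (ii)~prove that the cases surviving these identifications are pairwise $\mathcal G^\sim_{\mathcal R}$-inequivalent.

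For task~(i) I would first discard Case~\ref{RDE:f=any,g=any}, which carries only the kernel algebra~$\langle\p_t,\p_x\rangle$ and hence is not a Lie-symmetry extension. The families of point transformations recorded in the footnote of Table~\ref{tab:RDEGroupClassification} --- time reparametrizations such as $\tilde t=\arctan t$, the hodograph transformation, and exponential rescalings of~$u$ --- are admissible within~$\mathcal R$ but do not preserve the relevant subclasses; they identify Cases~\ref{RDE:f=u_x,g=u^2}b and~\ref{RDE:f=u_x,g=u^2}c with~\ref{RDE:f=u_x,g=u^2}a, Case~\ref{RDE:f=u_x^k,g=u^m}b with~\ref{RDE:f=u_x^k,g=u^m}a, Case~\ref{RDE:f=u_x^k,g=0}b with~\ref{RDE:f=u_x^k,g=0}a, and Cases~\ref{RDE:f=1,g=0}b, \ref{RDE:f=1,g=0}c, \ref{RDE:f=1,g=0}d with~\ref{RDE:f=1,g=0}a. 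Moreover, the $\mathcal C$--$\mathcal H$ boundary is crossed precisely by putting $n=0$, so that Cases~\ref{RDE:f=u_x^k,g=e^u} and~\ref{RDE:f=u_x^k,g=e^u}$'$ coalesce, as do Cases~\ref{RDE:f=u_x^k,g=u^m}a and~\ref{RDE:f=u_x^k,g=u^m}a$'$. After these reductions exactly the cases listed in the theorem remain as a collection of representatives.

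For task~(ii) I would invoke a partial classification of admissible transformations of the class~$\mathcal R$, using the structure established in Section~\ref{sec:RDEEquivalenceTransformations}: the $t$-component of any such transformation depends on~$t$ alone, and, once the source and target arbitrary elements are fixed, the remaining components are constrained by the classifying equations~\eqref{eq:RDEsDetEqForAdmTrans1}--\eqref{eq:RDEsDetEqForAdmTrans3} (supplemented, for target equations lying in~$\mathcal F'$ or~$\mathcal L$, by the extra classifying equations noted in that section). For each pair of candidate cases I would first compare coarse point invariants, above all the dimension and the isomorphism type of the maximal Lie invariance algebra; this separates off the infinite-dimensional Cases~\ref{RDE:f=1,g=0}a and~\ref{RDE:f=1/u_x^2,g=any}--\ref{RDE:f=1/u_x^2,g=WithTan}, singles out Case~\ref{RDE:f=u_x,g=u^2}a through its $\mathfrak{sl}(2,\mathbb R)$ subalgebra acting on~$t$, and distinguishes most of the remaining entries. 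For the finitely many pairs that share the same invariants, I would substitute the corresponding pairs of~$(f,g)$ into~\eqref{eq:RDEsDetEqForAdmTrans1}--\eqref{eq:RDEsDetEqForAdmTrans3} and split with respect to~$u_x$ and~$u$, exploiting the transcendental shape of the arbitrary elements (powers with distinct exponents, exponentials, the $\arctan$-type function) to derive a contradiction.

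The step I expect to be the main obstacle is task~(ii): carrying out the inequivalence verification efficiently, in particular for pairs of cases that belong to different subclasses of~$\mathcal R$ yet have Lie invariance algebras of equal dimension and the same isomorphism type, and for the parametric families among Cases~\ref{RDE:f=u_x^k,g=e^u}, \ref{RDE:f=u_x^k,g=u^m}, \ref{RDE:fWithArctan,g=0}, \ref{RDE:f=1/u_x^2,g=1/u} and~\ref{RDE:f=1/u_x^2,g=WithTan}, where one must also show that distinct parameter values are point-inequivalent and fix the residual gauges (such as $m\geqslant0$, $\mu\geqslant1$, $\nu>0$) recorded in the footnote of Table~\ref{tab:RDEGroupClassification}. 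These are delicate but finite splittings of the classifying equations rather than conceptual difficulties.
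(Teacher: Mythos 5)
Your proposal follows essentially the same route as the paper's proof: merge cases by the additional equivalence transformations recorded in the footnote of Table~\ref{tab:RDEGroupClassification} (including the coalescence of Cases~\ref{RDE:f=u_x^k,g=e^u}, \ref{RDE:f=u_x^k,g=u^m}a with their primed counterparts at $n=0$), separate most survivors by the dimension and Mubarakzyanov isomorphism type of the maximal Lie invariance algebras, and settle the residual pairs and parametric families by the direct method applied to the determining relations \eqref{eq:RDEsExprForSourceArbitraryElementF}--\eqref{eq:RDEsExprForSourceArbitraryElementG}. Two remarks only: the paper additionally uses $\dim\pr_t\mathfrak g$ as a point invariant, which is genuinely needed to separate, e.g., Cases~\ref{RDE:f=1/(u_x+1),g=u} and~\ref{RDE:f=1,g=ulnu} (both realizing $A_{4.8}^0$), and it delegates the inequivalence and gauging of $\mu,\nu$ in Cases~\ref{RDE:f=1/u_x^2,g=1/u}--\ref{RDE:f=1/u_x^2,g=WithTan} to the known classification of Kolmogorov equations rather than re-splitting; otherwise the computations you defer in task~(ii) --- in particular the $u_x$-splitting for $f=|u_x|^n$ and $f=1$ that forces $\tilde n\in\{n,-n-2\}$ and fixes the sign of~$\varepsilon$ --- are precisely what constitutes the body of the paper's proof, so your plan is sound but leaves its hardest part unexecuted.
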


\begin{proof}
To prove the $\mathcal G^\sim_{\mathcal R}$-inequivalence of the cases listed in this theorem,
we first use the fact that the maximal Lie invariance algebras of similar equations are similar realizations of isomorphic Lie algebras.
In particular, such algebras are of the same dimension.
Since the $t$-component of any admissible transformation in the class~$\mathcal R$ depends only on~$t$,
the dimension of the projection of the maximal Lie invariance algebra~$\mathfrak g$
of an equation from the class~$\mathcal R$ to the space with coordinate~$t$,
$\dim\pr_t\mathfrak g$, is one more invariant of~$\mathcal G^\sim_{\mathcal R}$.
We identify the structure of the associated maximal Lie invariance algebras, when they are finite dimensional,
according to Mubarakzyanov's classification of real low-dimensional Lie algebras~\cite{Mubarakzyanov1963a,Mubarakzyanov1963b}.
We use the notation of these algebras from~\cite{BoykoPateraPopovych2006,PopovychBoykoNesterenkoLutfullin2005},
additionally reparameterizing families of algebras by homogeneous parameters if this is convenient.

\medskip\par\noindent
$\dim\mathfrak g=2$.
Case~\ref{RDE:f=any,g=any}:     $\mathfrak g\simeq2A_1$;

\medskip\par\noindent
$\dim\mathfrak g=3$.
Case~\ref{RDE:f=any,g=u}:       $\mathfrak g\simeq A_{2.1}\oplus A_1$, $\dim\pr_t\mathfrak g=1$; \
Case~\ref{RDE:f=any,g=1/u}:     $\mathfrak g\simeq A_{3.4}^{[2:1]}$; \
Case~\ref{RDE:f=u_x^k,g=e^u}:   $\mathfrak g\simeq A_{3.3}$ if $n=-1$, and $\mathfrak g\simeq A_{3.4}^{[(n+2):1]}$ otherwise; \
Case~\ref{RDE:f=u_x^k,g=u^m}:   $\mathfrak g\simeq A_{2.1}\oplus A_1$, $\dim\pr_t\mathfrak g=2$ if $m=n+1$, $\mathfrak g\simeq A_{3.3}$ if $(1-m)(n+2)=n+1-m$, and $\mathfrak g\simeq A_{3.4}^{[(1-m)(n+2):(n+1-m)]}$ otherwise;

\medskip\par\noindent
$\dim\mathfrak g=4$.
Case~\ref{RDE:f=1/(u_x+1),g=u}: $\mathfrak g\simeq A_{4.8}^0$, $\dim\pr_t\mathfrak g=2$; \
Case~\ref{RDE:f=u_x,g=u^2}:     $\mathfrak g\simeq {\rm sl}_2(\mathbb R)\oplus A_1$; \
Case~\ref{RDEPeculiarCase}:     $\mathfrak g\simeq A_{3.4}^{[-1:1]}\oplus A_1$; \
Case~\ref{RDE:f=any,g=0}:       $\mathfrak g\simeq A_{4.5}^{[2:1:1]}$; \
Case~\ref{RDE:f=1,g=ulnu}:      $\mathfrak g\simeq A_{4.8}^0$, $\dim\pr_t\mathfrak g=1$;

\medskip\par\noindent
$\dim\mathfrak g=5$.
Case~\ref{RDE:f=u_x^k,g=0}:     $\mathfrak g\simeq A_{5.33}^{n+2,-n}$;
Case~\ref{RDE:f=e^u_x,g=0}:     $\mathfrak g\simeq A_{5.34}^2$;
Case~\ref{RDE:fWithArctan,g=0}: $\mathfrak g\simeq A_{5.35}^{2,m}$.

\medskip\par\noindent
$\dim\mathfrak g=\infty$.
Cases~\ref{RDE:f=1,g=0}, \ref{RDE:f=1/u_x^2,g=any}, \ref{RDE:f=1/u_x^2,g=1/u} and  \ref{RDE:f=1/u_x^2,g=WithTan}
are then inequivalent to other cases.
The inequivalence of these cases to each other and the impossibility of further gauging of the parameters~$\mu$ and~$\nu$
follows from the same properties of similar Kolmogorov equations \cite[Corollary~7]{PopovychKunzingerIvanova2008}.

\medskip

We discuss only unobvious $\mathcal G^\sim_{\mathcal R}$-inequivalences.

Admissible transformations of the class~$\mathcal R$ preserve the $t$-direction.
This is why the order of parameters is essential in each of the above algebras from the family~$A_{3.4}$.
Since $n\notin\{0,-2\}$ and, in Case~\ref{RDE:f=u_x^k,g=u^m}, $m\ne-1$,
Cases \ref{RDE:f=u_x^k,g=e^u} and~\ref{RDE:f=u_x^k,g=u^m} cannot be mapped by elements of~$\mathcal G^\sim_{\mathcal R}$
to Case~\ref{RDE:f=any,g=1/u}.

Algebras $A_{5.35}^{2,m}$ and \smash{$A_{5.35}^{2,m'}$} with different to each other nonnegative~$m$ and~$m'$ are non-isomorphic.
Hence the equations of Case~\ref{RDE:fWithArctan,g=0} with such~$m$'s are $\mathcal G^\sim_{\mathcal R}$-inequivalent.

It is still necessary to prove that equations
from different Cases~\ref{RDE:f=u_x^k,g=e^u}${}\cup{}$\ref{RDE:f=u_x^k,g=e^u}$'$
and~\ref{RDE:f=u_x^k,g=u^m}a${}\cup{}$\ref{RDE:f=u_x^k,g=u^m}a$'$
are $\mathcal G^\sim_{\mathcal R}$-inequivalent to each other,
and different values of~$n$ in Case~\ref{RDE:f=u_x^k,g=e^u}${}\cup{}$\ref{RDE:f=u_x^k,g=e^u}$'$
and different values of~$(n,m)$ in Case~\ref{RDE:f=u_x^k,g=u^m}a${}\cup{}$\ref{RDE:f=u_x^k,g=u^m}a$'$
are $\mathcal G^\sim_{\mathcal R}$-inequivalent.
In Cases \ref{RDE:f=u_x^k,g=e^u}${}\cup{}$\ref{RDE:f=u_x^k,g=e^u}$'$,
\ref{RDE:f=1/(u_x+1),g=u}, \ref{RDEPeculiarCase} and \ref{RDE:f=1,g=ulnu},
the values $\varepsilon=1$ and $\varepsilon=-1$
should be $\mathcal G^\sim_{\mathcal R}$-inequivalent.
We also should check that in Case~\ref{RDE:f=u_x^k,g=0}a,
different values of~$n$ in $[-1,+\infty)\setminus\{0,2\}$ are $\mathcal G^\sim_{\mathcal R}$-inequivalent.
Further we use the direct method.

Consider a point transformation of independent and dependent variables of the general form~\eqref{eq:RDEsGenPointTrans}
that connects the source and the target equations from the class~$\mathcal R$, \eqref{eq:RDEsSource&TargetEqs}
with $f(u_x)=|u_x|^n$ and $\tilde f(\tilde u_{\tilde x})=|\tilde u_{\tilde x}|^{\tilde n}$,
where $n,\tilde n\in\mathbb R\setminus\{0,-2\}$.
We substitute these values of the arbitrary element~$f$ into the equation~\eqref{eq:RDEsExprForSourceArbitraryElementF}
and act on the obtained equation by the operator $\p_{u_x}-nu_x^{-1}$.
Rearranging the result, we derive the equation
\[
\tilde n\frac{\Delta}{D_xU}-n\frac{D_xX}{u_x}-2X_u=0,
\]
whose left hand side is rational in~$u_x$.
There are two cases depending on the value of~$U_u$:
\begin{enumerate}\itemsep=0ex
\item
$U_u=0$. Then $X_uU_x\ne0$, $X_x=0$ and $\tilde n=-n-2$.
\item
$U_u\ne0$. Then $U_x=0$, $X_x\ne0$, $X_u=0$ and $\tilde n=n$.
\end{enumerate}

In the first case, the equation~\eqref{eq:RDEsExprForSourceArbitraryElementF}
reduces to $T_t|X_u|^n/|U_x|^{n+2}=1$, which further implies that $X_{uu}=U_{xx}=0$.
Then $V_x=V_u=0$, and the equation~\eqref{eq:RDEsExprForSourceArbitraryElementG}
splits with respect to~$u_x$ into $g=-X_t/X_u$ and $\tilde g=U_t/T_t$,
i.e., $g$ and~$\tilde g$ are affine in~$u$.
Therefore, the source and the target equations fit
into Cases~\ref{RDE:f=u_x^k,g=0}a or~\ref{RDE:f=u_x^k,g=0}b up to \smash{$G^\sim_{\mathcal R}$-equivalence} and
merely into Case~\ref{RDE:f=u_x^k,g=0}a up to \smash{$\mathcal G^\sim_{\mathcal R}$-equivalence}.
This completes the proof for Case~\ref{RDE:f=u_x^k,g=0}.
Notes that algebras $A_{5.33}^{a,b}$ and \smash{$A_{5.33}^{a',b'}$} are isomorphic if and only if
\[
(a',b')\in\{(a,b),(b,a)\}
\cup\bigg\{\bigg(\frac1a,-\frac ba\bigg),\bigg(-\frac ba,\frac1a\bigg)\bigg\}_{{\rm if\,}a\ne0}\!\!
\cup\bigg\{\bigg(-\frac ab,\frac 1b\bigg),\bigg(\frac 1b,-\frac ab\bigg)\bigg\}_{{\rm if\,}b\ne0}.
\]
Hence inequivalence of equations within Case~\ref{RDE:f=u_x^k,g=0} cannot be checked algebraically
with comparing structure of the corresponding Lie invariance algebras.

The second case is analyzed in a similar way.
The equation~\eqref{eq:RDEsExprForSourceArbitraryElementF} reduces to the equation $T_t|U_u|^n/|X_x|^{n+2}=1$
implying $X_{xx}=U_{uu}=0$ and $T_t>0$.
Then again $V_x=V_u=0$, and the equation~\eqref{eq:RDEsExprForSourceArbitraryElementG}
splits with respect to~$u_x$ into $X_t=0$ and $g=(T_t\tilde g-U_t)/U_u$.
It follows from the last equation that point transformations cannot switch
an exponential value of~$g$ to a power one and conversely
as well as they cannot change the value of the exponent~$m$
when the arbitrary element~$g$ is of power form $g=|u|^m$.
Moreover, if $g=\varepsilon e^u$ and $\tilde g=\tilde\varepsilon e^{\tilde u}$,
then $U_u=1$ and one cannot alternate the sign of~$\varepsilon$.
In other words, there are no more additional equivalence transformations
related to equations of Cases~\ref{RDE:f=u_x^k,g=e^u} and~\ref{RDE:f=u_x^k,g=u^m}a.

Up to extending the argument tuples of~$g$ and~$\tilde g$,
the equations~\eqref{eq:RDEsExprForSourceArbitraryElementF} and~\eqref{eq:RDEsExprForSourceArbitraryElementG}
preserve their forms if we consider a wider class of differential equations,
allowing for the arbitrary element~$g$ to additionally depend on~$t$, $x$ and~$u_x$.
This is why for simplifying the analysis of Cases~\ref{RDE:f=1/(u_x+1),g=u} and~\ref{RDEPeculiarCase},
we map the associated equations, $u_t=(u_x+1)^{-1}u_{xx}+\varepsilon u$ and $u_t=u_x(u_x+1)^{-3}u_{xx}+\varepsilon u$,
to simpler equations, $u_t=u_x^{-1}u_{xx}+\varepsilon(u-x)$ and $u_t=u_xu_{xx}-\varepsilon uu_x$,
by the point transformations
$\tilde t=t$, $\tilde x=x$, $\tilde u=u+x$ and
$\tilde t=\e^{-1}e^{\e t}$, $\tilde x=x+u$, $\tilde u=e^{-\e t}u$,
respectively, where tildes indicate variables of target equations.
(The same mapping of Case~\ref{RDEPeculiarCase} is used in Section~\ref{sec:RDEsSolutions}, cf.\ the equation~\eqref{eq:RDEsPeculiarQuadratic}.)
Looking for a point transformation between target equations of the same form with different values of~$\varepsilon$,
well fits into the above second case.
For the first target equation, we have $n=\tilde n=-1$,  $g=\varepsilon(u-x)$ and $\tilde g=\tilde\varepsilon(\tilde u-\tilde x)$,
and collecting the coefficients of~$u$ in the equation~\eqref{eq:RDEsExprForSourceArbitraryElementG}
leads to the equation $\tilde\varepsilon=T_t\varepsilon$
implying the impossibility of alternating between the values $\varepsilon=1$ and~$\varepsilon=-1$ since $T_t>0$.
For the second target equation, we similarly have $n=\tilde n=1$,  $g=-\varepsilon uu_x$ and $\tilde g=-\tilde\varepsilon\tilde u\tilde u_{\tilde x}$,
and collecting the coefficients of~$uu_x$ in the equation~\eqref{eq:RDEsExprForSourceArbitraryElementG}
gives the equation $\tilde\varepsilon=T_tU_uX_x^{-1}\varepsilon$.
Jointly with the consequence $T_tU_u=X_x^3$ of the equation~\eqref{eq:RDEsExprForSourceArbitraryElementF},
this again implies the impossibility of alternating the sign of~$\varepsilon$.

Let now $f=1$ and $\tilde f=1$.
Then the equation~\eqref{eq:RDEsExprForSourceArbitraryElementF} splits
with respect to~$u_x$ into the equations $X_u=0$ and $T_t=X_x^2$,
which imply $X_xU_u\ne0$, $T_t>0$ and $X_{xx}=0$.
Collecting coefficients of different powers of~$u_x$ in the equation~\eqref{eq:RDEsExprForSourceArbitraryElementG}
results in the equations
$U_{uu}=0$, $2X_xU_{xu}+X_tU_u=0$ and $U_ug=T_t\tilde g+U_{xx}-U_t+X_tU_x/X_x$.
We differentiate the last equation twice with respect to~$u$
in order to derive its differential consequence $g_{uu}=T_tU_u\tilde g_{\tilde u\tilde u}$.
Therefore, elements of~$\mathcal G^\sim_{\mathcal R}$ preserve each of
Cases~\ref{RDE:f=u_x^k,g=e^u}$'$, \ref{RDE:f=u_x^k,g=u^m}a$'$ and \ref{RDE:f=1,g=ulnu}
modulo $G^\sim_{\mathcal R}$-equivalence including the value of~$m$.
Moreover, if $g=\varepsilon e^u$ and $\tilde g=\tilde\varepsilon e^{\tilde u}$,
then $U_u=1$ and $\tilde\varepsilon=T_t\varepsilon e^{u-U}$,
i.e., one cannot alternate the sign of~$\varepsilon$.
Analogously, for $g=\varepsilon u\ln|u|$ and $\tilde g=\tilde\varepsilon\tilde u\ln|\tilde u|$
we have $\tilde\varepsilon=T_t\varepsilon$.
This means that the set of
Cases~\ref{RDE:f=u_x^k,g=e^u}$'$, \ref{RDE:f=u_x^k,g=u^m}a$'$ and \ref{RDE:f=1,g=ulnu}
also admits no additional equivalence transformations.
\end{proof}

\section{Method of furcate splitting}\label{sec:RDEsFurcateSplit}

Although the literature dedicated to group classification of classes (of systems) of differential equations is vast,
a considerable part thereof is not trustworthy at all.
The main problem lies in the fact that often a brute-force approach is applied to tackle group classification problems.
Albeit this approach is definitely applicable (but certainly inefficient) for some simple classes of differential equations,
it requires a very thorough inspection to keep track of various cases that arise
upon integrating the corresponding determining equations for Lie symmetries,
otherwise it leads to missed, overlapping, repeated and/or incorrect classification cases.
These flaws happened in the majority of papers where the brute-force approach is used for group classification of differential equations.
Moreover, since this approach is often cumbersome, such papers do not contain details of solutions,
which makes pretty hard to find flaws in the computations afterwards.

While the class~$\mathcal C$ does not possess a necessary structure to employ the powerful algebraic method of group classification,
a brute force is still not the last resort.
The best choice here is the method of \emph{furcate splitting}.
This method was suggested and applied for the first time in~\cite{NikitinPopovych2001}
but its name appeared later in~\cite{IvanovaPopovychSophocleous2010}.
See also examples of its application in~\cite{PopovychIvanova2004,VaneevaKurikshaSophocleous2015,VaneevaPopovychSophocleous2012,VaneevaSophocleousLeach2015}.
It is basically a refinement of the direct approach to group classification.
In contrast to the chaotical integration of the classifying equations
and the irregular selection of cases with Lie-symmetry extensions within the framework of the direct approach,
the method of furcate splitting provides an algorithm for systematically finding such cases.
It is worth to note that the method is effective for classes with arbitrary elements depending on a few arguments.
So far, it was used for classes with arbitrary elements depending on at most two arguments~\cite{NikitinPopovych2001}.
This limit for manually using the method seems to be reasonable
in view of the necessity of verifying the compatibility condition of arising equations,
which becomes the more difficult problem the more arguments the arbitrary elements depend on.

The method of furcate splitting can be easily understood by examples
but the description of its standard version in the general setting is quite tedious,
not to mention its various modifications.
Nevertheless, below we present this description,
and in Section~\ref{sec:RDEsGroupClassificationOfC}
we illustrate possible complications for using the method
if arbitrary elements of the class under study depend on different arguments,
which is the case for the class~$\mathcal R$;
see also the related discussion in the end of Section~\ref{sec:RDEsConclusions}.
\looseness=-1

Consider a class of (systems of) differential equations
\[
\mathscr L|_{\mathscr S}=\{\mathscr L_\theta\colon L(x,u_{(r)},\theta(x,u_{(r)}))=0\mid \theta\in\mathscr S\}.
\]
Here and below in this section the notation differs from the other sections of the paper:
$x$ and $u$ are the tuples of independent and dependent variables, respectively,
constituting the coordinates for a coordinate space,
$u_{(r)}$ denotes the tuple of derivatives of~$u$ with respect to $x$ up to order~$r$,
which also includes the~$u$'s as the derivatives of order zero,
and $L$ is a tuple of differential functions in~$u$,
parameterized by the tuple of functions~$\theta=(\theta^1(x,u_{(r)}),\dots,\theta^p(x,u_{(r)}))$,
called the arbitrary elements of the class~$\mathscr L|_{\mathscr S}$.
The arbitrary-element tuple runs through the solution set~$\mathscr S$ of an auxiliary system~${\rm AS}$
of differential equations and inequalities in~$\theta$,
$S(x,u_{(r)},\theta_{(q)}(x,u_{(r)}))=0$ and, e.g., $\Sigma(x,u_{(r)},\theta_{(q)}(x,u_{(r)}))\ne0$,
where $(x,u_{(r)})$ is assumed to be the tuple of independent variables,
and the notation $\theta_{(q)}$ encompasses the partial derivatives of the arbitrary elements~$\theta$ up to order $q$ with respect to both~$x$ and~$u_{(r)}$.
See \cite{Popovych2006,PopovychKunzingerEshraghi2010} for more detailed explanations
on the notion of class of differential equations.

Denote by $\mathfrak g_\theta$ the maximal Lie invariance algebra of a system~$\mathscr L_\theta\in\mathscr L|_{\mathscr S}$.
The system of determining equations ${\rm DE}[\theta]$ for components of vector fields in~$\mathfrak g_\theta$
may split into two subsystems.
The equations of one of these subsystems, ${\rm CE}[\theta]$, essentially involve arbitrary elements and are called \emph{classifying equations},
and the other subsystem, ${\rm SE}$, does not involve~$\theta$ and is thus shared by all $\mathscr L_\theta\in\mathscr L|_{\mathscr S}$.
The subsystem~${\rm SE}$ can be integrated directly,
thus specifying the general form of Lie-symmetry vector fields of systems in~$\mathscr L|_{\mathscr S}$.
The subsystem ${\rm CE}[\theta]$ with a fixed value of the arbitrary-element tuple~$\theta$
is the system of specific equations for the components of vector fields in~$\mathfrak g_\theta$.
The solution set of the joint system~${\rm DE}[\theta]$ is a linear space in view of the fact
that ${\rm DE}[\theta]$ is a homogeneous linear system of differential equations
with respect to the components of Lie-symmetry vector fields.
Conversely, fixing a vector field~$Q$ on the space with coordinate $(x,u)$,
we obtain a system of differential equations for the values of~$\theta$
for which the corresponding system~$\mathscr L_\theta$ admits $Q$ as a Lie-symmetry vector field.

The system~${\rm AS}$ as a rule includes equations implying the independence of~$\theta$
on certain independent or dependent variables or, more generally, on certain functional combinations of these variables.
It is convenient to replace the coordinates $(x,u_{(r)})$ in the $r$th order jet space by the (local-)coordinate tuple~$z$
that is split into two subtuples, $\hat z$ and~$\check z$, such that
$\hat z$ is a maximal tuple of coordinates not involved in~$\theta$,
and each of the tuples~$\hat z$ and~$\check z$ is split into two subtuples,
$\hat z'$, $\hat z''$ and $\check z'$, $\check z''$,
where $\hat z'$ and $\check z'$ are maximal subtuples of~$\hat z$ and~$\check z$, respectively,
that appear in arguments of functions parameterizing the general solution of~${\rm SE}$.
We choose the tuple $\breve z'$ of functions of~$(x,u)$ such that
the joint tuple $(\hat z',\check z',\breve z')$ gives new coordinates in the space of~$(x,u)$.
In most of practical computations by the method of furcate splitting,
the optimal choice of $\hat z'$, $\hat z''$, $\check z'$, $\check z''$ and~$\breve z'$ is obvious,
cf.\ Section~\ref{sec:RDEsGroupClassificationOfC}.

We substitute the expressions for the components of Lie-symmetry vector fields that are specified by ${\rm SE}$
into ${\rm CE}[\theta]$ and split the obtained system with respect to $\hat z''$.
This leads to the system ${\rm CE}'[\theta]$ for the constants and functions parameterizing these expressions, 
which is supposed to be of the form 
\[
\sum_{i=i_{s-1}+1}^{i_s}\psi^iF^i=0, \quad s=1,\dots,|{\rm CE}'|,
\]
where $0=i_0<i_1<\dots<i_{|{\rm CE}'|}=N$ with $N\in\mathbb N$,
and $|{\rm CE}'|$ denotes the number of (independent) equations in~${\rm CE}'[\theta]$.
The functions $F^i$, $i=1,\dots,N$, are known differential functions of~$\theta$
with $\check z$ assumed to be the tuple of independent variables.
The coefficients $\psi^i$, $i=1,\dots,N$, may depend on $(\hat z',\check z',\breve z')$
and are in fact parameterized by an element of~$\mathfrak g_\theta$.

Let us imagine that the algebra~$\mathfrak g_\theta$ is known for each fixed~$\theta$.
Substituting values of the above parameters corresponding to a vector field in~$\mathfrak g_\theta$
and of the variables~$\hat z'$ into the system~${\rm CE}'[\theta]$
and varying the vector field within~$\mathfrak g_\theta$ and the variables~$\hat z'$
give a family (denoted by~${\rm TFF}$) of systems for~$\theta$ of the same general form called the \emph{template form},
\[
\sum_{i=i_{s-1}+1}^{i_s}a^iF^i=0, \quad s=1,\dots,|{\rm CE}'|,
\]
where coefficients $a^i$, $i=1,\dots,N$, may depend on $(\check z',\breve z')$.
In general, these coefficients are not precisely known at this stage but they may satisfy known constraints.

Let $k$ be the maximal number of systems in~${\rm TFF}$ such that
the rank of the collection of the coefficient tuples $\bar a^q=(a^{q1},\dots,a^{qN})$ of these systems
equals their number, $\mathop{\rm rank}A=k$ with $A:=(a^{qi})_{q=1,\dots,k}^{i=1,\dots,N}$.
It is obvious that $0\leqslant k\leqslant N$.
The condition of consistency of systems within~${\rm TFF}$ and of these systems with~${\rm AS}$
additionally majorizes the possible values of~$k$.

Supposing a certain value for~$k$ within the range of possible values of~$k$,
we study the consistency of~${\rm TFF}$.
The consideration can be partitioned into cases by choosing a sequence of
$k\times k$ minors of the matrix~$A$
in a way consistent with the relevant equivalence within the class~$\mathscr L|_{\mathscr S}$
and by successively supposing
that a minor in the sequence does not vanish and all the preceding minors are zero.
Some of the coefficients~$a^{qi}$ can be gauged using equivalence transformations
in view of the made supposition or derived constraints at this or further steps.
For each of the cases, one should
\begin{itemize}\itemsep=0ex
\item
derive the conditions implied by the consistency of~${\rm TFF}$ on the coefficients~$a^{qi}$
and additionally gauge these coefficients by equivalence transformations (if possible),
\item
split the system~${\rm CE}'[\theta]$ on the solution set of the joint system of~${\rm AS}$ and~${\rm TFF}$
with respect to parametric derivatives of~$\theta$,
\item
solve the obtained system of additional determining equations for still unspecified parameters in
Lie symmetry vector fields,
\item
derive the precise form of~${\rm TFF}$ for the computed algebra of vector fields
and check its consistency with the supposed value of~$k$ and other suppositions,
\item
if the consistency holds, solve the specified system~${\rm TFF}$ with respect to~$\theta$.
\end{itemize}

One has $k=0$ if and only if the classifying equations are identically satisfied in view of~${\rm SE}$,
which corresponds to the general case without Lie-symmetry extension.

The value $k=1$ is associated with minimal Lie-symmetry extensions.
For this value of~$k$, the second step of the above procedure is convenient to be carried out in a special way
since it means that the tuple $\bar\psi=(\psi^1,\dots,\psi^N)$ is proportional to $\bar a^1$, $\psi^i=\lambda a^{1i}$.
Here the multiplier~$\lambda$ may depend on~$(\hat z',\check z',\breve z')$ or, equivalently, on $(x,u)$,
and this dependence can be easily specified in view of the form of~$\bar\psi$ and~$\bar a^1$.

\section{Solution of group classification problem for regular subclass}\label{sec:RDEsGroupClassificationOfC}

In this section we prove the following assertion.

\begin{proposition}\label{pro:RDEsGroupClassificationOfC}
A complete list of $G^\sim_{\mathcal R}$-inequivalent Lie-symmetry extensions
in the class~$\mathcal C$ is exhausted by Cases~
\ref{RDE:f=any,g=u},
\ref{RDE:f=any,g=1/u},
\ref{RDE:f=u_x^k,g=e^u},
\ref{RDE:f=u_x^k,g=u^m}a,
\ref{RDE:f=u_x^k,g=u^m}b,
\ref{RDE:f=1/(u_x+1),g=u},
\ref{RDE:f=u_x,g=u^2}a,
\ref{RDE:f=u_x,g=u^2}b,
\ref{RDE:f=u_x,g=u^2}c,
\ref{RDEPeculiarCase} and
\ref{RDE:f=u_x^k,g=0}b
of Table~\ref{tab:RDEGroupClassification}.
\end{proposition}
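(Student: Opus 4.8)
The plan is to run the two-step version of the method of furcate splitting from Section~\ref{sec:RDEsFurcateSplit} on the determining system~\eqref{eq:RDEsDetEqSystem} with $\tau=\tau(t)$, working modulo $G^\sim_{\mathcal R}$-equivalence (Proposition~\ref{pro:RDEsUsualEquivGroupOfR}) and using the kernel algebra $\mathfrak g^\cap_{\mathcal C}=\langle\p_t,\p_x\rangle$. Since the two arbitrary elements $f$ and $g$ depend on the different arguments $u_x$ and $u$, the two steps are organized around the two determining equations. Equation~\eqref{eq:RDEsDetEqA}, which I would first rewrite as
\[
-\xi_u(u_x^2f)_{u_x}+(\eta_u-\xi_x)\,u_xf_{u_x}+\eta_xf_{u_x}+(\tau_t-2\xi_x)f=0,
\]
involves only $f=f(u_x)$; it is used in the first step to detect the admissible forms of $f$ and the constraints they impose on $(\tau,\xi,\eta)$. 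Equation~\eqref{eq:RDEsDetEqB} is then handled in a second round of furcate splitting, now with respect to $g$ and $g_u$.

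In the first step I would argue as follows. For fixed $f\in\mathcal C$, evaluating~\eqref{eq:RDEsDetEqA} at a point shows that the tuple $(-\xi_u,\,\eta_u-\xi_x,\,\eta_x,\,\tau_t-2\xi_x)$ must lie in the subspace $N_f\subseteq\mathbb R^4$ of linear relations among the functions $(u_x^2f)_{u_x}$, $u_xf_{u_x}$, $f_{u_x}$, $f$; since $f\ne0$ and, in~$\mathcal C$, $f_{u_x}\ne0$ and $(u_x^2f)_{u_x}\ne0$, one checks that $\dim N_f\in\{0,1,2\}$. If $N_f=\{0\}$, then $\xi_u=\eta_x=0$, $\eta_u=\xi_x$, $\tau_t=2\xi_x$; integrating these, substituting into~\eqref{eq:RDEsDetEqB} and splitting, one is left to decide whether $g$ is affine, of the form $c/(u+d)$, or generic, which gives Cases~\ref{RDE:f=any,g=u}, \ref{RDE:f=any,g=1/u} and~\ref{RDE:f=any,g=any}. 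If $\dim N_f\geqslant1$, a spanning tuple $(c_1,c_2,c_3,c_4)$ of~$N_f$ yields the first-order linear ODE $(c_1u_x^2+c_2u_x+c_3)f_{u_x}+(2c_1u_x+c_4)f=0$; integrating it by the discriminant of its leading polynomial, discarding the solutions lying in $\mathcal H$ ($f_{u_x}=0$) or $\mathcal L$ ($f=cu_x^{-2}$) and reducing modulo $G^\sim_{\mathcal R}$, the candidates for $f$ are $|u_x|^n$, $|u_x+a|^b$ with $a\ne0$ (the value $b=-2$ being the $\dim N_f=2$ subcase), $|u_x-r_1|^{q_1}|u_x-r_2|^{q_2}$ with $q_1+q_2=-2$ and $r_1\ne r_2$ real, $|u_x-r|^{-2}e^{\gamma(u_x-r)^{-1}}$ with $\gamma\ne0$, $\bigl((u_x-p)^2+q^2\bigr)^{-1}e^{m\arctan((u_x-p)/q)}$, and $e^{\kappa u_x}$ with $\kappa\ne0$; for each of these I would write $N_f$ out explicitly, so that the constraints on $\xi_u$, $\eta_u-\xi_x$, $\eta_x$, $\tau_t-2\xi_x$ are known in terms of a single multiplier $\lambda=\lambda(t,x,u)$.

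In the second step, for each candidate $f$ I would substitute the constrained form of $(\tau,\xi,\eta)$ into~\eqref{eq:RDEsDetEqB}, split with respect to $u_x$, and run furcate splitting in $g$: the surviving classifying equation involves $g(u)$ and $g_u(u)$, and its template-form rank together with the consistency conditions pins down the $G^\sim_{\mathcal R}$-inequivalent $g$'s admitting extensions. I expect this to produce: for $f=|u_x|^n$, the pairings $g=\e e^u$ (Case~\ref{RDE:f=u_x^k,g=e^u}), $g=|u|^m$ (Case~\ref{RDE:f=u_x^k,g=u^m}a), $g=|u|^{n+1}+\e u$ (Case~\ref{RDE:f=u_x^k,g=u^m}b) and $g=\e u$ (Case~\ref{RDE:f=u_x^k,g=0}b), with the special value $n=1$, at which Case~\ref{RDE:f=u_x^k,g=u^m}b enlarges to the four-dimensional Cases~\ref{RDE:f=u_x,g=u^2}a,b,c ($g=u^2+c$, $c\in\{0,\pm1\}$ modulo $G^\sim_{\mathcal R}$); for $f=|u_x+1|^{-1}$ (the $b=-1$ subcase of $|u_x+a|^b$), $g=\e u$ (Case~\ref{RDE:f=1/(u_x+1),g=u}); for $f=u_x(u_x+1)^{-3}$ (the $\{q_1,q_2\}=\{1,-3\}$ product with a root at $u_x=0$), $g=\e u$ (Case~\ref{RDEPeculiarCase}); and for every remaining candidate, equation~\eqref{eq:RDEsDetEqB} forces either $g_u=0$, which is incompatible with~$\mathcal C$, or $\lambda=0$, in which case the extension collapses to one already covered by Cases~\ref{RDE:f=any,g=any}--\ref{RDE:f=any,g=1/u}. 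Finally I would collect the cases and check their pairwise $G^\sim_{\mathcal R}$-inequivalence with the usual $G^\sim_{\mathcal R}$-invariants (the dimension and structure of the maximal Lie invariance algebra, and $\dim\pr_t\mathfrak g$), as in the proof of Theorem~\ref{thm:RDEGroupClassificationUpToGenPointEquiv}.

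The hard part will be the bookkeeping in the second step. Because $f$ and $g$ depend on different arguments, after inserting the $f$-constrained vector fields one has to split~\eqref{eq:RDEsDetEqB} simultaneously with respect to $u_x$ and with respect to $g$, $g_u$, and then verify the compatibility of the overdetermined system that arises for $(\tau,\xi,\eta)$ and for $g$; the branching on special parameter values ($n=1$, $n=-1$, $m\in\{-1,0,1\}$, $(n,m)=(1,2)$, and $b=-1$ for $|u_x+a|^b$), where the algebra changes dimension or distinct normal forms coincide modulo $G^\sim_{\mathcal R}$- or $\mathcal G^\sim_{\mathcal R}$-equivalence, is the most error-prone point, and it is exactly where the complications announced at the end of Section~\ref{sec:RDEsFurcateSplit} show up. Proving that the ``non-power'' $f$'s other than $|u_x+1|^{-1}$ and $u_x(u_x+1)^{-3}$ admit no genuinely new Lie-symmetry extension in~$\mathcal C$ — rather than merely leading to an unwieldy $g$ — is what most essentially relies on the explicit form of~$N_f$ obtained in the first step.
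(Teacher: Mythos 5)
Your overall strategy---furcate splitting with respect to $f$ on \eqref{eq:RDEsDetEqA} via the space $N_f$ of linear relations among $(u_x{}^{\!2}f)_{u_x}$, $u_xf_{u_x}$, $f_{u_x}$, $f$, followed by a per-candidate furcate splitting with respect to $g$ on \eqref{eq:RDEsDetEqB}---is sound, your candidate list for $f$ is the correct integration of the quadratic template \eqref{eq:RDEsTemplateEqForF}, and the case list you predict is the right one. The organization does differ from the paper's: there, Lemma~\ref{lem:RDEsDetEqsForC} first proves $\xi_{xx}=\xi_{xu}=\xi_{uu}=\eta_{xx}=\eta_{xu}=\eta_{uu}=0$ for \emph{every} equation in~$\mathcal C$, Lemma~\ref{lem:RDEsSingularCasesInC} then shows that $\xi_u\ne0$ or $\eta_x\ne0$ forces $g$ to be linear and pins $f$ down to Cases~\ref{RDE:f=1/(u_x+1),g=u} and~\ref{RDEPeculiarCase} (the only place the quadratic template is needed), and only afterwards is the decoupled double furcate splitting run with the short template $a_2u_xf_{u_x}+a_4f=0$, so that the sole surviving candidate for $f$ is $|u_x|^n$ and your six families never have to be processed one by one.

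The genuine gap is in your second step. Step one extracts only first-order constraints on $(\tau,\xi,\eta)$ from \eqref{eq:RDEsDetEqA}: when $\dim N_f\geqslant1$ the first derivatives are expressed through the multiplier(s) $\lambda(t,x,u)$, so the second derivatives $\xi_{uu},\xi_{xu},\xi_{xx},\eta_{uu},\eta_{xu},\eta_{xx}$ entering \eqref{eq:RDEsDetEqB} through the term $\left(-\xi_{uu}u_x^3+\cdots+\eta_{xx}\right)f$ are not yet determined, and ``substituting the constrained form of $(\tau,\xi,\eta)$'' does not dispose of them. For generic $f$ they are killed by splitting with respect to~$u_x$, but precisely for the candidates of the form $f=Q(u_x)/P(u_x)$ with $\deg Q\leqslant1$, $\deg P\leqslant3$---which include $|u_x+a|^{-1}$, $|u_x+a|^{-2}$, $u_x$ and the products with $q_1+q_2=-2$, i.e.\ exactly the sources of Cases~\ref{RDE:f=1/(u_x+1),g=u} and~\ref{RDEPeculiarCase} and their would-be competitors---the monomials $u_x^{\,j}f$ collide with $u_x$ and $1$, and the second derivatives couple to~$g$. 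The worst offender is $f=|u_x+a|^{-2}$, your $\dim N_f=2$ subcase: there \eqref{eq:RDEsDetEqA} leaves $\xi_{uu},\xi_{xu},\xi_{xx}$ completely free, and the paper has to use the representation $\xi=\xi^1(t,\omega)u+\xi^0(t,\omega)$ with $\omega=x+u$ (the general solution of $(\p_x-\p_u)^2\xi=0$) together with a dichotomy on whether $g$ is fractional linear to show that no new case arises; see \eqref{eq:RDEsIntermediateDetEqsForF=(u_x+c)^-2} and \eqref{eq:RDEsTemp1}--\eqref{eq:RDEsTemp2}. Your proposal asserts that all such candidates collapse to Cases~\ref{RDE:f=any,g=any}--\ref{RDE:f=any,g=1/u} or produce only the two special cases, but offers no mechanism for proving it and locates the difficulty in the branching over special parameter values; in fact this compatibility analysis of the non-affine possibilities is where the bulk of the actual proof lives and must be supplied before the second furcate splitting can be started.
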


Recall that the $G^\sim_{\mathcal C}$-equivalence coincides
with the restriction of the $G^\sim_{\mathcal R}$-inequivalence to the class~$\mathcal C$.

It turns out that additionally to $\tau_x=\tau_u=0$,
components of Lie-symmetry vector fields of equations in the class~$\mathcal C$
satisfy more determining equations not involving the arbitrary elements~$f$ and~$g$.

\begin{lemma}\label{lem:RDEsDetEqsForC}
Under the conditions $f_{u_x}\ne0$, $(u_x{}^{\!2}f)_{u_x}\ne0$ and $g_u\ne0$, the system~\eqref{eq:RDEsDetEqSystem} implies 
\begin{gather}\label{eq:RDEsSpecificDetEqsForClassC1}
\xi_{xx}=\xi_{xu}=\xi_{uu}=\eta_{xx}=\eta_{xu}=\eta_{uu}=0.
\end{gather}
\end{lemma}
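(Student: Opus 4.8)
The plan is to work with the determining equations~\eqref{eq:RDEsDetEqA} and~\eqref{eq:RDEsDetEqB}, using the three inequalities $f_{u_x}\ne0$, $(u_x{}^{\!2}f)_{u_x}\ne0$, $g_u\ne0$ characterizing~$\mathcal C$ to perform a careful splitting with respect to the arbitrary elements and their derivatives. We already know $\tau=\tau(t)$. The first equation~\eqref{eq:RDEsDetEqA} is linear in $f$ and $f_{u_x}$, of the shape $A(x,u,u_x)f_{u_x}+B(x,u,u_x)f=0$ where $A=-\xi_uu_x^2+(\eta_u-\xi_x)u_x+\eta_x$ and $B=-2\xi_uu_x+\tau_t-2\xi_x$. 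The idea is that, since $f$ is an \emph{arbitrary} element subject only to the auxiliary system~\eqref{eq:RDEsClassAuxiliarySystem} plus the two $\mathcal C$-inequalities, the ratio $f_{u_x}/f$ cannot be forced to equal a prescribed rational function of $u_x$ with coefficients depending on $(x,u)$ unless those coefficients already vanish in the appropriate sense. More precisely, $\eqref{eq:RDEsDetEqA}$ says $f_{u_x}/f=-B/A$ as functions of $u_x$ with $(x,u)$-dependent coefficients; differentiating with respect to $x$ and $u$ and using that $f$ depends on $u_x$ only, I would extract $(B/A)_x=(B/A)_u=0$, i.e.\ $B/A$ is independent of $(x,u)$. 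Writing this out as a rational identity in $u_x$ and clearing denominators, one gets polynomial identities in $u_x$ whose coefficients are combinations of $\xi_x,\xi_u,\xi_{xx},\xi_{xu},\xi_{uu},\eta_x,\eta_u,\eta_{xx},\eta_{xu},\eta_{uu}$ and $\tau_t$; splitting with respect to powers of $u_x$ yields a system of algebraic relations on these.

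The second equation~\eqref{eq:RDEsDetEqB} will do the rest. It has the form $P(x,u,u_x)f+Q(x,u,u_x)g+(\text{terms without }f,g)=0$, where the $f$-coefficient $P=-\xi_{uu}u_x^3+(\eta_{uu}-2\xi_{xu})u_x^2+(2\eta_{xu}-\xi_{xx})u_x+\eta_{xx}$ is exactly (minus) the expression I want to annihilate, and $Q=\eta_u-\tau_t$ multiplies $g$ while $\eta$ multiplies $g_u$. Here the plan is to split with respect to $g_u$ and $g$ first: since $g=g(u)$ with $g_u\ne0$ is essentially free, for a \emph{generic} equation in $\mathcal C$ the quantities $g$, $g_u$ and $1$ are functionally independent of $(x,u_x)$, so the coefficient of the highest structure must vanish — in particular the coefficient $P$ of $f$, which does not involve $g$, combined with the fact that $f$ is independent of $(x,u)$, forces $\xi_{uu}=0$, $\eta_{uu}-2\xi_{xu}=0$, $2\eta_{xu}-\xi_{xx}=0$, $\eta_{xx}=0$ after splitting~\eqref{eq:RDEsDetEqB} with respect to $u_x$, provided one first isolates $P f$ by differentiating with respect to $u$ (or $x$) to kill the $g$- and $g_u$-terms appropriately, using $\tau_{xx}=\tau_{xu}=0$. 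One must be slightly careful because $\eta$, $\xi$ are the objects we are solving for and $g$ can still conspire; the correct logic is that in the kernel analysis we are describing determining equations valid for \emph{all} admissible $(f,g)$, so any coefficient that survives the splittings must vanish identically as a function of the transformation components.

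Combining the two batches: from~\eqref{eq:RDEsDetEqB} (after eliminating $g,g_u$) I expect to get $\xi_{uu}=0$, $\eta_{xx}=0$, and two relations tying $\eta_{uu},\xi_{xu},\eta_{xu},\xi_{xx}$ together; from~\eqref{eq:RDEsDetEqA} (after imposing $(B/A)_x=(B/A)_u=0$ and splitting in $u_x$, using crucially that $f_{u_x}\ne0$ so the coefficient of $f_{u_x}$ is not free to cancel, and that $(u_x^2f)_{u_x}\ne0$ so $f$ is not a multiple of $u_x^{-2}$, excluding the degenerate proportionality) I expect to pick up $\xi_{uu}=0$ again together with enough independent relations — for instance that the coefficients of $u_x^2$, $u_x$, $1$ in $A$ cannot all be proportional to the corresponding coefficients forced by $B$ unless $\xi_{xx}=\xi_{xu}=\eta_{xx}=\eta_{xu}=0$ and then $\eta_{uu}=0$ follows from the surviving relation $\eta_{uu}=2\xi_{xu}=0$. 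So the six equations~\eqref{eq:RDEsSpecificDetEqsForClassC1} come out by assembling these.

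The main obstacle I anticipate is bookkeeping the case analysis around the \emph{two} $f$-inequalities. The condition $f_{u_x}\ne0$ rules out $f=\const$, and $(u_x^2f)_{u_x}\ne0$ rules out $f=c\,u_x^{-2}$; these are precisely the two special solutions of the ODE $f_{u_x}/f=-B/A$ that could otherwise make~\eqref{eq:RDEsDetEqA} hold with nonzero $A,B$. So the delicate point is to show that, \emph{unless} $A$ and $B$ are both identically zero in $u_x$ (which is what gives the extra determining equations via $A\equiv0$), the equation~\eqref{eq:RDEsDetEqA} together with the auxiliary constraints on $f$ is inconsistent with $f$ ranging over all of $\mathcal C$'s admissible elements — here one uses that for a \emph{fixed} equation in~$\mathcal C$, $f$ is a \emph{single} function, so~\eqref{eq:RDEsDetEqA} is not automatically an identity; rather, the statement of the lemma is about the system of determining equations shared by \emph{all} equations in~$\mathcal C$, i.e.\ the intersection over $\theta\in\mathcal C$, so I would phrase the argument as: take three (or more) equations in~$\mathcal C$ with suitably generic $f$'s (e.g.\ $f$ with $f_{u_x}/f$ not a rational function of the special shape $-B/A$ for any nonzero $A,B$ of bounded degree) to conclude $A\equiv B\equiv0$, $P\equiv0$, $Q=0$ as identities in the transformation components, and then read off~\eqref{eq:RDEsSpecificDetEqsForClassC1} plus $\eta_u=\tau_t$ (the latter not part of the lemma's claim but a by-product). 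Getting the genericity/splitting step airtight — that is, justifying that one may treat $g,g_u,1$ and the various powers of $u_x$ as linearly independent over the relevant function ring — is the crux; everything after that is linear algebra on the coefficients.
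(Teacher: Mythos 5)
There is a genuine gap, and it lies in the logical reading of the lemma rather than in any single computation. The lemma is a statement about each \emph{fixed} equation of the class~$\mathcal C$: for every individual pair $(f,g)$ satisfying the three inequalities, every solution $(\tau,\xi,\eta)$ of the determining system~\eqref{eq:RDEsDetEqSystem} for \emph{that} equation must satisfy~\eqref{eq:RDEsSpecificDetEqsForClassC1}. You instead argue ``in the kernel analysis we are describing determining equations valid for all admissible $(f,g)$'' and propose to take several equations with generic $f$'s and intersect, concluding $A\equiv B\equiv0$, $P\equiv0$, $Q=0$. That proves something about the kernel algebra $\mathfrak g^\cap_{\mathcal C}=\langle\p_t,\p_x\rangle$ (for which~\eqref{eq:RDEsSpecificDetEqsForClassC1} is trivial), not the lemma. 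The conclusions you expect to read off are in fact false for individual equations with Lie-symmetry extensions: for $f=(u_x+1)^{-1}$, $g=\varepsilon u$ (Case~\ref{RDE:f=1/(u_x+1),g=u}) the admitted vector field $e^{\varepsilon t}(\p_t+\varepsilon(u+x)\p_u)$ has $\eta_x\ne0$, so the coefficient $A$ of $f_{u_x}$ in~\eqref{eq:RDEsDetEqA} does not vanish; likewise $\eta_u=\tau_t$, which you list as a by-product, fails for the scaling symmetries of Case~\ref{RDE:f=u_x^k,g=u^m}a. Your remark that $f=\const$ and $f=cu_x^{-2}$ are ``precisely the two special solutions'' permitting~\eqref{eq:RDEsDetEqA} to hold with nonzero $A,B$ is also incorrect: every power, exponential and inverse-quadratic $f$ does so, and these are exactly the values that carry the extensions the classification is after.

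The paper's proof has to work pointwise in $(f,g)$ and therefore proceeds by a case analysis that your proposal does not contain. It forms two differential consequences of~\eqref{eq:RDEsDetEqSystem} whose left-hand sides are (quadratic in $u_x$)${}\cdot f$ and whose right-hand sides are free of $u_x$, and then splits into three cases according to whether $1/f$ is a polynomial of degree at most two in~$u_x$: if $(1/f)_{u_xu_xu_x}\ne0$ one can split with respect to $f$ and $u_x$ and is done; if $1/f$ is affine, $f=(u_x+\gamma)^{-1}$ modulo equivalence and direct substitution works; if $1/f$ is genuinely quadratic, $f=(u_x{}^{\!2}+2\beta u_x+\gamma)^{-1}$, and the degenerate subcase $\gamma=\beta^2$ (i.e.\ $f=(u_x+1)^{-2}$) forces a further analysis of~\eqref{eq:RDEsDetEqB} split according to whether $g$ is fractional linear. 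Your one sound observation --- that for a fixed equation the ratio $-B/A$ must be independent of $(x,u)$ because $f_{u_x}/f$ depends on $u_x$ only --- is the germ of furcate splitting, but by itself it yields only proportionality relations between the coefficient tuples, not the vanishing of the second-order derivatives of $\xi$ and $\eta$; extracting~\eqref{eq:RDEsSpecificDetEqsForClassC1} from it still requires running through the non-generic forms of $f$ (and, in the worst subcase, of $g$), which is exactly the content of the paper's proof and is absent from yours.
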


\begin{proof}
We consider two obvious differential consequences of the system~\eqref{eq:RDEsDetEqSystem}.
One of the consequences is derived by acting the operators $u_x\p_u+\p_x$ and ${}-\p_{u_x}$
on the equations~\eqref{eq:RDEsDetEqA} and~\eqref{eq:RDEsDetEqB}, respectively,
and summing the obtained equations, which gives
\begin{gather}\label{eq:RDEsDetEqC}
\big(\xi_{uu}u_x{}^{\!2}-2\eta_{uu}u_x-2\eta_{xu}-\xi_{xx}\big)f=\xi_t+\xi_ug.
\end{gather}
The other consequence is the sum of the equation~\eqref{eq:RDEsDetEqB}
with the equation~\eqref{eq:RDEsDetEqC} multiplied by~$u_x$,
\begin{gather}\label{eq:RDEsDetEqD}
\big((\eta_{uu}+2\xi_{xu})u_x{}^{\!2}+2\xi_{xx}u_x-\eta_{xx}\big)f=\eta g_u+(\tau_t-\eta_u)g-\eta_t.
\end{gather}
The equations~\eqref{eq:RDEsDetEqC} and~\eqref{eq:RDEsDetEqD} hint that the proof is partitioned into three cases,
depending on whether these equations imply conditions on~$f$ and what structures of such conditions are,
\[
1.\ (1/f)_{u_xu_xu_x}\ne0,\quad 2.\ (1/f)_{u_xu_xu_x}=0,\ (1/f)_{u_xu_x}\ne0, \quad\mbox{and}\quad 3.\ (1/f)_{u_xu_x}=0.
\]

In the first case, we can split the above differential consequences with respect to both $f$ and~$u_x$,
which directly leads to the required equations.

The third condition means that $f=(u_x+\gamma)^{-1}\bmod G^\sim_{\mathcal R}$.
(Recall that we suppose the inequality $f_{u_x}\ne0$.)
Substituting the expression for~$f$ into~\eqref{eq:RDEsDetEqA} and splitting with respect to~$u_x$
give the equations $\xi_u=0$, $\eta_u=\gamma(\tau_t-\xi_x)$ and $\eta_x=\tau_t-\xi_x$,
which imply $\eta_{xu}=\eta_{uu}=0$.
Using derived equations, we simplify the equation~\eqref{eq:RDEsDetEqC} and then treat it in the same way,
which in particular gives $\xi_{xx}=0$, and thus also $\eta_{xx}=0$.

In the second case, which is much more complicated than the other cases,
modulo $G^\sim_{\mathcal R}$-equivalence we can set $f=(u_x{}^{\!2}+2\beta u_x+\gamma)^{-1}$,
where $(\beta,\gamma)\ne(0,0)$ since $(u_x{}^{\!2}f)_{u_x}\ne0$.
Substituting the expression for~$f$ into~\eqref{eq:RDEsDetEqA}, splitting with respect to~$u_x$
and arranging the obtained equations, we get
\begin{gather}\label{eq:RDEsDetEqsForFujitaNonlinearity}
\eta_u=-\beta\xi_u+\frac12\tau_t,\quad
\eta_x=-\beta\xi_x+\frac{\beta}2\tau_t+(\beta^2-\gamma)\xi_u,\quad
(\beta^2-\gamma)(2\xi_x-\beta\xi_u-\tau_t)=0.
\end{gather}
The cross differentiation of the two first equations in~\eqref{eq:RDEsDetEqsForFujitaNonlinearity}
implies $(\beta^2-\gamma)\xi_{uu}=0$.
Then the separate differentiations of the last equation in~\eqref{eq:RDEsDetEqsForFujitaNonlinearity}
with respect to~$x$ and~$u$ successively give $(\beta^2-\gamma)\xi_{xu}=0$ and $(\beta^2-\gamma)\xi_{xx}=0$.
In view of these equations for~$\xi$,
the same differentiation of the two first equations in~\eqref{eq:RDEsDetEqsForFujitaNonlinearity} results in
$\eta_{xx}=-\beta\xi_{xx}$, $\eta_{xu}=-\beta\xi_{xu}$ and $\eta_{uu}=-\beta\xi_{uu}$.
Therefore, if $\gamma\ne\beta^2$, then we have the required determining equations.

Suppose that $\gamma=\beta^2$.
Then $\beta\ne0$, and modulo $G^\sim_{\mathcal R}$-equivalence we can set $\beta=1$, i.e., $f=(u_x+1)^{-2}$.
The general solution of the system $\eta_x=-\xi_x+\tau_t/2$, $\eta_u=-\xi_u+\tau_t/2$
with respect to~$\eta$ is $\eta=-\xi+\tau_t(u+x)/2+\eta^0(t)$,
where $\eta^0$ is an arbitrary smooth function of~$t$.
For these values of~$f$ and~$\eta$, the determining equation~\eqref{eq:RDEsDetEqB} takes the form
\[
\big((\xi_t+\xi_ug)u_x+\eta g_u+(\tau_t-\eta_u)g-\eta_t\big)(u_x+1)=\xi_{uu}u_x{}^{\!2}+2\xi_{xu}u_x+\xi_{xx}
\]
and splits with respect to~$u_x$ into the system
\begin{gather}\label{eq:RDEsIntermediateDetEqsForF=(u_x+c)^-2}
\xi_t+\xi_ug=\xi_{uu},\quad
\eta g_u+(\tau_t-\eta_u)g-\eta_t=\xi_{xx},\quad
\xi_{uu}-2\xi_{xu}+\xi_{xx}=0.
\end{gather}
The last equation can be represented as $(\p_x-\p_u)^2\xi=0$,
and hence its general solution is $\xi=\xi^1(t,\omega)u+\xi^0(t,\omega)$,
where $\xi^1$ and~$\xi^0$ are arbitrary smooth functions of~$t$ and $\omega=x+u$.
In view of the derived expressions for~$\xi$ and~$\eta$, the system of the first two equations of~\eqref{eq:RDEsIntermediateDetEqsForF=(u_x+c)^-2}
reduces~to the system
\begin{gather}
\xi^1_tu+\xi^0_t+(\xi^1_\omega u+\xi^1+\xi^0_\omega)g=\xi^1_{\omega\omega}u+2\xi^1_\omega+\xi^0_{\omega\omega},\label{eq:RDEsTemp1}\\
(-2\xi^1u-2\xi^0+\tau_t\omega+2\eta^0)g_u+\tau_tg+4\xi^1_\omega-\tau_{tt}\omega-2\eta^0_t=0.\label{eq:RDEsTemp2}
\end{gather}
We assume $(t,\omega,u)$ to be the tuple of independent variables in the last system.

If the arbitrary element~$g$ is not a fractional linear function,
then we can split the equation~\eqref{eq:RDEsTemp1} simultaneously with respect to~$g$ an~$u$
and obtain the equations $\xi^1_\omega=0$ and~$\xi^0_\omega=-\xi^1$, whose consequence is $\xi^0_{\omega\omega}=0$.
Otherwise (i.e., in the case of nonconstant fractional linear~$g$, since $g_u\ne0$ by a lemma's assumption),
we differentiate the equation~\eqref{eq:RDEsTemp2} with respect to~$\omega$ and
split the derived differential consequence with respect to~$u$,
which again leads to the equations $\xi^1_\omega=0$ and~$\xi^0_{\omega\omega}=0$.
Therefore, in the third case we also have the required equations
$\xi_{xx}=\xi_{xu}=\xi_{uu}=0$ and, therefore, $\eta_{xx}=\eta_{xu}=\eta_{uu}=0$.
\end{proof}

In other words, the system of determining equations for Lie symmetries of equations in the class~$\mathcal C$
in fact reduces to the system of the equations $\tau_x=\tau_u=0$, \eqref{eq:RDEsDetEqA}, \eqref{eq:RDEsSpecificDetEqsForClassC1} and
\begin{gather}\label{eq:RDEsSpecificDetEqsForClassC2}
\xi_t+\xi_ug=0,
\\ \label{eq:RDEsSpecificDetEqsForClassC3}
\eta g_u+(\tau_t-\eta_u)g-\eta_t=0.
\end{gather}
The system~\eqref{eq:RDEsSpecificDetEqsForClassC1} is equivalent to the following representation for the components~$\xi$ and~$\eta$:
\begin{gather*}
\xi=\xi^1(t)x+\xi^2(t)u+\xi^0(t),\quad
\eta=\eta^1(t)x+\eta^2(t)u+\eta^0(t),
\end{gather*}
where $\xi^0$,  $\xi^1$,  $\xi^2$, $\eta^0$,  $\eta^1$,  and $\eta^2$ are smooth functions of~$t$.

\begin{lemma}\label{lem:RDEsSingularCasesInC}
(i) For all equations in the class~$\mathcal C$,
except those $G^\sim_{\mathcal R}$-equivalent to equations of Case~\ref{RDEPeculiarCase},
the $x$-components of admitted Lie-symmetry vector fields satisfy the determining equation $\xi_u=0$.
(ii) For all equations in the class~$\mathcal C$,
except those $G^\sim_{\mathcal R}$-equivalent to equations of Case~\ref{RDE:f=1/(u_x+1),g=u},
the $u$-components of admitted Lie-symmetry vector fields satisfy the determining equation $\eta_x=0$.
\end{lemma}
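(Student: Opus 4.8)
The plan is to argue from the reduced determining system for the class~$\mathcal C$ established above, namely $\tau=\tau(t)$, equation~\eqref{eq:RDEsDetEqA}, the equations~\eqref{eq:RDEsSpecificDetEqsForClassC1} (so that $\xi=\xi^1(t)x+\xi^2(t)u+\xi^0(t)$ and $\eta=\eta^1(t)x+\eta^2(t)u+\eta^0(t)$), \eqref{eq:RDEsSpecificDetEqsForClassC2} and~\eqref{eq:RDEsSpecificDetEqsForClassC3}. For each part I would suppose that the exceptional inequality ($\xi_u\ne0$ for~(i), $\eta_x\ne0$ for~(ii)) holds for some equation in~$\mathcal C$ and some vector field in its maximal Lie invariance algebra, and show that the equation must be $G^\sim_{\mathcal R}$-equivalent to the indicated case. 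Since both $\{\xi_u=0\}$ and $\{\eta_x=0\}$ are $G^\sim_{\mathcal R}$-invariant properties of vector fields, one may normalize the arbitrary elements as convenient along the way.

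For part~(i): if $\xi_u=\xi^2\not\equiv0$, then specializing~\eqref{eq:RDEsSpecificDetEqsForClassC2}, which reads $\xi^1_tx+\xi^2_tu+\xi^0_t+\xi^2g(u)=0$, to a value of~$t$ with $\xi^2(t)\ne0$ forces $g$ to be a nonconstant affine function; hence $g=\varepsilon u$ with $\varepsilon\ne0$ modulo $G^\sim_{\mathcal R}$, and then~\eqref{eq:RDEsSpecificDetEqsForClassC2} gives $\xi^1,\xi^0=\const$ and $\xi^2=ce^{-\varepsilon t}$ with $c\ne0$, while~\eqref{eq:RDEsSpecificDetEqsForClassC3} with $g=\varepsilon u$ gives $\eta^1=c_1e^{\varepsilon t}$, $\eta^0=c_0e^{\varepsilon t}$ and $\eta^2_t=\varepsilon\tau_t$. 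Substituting all this into~\eqref{eq:RDEsDetEqA} yields, for every fixed~$t$, an identity $P_t(u_x)f_{u_x}+Q_t(u_x)f=0$ with $\deg_{u_x}P_t=2$, $\deg_{u_x}Q_t=1$ and leading coefficients proportional to~$ce^{-\varepsilon t}$. Because $f$ is nowhere zero on its domain, for any $t,t'$ one gets $P_tQ_{t'}-P_{t'}Q_t\equiv0$ as polynomials in~$u_x$, i.e.\ the rational function $\Phi:=P_t/Q_t$ is independent of~$t$. Two cases arise: either $Q_t\mid P_t$, whence $\Phi=\tfrac12 u_x+\const$ and $f=C(u_x+\kappa)^{-2}$ with $\kappa\ne0$ (since the equation lies in~$\mathcal C$) up to $G^\sim_{\mathcal R}$-equivalence; or $f_{u_x}/f=-(2u_x+\lambda_0)/(u_x(u_x-\lambda_0))$ with $\lambda_0\ne0$, i.e.\ $f=Cu_x(u_x-\lambda_0)^{-3}$, which is $G^\sim_{\mathcal R}$-equivalent to $u_x(u_x+1)^{-3}$ --- Case~\ref{RDEPeculiarCase}. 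In the first case, re-substituting $f=(u_x+1)^{-2}$ and $g=\varepsilon u$ into the full system and splitting with respect to $x$, $u$ and~$u_x$ forces $\tau_t=0$ and then $\xi_u=0$, contradicting the hypothesis. Thus only Case~\ref{RDEPeculiarCase} is possible, which proves~(i).

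Part~(ii) follows the same scheme. If $\eta_x=\eta^1\not\equiv0$, the coefficient of~$x$ in~\eqref{eq:RDEsSpecificDetEqsForClassC3}, namely $\eta^1g_u-\eta^1_t=0$, again forces $g$ nonconstant affine, so $g=\varepsilon u$ ($\varepsilon\ne0$) modulo $G^\sim_{\mathcal R}$ and $\eta^1=c_1e^{\varepsilon t}$ with $c_1\ne0$. By part~(i) applied to this equation, $\xi_u\equiv0$ --- otherwise the equation would be $G^\sim_{\mathcal R}$-equivalent to Case~\ref{RDEPeculiarCase}, all of whose symmetry vector fields have $\eta_x=0$. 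With $\xi_u=0$ and $\xi^1,\xi^0=\const$, equation~\eqref{eq:RDEsDetEqA} takes the form $P_t(u_x)f_{u_x}+Q_tf=0$ with $P_t$ affine in~$u_x$ with nonzero constant term $c_1e^{\varepsilon t}$ and $Q_t$ free of~$u_x$; as before $\Phi=P_t/Q_t$ is a $t$-independent affine function $Au_x+B$ with $B\ne0$. If $A=0$ then $f$ is exponential in~$u_x$, and re-substitution into the full system forces $\tau_t=0$ and $\eta_x=0$, which is excluded. If $A\ne0$, then $f=C(u_x+\delta)^n$ with $\delta\ne0$ and $n=-1/A\ne0$, so $f$ is $G^\sim_{\mathcal R}$-equivalent to $(u_x+1)^n$; re-substituting $f=(u_x+1)^n$, $g=\varepsilon u$ into the full system (splitting~\eqref{eq:RDEsDetEqA} with respect to~$u_x$ gives $(n+2)\xi_u=0$ together with two further relations) and combining with~\eqref{eq:RDEsSpecificDetEqsForClassC3} shows that unless $n=-1$ one is forced to $\tau_t=\const$ and then $\eta_x=0$. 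Hence $n=-1$, i.e.\ Case~\ref{RDE:f=1/(u_x+1),g=u}.

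The only laborious part, and the main obstacle, is the analysis of~\eqref{eq:RDEsDetEqA} viewed as a linear first-order ODE for~$f$: one has to control how the $t$-dependence of the components of the vector field constrains the $t$-independent function~$f$, and then dispose of the ``near-miss'' forms $f=(u_x+1)^{-2}$ in~(i) and $f$ exponential or $f=(u_x+\delta)^n$ with $n\ne-1$ in~(ii), each of which satisfies the above pointwise consequence of the determining system but is incompatible with the assumed inequality once re-substituted into the full system. The remaining manipulations are routine linear algebra in~$t$ and polynomial bookkeeping in~$u_x$.
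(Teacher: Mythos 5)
Your argument is correct in substance, but it is organized quite differently from the paper's proof, and one step in part~(ii) is asserted rather than proved. The paper treats both parts at once: it fixes a vector field with $(\xi^2,\eta^1)\ne(0,0)$, derives the same ODE system~\eqref{eq:RDEsDetEqsForclassCAndLinG} you obtain, and then runs a furcate splitting with respect to~$f$ using the template form~\eqref{eq:RDEsTemplateEqForF}. The case $k\geqslant2$ of at least two independent template equations forces $f=\varepsilon(u_x+1)^{-2}$ and is killed by differentiating the residual relation $R=\xi^1-\xi^2+\eta^1-\eta^2=0$ twice in~$t$ (your Case~A, with the same contradiction obtained by splitting with respect to $e^{\varepsilon t}$, $e^{-\varepsilon t}$, $1$). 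For $k=1$ the proportionality relations~\eqref{eq:RDEsFurcateSplitting0} give $a_1\eta^1+a_3\xi^2=0$, and one $t$-differentiation yields $a_1\eta^1=a_3\xi^2=0$, hence $\xi^2\eta^1=0$: the branch $a_1\ne0$ produces Case~\ref{RDEPeculiarCase} with $\eta^1=0$ for \emph{every} symmetry, and the branch $a_3\ne0$ produces Case~\ref{RDE:f=1/(u_x+1),g=u} with $\xi^2=0$ for every symmetry. Your route replaces the furcate splitting by a direct compatibility analysis of~\eqref{eq:RDEsDetEqA} viewed as $P_tf_{u_x}+Q_tf=0$, using the $t$-independence of $-Q_t/P_t$ and a common-factor dichotomy; this is essentially equivalent and all of your case conclusions check out (in your Case~B of part~(i), the relation forcing the numerator to be $2u_x+\lambda_0$ rather than a general affine function comes from combining the constancy of the coefficients with $\eta^2_t=\varepsilon\tau_t$ and $\xi^1_t=0$, which you should make explicit). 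What the paper's organization buys, and yours loses, is that the statements ``all symmetries of Case~\ref{RDEPeculiarCase} have $\eta_x=0$'' and ``all symmetries of Case~\ref{RDE:f=1/(u_x+1),g=u} have $\xi_u=0$'' come out automatically from $a_1a_3=0$.

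That loss creates the one genuine gap: in part~(ii) you write that if some symmetry had $\xi_u\ne0$ the equation would be equivalent to Case~\ref{RDEPeculiarCase}, ``all of whose symmetry vector fields have $\eta_x=0$,'' and you attribute this to part~(i). Part~(i) only gives the implication $\xi_u\ne0\Rightarrow{}$Case~\ref{RDEPeculiarCase}; it says nothing about the $\eta_x$-components of the \emph{other} symmetries of that equation. The missing fact is true and easy: for $f=Cu_x(u_x-\lambda_0)^{-3}$ with $\lambda_0\ne0$ one has $f_{u_x}/f=(-2u_x-\lambda_0)/\bigl(u_x(u_x-\lambda_0)\bigr)$, and clearing denominators in~\eqref{eq:RDEsDetEqA} makes the $u_x$-free term equal to $-\lambda_0\eta_x$, so every admitted vector field satisfies $\eta_x=0$. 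You must insert this (or an equivalent) verification; as written, the deduction does not follow from what you have proved. The remaining loose phrasings (e.g.\ ``forces $\tau_t=0$'' where the actual splitting gives $c_1=c=\xi^1=0$, and ``$\tau_t=\const$'' in the $n\ne-1$ subcase where the real conclusion is $(n+1)c_1=0$) are harmless once the computations are written out.
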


\begin{proof}
Suppose that an equation~$\mathfrak E$ in the class~$\mathcal C$ admits a Lie-symmetry vector field~$Q$ with $(\xi^2,\eta^1)\ne(0,0)$,
and $(f,g)$ is the associated value of the arbitrary-element tuple.

If $\xi_u=\xi^2\ne0$ (resp.\ $\eta_x=\eta^1\ne0$) for~$Q$, then the equation~\eqref{eq:RDEsSpecificDetEqsForClassC2}
(resp.\ the differential consequence of~\eqref{eq:RDEsSpecificDetEqsForClassC3} obtained with the action by the operator~$\p_x$)
implies $g_{uu}=0$, and hence $g=u\bmod G^\sim_{\mathcal R}$ since $g_u\ne0$ in the class~$\mathcal C$.
For this value of the arbitrary element~$g$,
the equations~\eqref{eq:RDEsSpecificDetEqsForClassC2} and~\eqref{eq:RDEsSpecificDetEqsForClassC3}
respectively split with respect to~$(x,u)$ into the equations
\begin{gather}\label{eq:RDEsDetEqsForclassCAndLinG}
\xi^0_t=\xi^1_t=0,\quad \xi^2_t=-\xi^2\quad\mbox{and}\quad
\eta^0_t=\eta^0,\quad \eta^1_t=\eta^1,\quad \eta^2_t=\tau_t.
\end{gather}

Now we apply the furcate splitting with respect to~$f$.
Here the template form of equations for~$f$ is
\begin{gather}\label{eq:RDEsTemplateEqForF}
(a_1u_x{}^{\!2}+a_2u_x+a_3)f_{u_x}+(2a_1u_x+a_4)f=0,
\end{gather}
where $a_1$, \dots, $a_4$ are constants.
Template-form equations for~$f$ are obtained by fixing~$t$ in the classifying equation~\eqref{eq:RDEsDetEqA}.
Denote by~$k$ the number of linearly independent tuples $(a_1,a_2,a_3,a_4)$
among those associated with template-form equations for~$f$.
We have $k>0$ since $(\xi^2,\eta^1)\ne(0,0)$ for the Lie-symmetry vector field~$Q$.

If $k\geqslant2$, then $f=\varepsilon(u_x+1)^{-2}\bmod  G^\sim_{\mathcal R}$ with $\varepsilon=\pm1$
since $f_{u_x}\ne0$ and $(u_x{}^{\!2}f)_{u_x}\ne0$ for equations in the class~$\mathcal C$.
Splitting the equation~\eqref{eq:RDEsDetEqA} for this value of~$f$ with respect to~$u_x$,
we derive the equations $\tau_t=2\eta^1+2\xi^1$ and $R:=\xi^1-\xi^2+\eta^1-\eta^2=0$.
Then in view of the system~\eqref{eq:RDEsDetEqsForclassCAndLinG}, we have $\tau_{tt}=2\eta^1$,
$R_{tt}=-\xi^2+\eta^1=0$ and $R_{ttt}=\xi^2+\eta^1=0$, i.e., $\xi^2=\eta^1=0$
for Lie-symmetry vector fields of the equation~$\mathfrak E$,
which contradicts the condition $(\xi^2,\eta^1)\ne(0,0)$ for~$Q$.

Therefore, $k=1$.
We fix a template-form equation, and let $(a_1,a_2,a_3,a_4)$ be the corresponding (nonzero) coefficient tuple.
Since $k=1$, the coefficient tuple $(-\xi^2,\eta^2-\xi^1,\eta^1,\tau_t-2\xi^1)$ of the equation~\eqref{eq:RDEsDetEqA}
is proportional to $(a_1,a_2,a_3,a_4)$,
\begin{gather}\label{eq:RDEsFurcateSplitting0}
-\xi^2=\lambda a_1,\quad
\eta^2-\xi^1=\lambda a_2,\quad
\eta^1=\lambda a_3,\quad
\tau_t-2\xi^1=\lambda a_4,
\end{gather}
where $\lambda=\lambda(t)$ is a nonvanishing smooth function of~$t$,
and $(a_1,a_3)\ne(0,0)$ since $(\xi^2,\eta^1)\ne(0,0)$ for~$Q$.
This implies that $a_1\eta^1+a_3\xi^2=0$.
The differentiation of this equation with respect to~$t$ gives,
in view of~\eqref{eq:RDEsDetEqsForclassCAndLinG}, the equation $a_1\eta^1-a_3\xi^2=0$,
i.e., $a_1\eta^1=a_3\xi^2=0$ and thus $\xi^2\eta^1=a_1a_3=0$.

If $a_1\ne0$, then due to the possibility of simultaneous opposite scalings of~$(a_1,a_2,a_3,a_4)$ and~$\lambda$
we can assume without loss of generality that $a_1=1$ and $\lambda=-\xi^2\ne0$.
Hence $\eta^1=0$ and $a_3=0$.
The other equations following from~\eqref{eq:RDEsFurcateSplitting0} are
$\eta^2=-a_2\xi^2+\xi^1$ and $\tau_t=-a_4\xi^2+2\xi^1$.
We combine them with equations of~\eqref{eq:RDEsDetEqsForclassCAndLinG} to get
$\tau_t=\eta^2_t=a_2\xi^2$, which leads to $a_4=-a_2$ and $\xi^1=0$.
We have $a_2\ne0$ since otherwise the fixed template-form equation for~$f$
contradicts the auxiliary inequality $(u_x{}^{\!2}f)_{u_x}\ne0$
for arbitrary elements of equations in the class~$\mathcal C$.
This is why modulo $G^\sim_{\mathcal R}$-equivalence we can set $a_2=1$
and obtain, after alternating the sign of~$t$ if $\varepsilon=-1$, Case~\ref{RDEPeculiarCase}.

For the case $a_3\ne0$, the consideration is similar.
Due to the possibility of simultaneous opposite scalings of~$(a_1,a_2,a_3,a_4)$ and~$\lambda$
we can assume without loss of generality that $a_3=1$ and $\lambda=\eta^1\ne0$.
Hence $\xi^2=0$ and $a_1=0$.
The other equations following from~\eqref{eq:RDEsFurcateSplitting0} are
$\eta^2=a_2\eta^1+\xi^1$ and $\tau_t=a_4\eta^1+2\xi^1$.
We combine them with equations of~\eqref{eq:RDEsDetEqsForclassCAndLinG} to obtain
$\tau_t=\eta^2_t=a_2\eta^1$, which leads to $a_4=a_2$ and $\xi^1=0$.
We have $a_2\ne0$ since otherwise the fixed template-form equation for~$f$
contradicts the auxiliary inequality $f_{u_x}\ne0$
for arbitrary elements of equations in the class~$\mathcal C$.
This is why modulo $G^\sim_{\mathcal R}$-equivalence we can set $a_2=1$
and obtain, after alternating the sign of~$t$ if $\varepsilon=-1$, Case~\ref{RDE:f=1/(u_x+1),g=u}.
\end{proof}

Therefore, in the rest of the proof we can exclude equations
falling, modulo $G^\sim_{\mathcal R}$-equivalence, into Cases~\ref{RDE:f=1/(u_x+1),g=u} or~\ref{RDEPeculiarCase}
and additionally set $\xi_t=\xi_u=\eta_x=0$, i.e.,
\[
\xi=c_1x+c_2,\quad \eta=\eta^2(t)u+\eta^0(t),
\]
where $c_1$ and~$c_2$ are constants.
The unsolved determining equations are exhausted by the reduced form
of the equations~\eqref{eq:RDEsDetEqA} and~\eqref{eq:RDEsSpecificDetEqsForClassC3},
\begin{gather}\label{eq:RDEsReducedDetEqA}
(\eta^2-c_1)u_xf_{u_x}+(\tau_t-2c_1)f=0,
\\ \label{eq:RDEsReducedSpecificDetEqsForClassC3}
(\eta^2u+\eta^0)g_u+(\tau_t-\eta^2)g=\eta^2_tu+\eta^0_t.
\end{gather}

We complete the group classification of the class~$\mathcal C$ by the method of furcate splitting.
The system of the reduced classifying equations~\eqref{eq:RDEsReducedDetEqA} and~\eqref{eq:RDEsReducedSpecificDetEqsForClassC3}
is decoupled, and the arbitrary elements~$f$ and~$g$ are unary functions of different arguments.
Thus, we need to apply two-step furcate splitting, with respect to~$f$ and then with respect to~$g$.
Here the template forms of equations for~$f$ and~$g$ are respectively
\begin{gather}\label{eq:RDEsReducedTemplateEqForF}
a_2u_xf_{u_x}+a_4f=0,
\\ \label{eq:RDEsReducedTemplateEqForG}
(b_1u+b_2)g_u +b_3g=b_4u+b_5,
\end{gather}
where $a_2$, $a_4$ and $b_1$, \dots, $b_5$ are constants.
(We number the constants $a_2$ and $a_4$ in the way consistent with the template form~\eqref{eq:RDEsTemplateEqForF}.)
Template-form equations for~$f$ and~$g$ are obtained by fixing~$t$
in the classifying equations~\eqref{eq:RDEsReducedDetEqA} and~\eqref{eq:RDEsReducedSpecificDetEqsForClassC3}, respectively.
Denote by~$k$ (resp.\ $l$) the number of linearly independent tuples $(a_2,a_4)$ (resp.\ $(b_1,\dots,b_5)$)
among those associated with template-form equations for~$f$ (resp.\ $g$).
We have $k<2$ since $f\ne0$.

\medskip

\noindent $\boldsymbol{k=0}$.
This means that the equation~\eqref{eq:RDEsReducedDetEqA} is an identity with respect to~$f$,
i.e., $\eta^2=c_1$ and $\tau_t=2c_1$.
Then $b_3=b_1$ and $b_4=0$ in the template form~\eqref{eq:RDEsReducedTemplateEqForG} of equations for~$g$.
The number~$l$ of linearly independent tuples among possible values
of the coefficient tuple $(b_1,b_2,b_5)$ cannot exceed one
since otherwise the corresponding system of template-form equations
either is inconsistent or has only constant solutions,
which contradicts the auxiliary inequality $g_u\ne0$ for arbitrary elements of equations in the class~$\mathcal C$.
If $l=0$, i.e., the classifying equation~\eqref{eq:RDEsReducedSpecificDetEqsForClassC3} is also an identity,
then we get the general classification Case~\ref{RDE:f=any,g=any} of Table~\ref{tab:RDEGroupClassification} with no Lie symmetry extension.
For $l=1$, depending on whether or not the coefficient~$b_1$ in a nonidentical template-form equation vanishes,
we obtain that either $g=u\bmod G^\sim_\mathcal R$ or $g=u^{-1}+\nu\bmod G^\sim_{\mathcal R}$.
The first option corresponds to Case~\ref{RDE:f=any,g=u}.
The second option leads to Case~\ref{RDE:f=any,g=1/u} in view of the condition $\nu=0$
since for $\nu\ne0$ we only obtain the kernel invariance algebra $\mathfrak g^\cap_{\mathcal C}=\langle\p_t,\,\p_x\rangle$,
which corresponds to the condition $l=0$ contradicting the supposed condition $l=1$.

\medskip

\noindent $\boldsymbol{k=1.}$
We fix a (nonidentical) template-form equation for~$f$, and let $(a_2,a_4)$ be the corresponding (nonzero) coefficient tuple.
The inequality $f\ne0$ implies $a_2\ne0$, and thus we can set $a_2=1$. Denote $n:=-a_4$.
Then the equation~\eqref{eq:RDEsReducedTemplateEqForF} implies that $f=|u_x|^n\bmod G^\sim_{\mathcal R}$,
where~$n\ne0$ and $n \ne-2$ according to the auxiliary inequalities $f_{u_x}\ne0$ and $(u_x{}^{\!2}f)_{u_x}\ne0$
for arbitrary elements of equations in the class~$\mathcal C$.
Since $k=1$, the coefficient tuple $(\eta^2-c_1,\tau_t-2c_1)$ of the equation~\eqref{eq:RDEsReducedDetEqA}
is proportional to $(1,-n)$.
In other words, $\eta^2-c_1=\lambda$, $\tau_t-2c_1=-n\lambda$,
where $\lambda=\lambda(t)$ is a nonvanishing smooth function of~$t$.
Hence $\tau_t=-n\eta^2+(n+2)c_1$,
and the equation~\eqref{eq:RDEsReducedSpecificDetEqsForClassC3} takes the form
\begin{gather}\label{eq:RDEsReducedSpecificDetEqsForClassC3a}
(\eta^2u+\eta^0)g_u+\big((n+2)c_1-(n+1)\eta^2\big)g=\eta^2_tu+\eta^0_t.
\end{gather}
We again apply the furcate splitting with respect to~$g$ using the number~$l$ for marking different cases.
We have $l>0$ since otherwise the equation~\eqref{eq:RDEsReducedSpecificDetEqsForClassC3a} would be
an identity with respect to~$g$ and thus $\eta^0=\eta^2=0$ and, in view of $n\ne-2$, $c_1=0$,
which would lead to the kernel invariance algebra $\mathfrak g^\cap_{\mathcal C}=\langle\p_t,\,\p_x\rangle$
and would thus give $k=0$, contradicting the condition $k=1$.
The further consideration splits into two cases, $l\geqslant2$ and $l=1$.

\medskip

\noindent $l\geqslant2$. Note that three is the maximal value of~$l$
since for greater values of~$l$, systems of template-form equations are not consistent.

Suppose that for two template-form equations
with linearly independent tuples of coefficients $(b_1,\dots,b_5)$ and $(b_1',\dots,b_5')$,
the conditions
$b_2b_1'-b_2'b_1\ne0$, $b_3b_1'-b_3'b_1=0$ and $b_4b_1'-b_4'b_1\ne0$ hold.
Then the consistency of these template-form equations implies that $g_{uuu}=0$ and $g_{uu}\ne0$,
i.e., $g=u^2+\delta\bmod G^\sim_{\mathcal R}$ with $\delta\in\{-1,0,1\}$.
Splitting the equation~\eqref{eq:RDEsReducedSpecificDetEqsForClassC3a}
for the obtained value of the arbitrary element~$g$ with respect to~$u$,
we derive the equations $(n+2)c_1=(n-1)\eta^2$, $\eta^0_t=-2\delta\eta^2$ and $\eta^2_t=2\eta^0$.
These equations imply $n=1$ since otherwise $\eta^2=(n+2)c_1/(n-1)$, $\eta^0=0$, $\delta c_1=0$
and thus $l\leqslant1$, which contradicts the assumed condition $l\geqslant2$.
Then $c_1=0$, and we obtain Cases~\ref{RDE:f=u_x,g=u^2}a, \ref{RDE:f=u_x,g=u^2}b and~\ref{RDE:f=u_x,g=u^2}c
depending on the value of~$\delta$.
\looseness=-1

If there exists a pair of template-form equations with other conditions on
the associated linearly independent tuples of coefficients $b$'s and $b'$'s,
then we get $g_{uu}=0$, i.e., $g=\varepsilon u\bmod G^\sim_{\mathcal R}$ with $\varepsilon\in\{-1,1\}$
in view of the auxiliary inequality $g_u\ne0$ within the class~$\mathcal C$.
The equation~\eqref{eq:RDEsReducedSpecificDetEqsForClassC3a} with this value of~$g$ splits into
the equations $\eta^2_t=-\varepsilon n\eta^2+\varepsilon(n+2)c_1$ and $\eta^0_t=\varepsilon\eta^0$.
As a result, we get Case~\ref{RDE:f=u_x^k,g=0}b, for which in fact $l=3$.

\medskip

\noindent$l=1.$
Then the coefficient tuple of the equation~\eqref{eq:RDEsReducedSpecificDetEqsForClassC3a} is proportional
to the (nonvanishing) coefficient tuple of a template-form equation,
\[
\eta^2=\chi b_1,\quad
\eta^0=\chi b_2,\quad
(n+2)c_1-(n+1)\eta^2=\chi b_3,\quad
-\eta^2_t=\chi b_4,\quad
-\eta^0_t=\chi b_5.
\]
$\chi=\chi(t)$ is a nonvanishing smooth function of~$t$.
Moreover, we have the inequality $g_{uu}\ne0$ since otherwise $l=2$.
We consider separately two cases, $(b_1,b_4)\ne(0,0)$ and $b_1=b_4=0$.

Suppose at first that $(b_1,b_4)\ne(0,0)$.
We recombine the above equations to the equations $b_1\chi_t+b_4\chi=0$ and $b_2\chi_t+b_5\chi=0$
whose consistency as algebraic equations with respect to $(\chi_t,\chi)$ implies
that the pair $(b_2,b_5)$ is proportional to $(b_1,b_4)$.
Therefore, modulo $G^\sim_{\mathcal R}$-equivalence (more specifically, modulo shifts of~$u$)
we can set $b_2=b_5=0$, and thus $\eta^0=0$.
Then it becomes obvious that $b_1\ne0$
since otherwise $b_3b_4\ne0$, which implies $g_{uu}=0$.
Therefore, due to the possibility of simultaneous opposite scalings of~$(b_1,b_3,b_4)$ and~$\chi$
we can assume without loss of generality that $b_1=1$ and $\chi=\eta^2\ne0$.
Hence $(n+2)c_1=(n+1+b_3)\eta^2$ and $\eta^2_t=-b_4\eta^2$.
For $b_4=0$ we directly obtain Case~\ref{RDE:f=u_x^k,g=u^m}a.
If $b_4\ne0$, then $c_1=0$ and $b_3=-n-1$, which gives Case~\ref{RDE:f=u_x^k,g=u^m}b.

Let $b_1=b_4=0$. Then the inequality $g_{uu}\ne0$ implies $b_2\ne0$ and $b_5=0$.
To derive the last equality, we use the fact that otherwise $\eta^0_t=b_2\chi_t\ne0$ and hence $b_3=0$,
which contradicts the inequality $g_{uu}\ne0$.
Therefore, $g=\varepsilon e^u\bmod G^\sim_{\mathcal R}$ with $\varepsilon\in\{-1,1\}$,
which gives Case~\ref{RDE:f=u_x^k,g=e^u}.

The proof of Proposition~\ref{pro:RDEsGroupClassificationOfC} is completed.

\section{Lie reductions and exact solutions}\label{sec:RDEsSolutions}

Exact solutions of equations for the class~$\mathcal H$ were constructed by various methods in many papers,
including classical Lie reductions~\cite{Dorodnitsyn1982},
nonclassical reductions using conditional symmetries \cite{ArrigoHillBroadbridge1993,ClarksonMansfield1993,FushchichSerov1990,Serov1990,VaneevaPopovychSophocleous2009}
and generalized separation of variables~\cite{GalaktionovSvirshchevskii2007}.
See also~\cite[Section~10.5]{Ames1994V1}, \cite[Chapter~5]{PolyaninZaitsev2012} and references therein.

The construction of exact solutions of equations from the class~$\mathcal F$
is reduced by the family of transformations from~$G^\sim_{\mathcal F}$
with $(t,x,u)$-components $\tilde t=t$, $\tilde x=x$, $\tilde u=u-gt$,
which is parameterized by the arbitrary element~$g$,
to the same problem for the subclass~$\mathcal F'$, cf.\ Section~\ref{sec:RDEEquivalenceTransformations}.
Equations from the subclass~$\mathcal F'$, which are nonlinear filtration equations,
are potential equations of nonlinear diffusion equations.
Lie reductions of the latter equations were studied in~\cite{Ovsiannikov1959}.
Exact solutions of both nonlinear diffusion equations and nonlinear filtration equations were found using
Lie and quasilocal (potential) symmetries \cite{AkhatovGazizovIbragimov1987,AkhatovGazizovIbragimov1991},
generalized separation of variables \cite{BroadbridgeVassiliou2011,Doyle&Vassiliou1998},
nonclassical reductions and nonclassical potential reductions \cite{BlumanYan2005,Gandarias2001,PopovychVaneevaIvanova2007}
and generalized conditional symmetries~\cite{Qu1999}.
See also~\cite[Section~10.2]{Ames1994V1}, \cite[Chapter~5]{PolyaninZaitsev2012} and references therein.

The optimal way for constructing exact solutions of equations from the class~$\mathcal L$
is to use their linearization by the hodograph transformation $\tilde t=t$, $\tilde x=u$, $\tilde u=x$
with~$(\tilde t,\tilde x)$ and~$\tilde u$ being the new independent and dependent variables, respectively,
to Kolmogorov equations; see the introduction.

This is why it is justified to look for exact solutions of equations from the class~$\mathcal C$ only,
excluding equations related to Case~\ref{RDE:f=u_x^k,g=0}b,
which is similar to Case~\ref{RDE:f=u_x^k,g=0}a with respect to additional equivalence transformations.
We select all $\mathcal G^\sim_{\mathcal R}$-inequivalent cases of Table~\ref{tab:RDEGroupClassification}
with four-dimensional maximal Lie invariance algebras as the most interesting,
which are Cases~\ref{RDE:f=1/(u_x+1),g=u}, \ref{RDE:f=u_x,g=u^2}a and~\ref{RDEPeculiarCase}.
They are the obvious contenders for allowing one to construct nontrivial explicit solutions.
We identify the structure of the associated maximal Lie algebras
according Mubarakzyanov's classification of real four-dimensional Lie algebras~\cite{Mubarakzyanov1963a}.
Optimal lists of subalgebras of such algebras
(i.e., complete lists of subalgebras that are inequivalent up to inner automorphisms of the corresponding algebras)
were constructed by Patera and Winternitz in~\cite{PateraWinternitz1977}.
Since we carry out Lie reductions of partial differential equations with two independent variables
to ordinary differential equations, we only need optimal lists of one-dimensional subalgebras.
For each of the selected classification cases, we additionally gauge, whenever it is possible, subalgebra parameters
using outer automorphisms of the corresponding algebras that are induced by discrete symmetries of related equations.
See also~\cite{PopovychBoykoNesterenkoLutfullin2005}
for a collection of various characteristics of low-dimensional Lie algebras
and of classifications of related objects.
It is convenient to present the results of Lie reductions as Table~\ref{tab:RDELieReductions1}, each row of which includes
a basis element of the canonical representative of an equivalence class of one-dimensional subalgebras,
an associated ansatz for the dependent variable, the invariant independent variable~$\omega$ and the corresponding reduced equation, respectively.
Here and below $c$'s are arbitrary constants.

\begin{table}[!t]\footnotesize
\caption{Lie reductions of Cases~\ref{RDE:f=1/(u_x+1),g=u} and~\ref{RDE:f=u_x,g=u^2}a and the equation~\eqref{eq:RDEsPeculiarQuadratic} to ODEs} 
\label{tab:RDELieReductions1}
\begin{center}
\renewcommand{\arraystretch}{1.3}
\setcounter{tbn}{-1}
\begin{tabular}{|c|c|l|c|l|}
\hline
no. &                  Subalgebra basis element                     &\hfil Ansatz for $u$   &  $\omega$              &   \hfil Reduced equation\\
\hline
\ref{RDE:f=1/(u_x+1),g=u}.1      &$\p_x                                  $&$u=\phi                              $&$t                 $&$\phi_\omega=\e\phi                                            $\\
\ref{RDE:f=1/(u_x+1),g=u}.2      &$\p_t+\kappa\p_x                       $&$u=\phi                              $&$x-\kappa t        $&$\phi_{\omega\omega}=-(\kappa\phi_\omega+\e\phi)(\phi_\omega+1)$\\
\ref{RDE:f=1/(u_x+1),g=u}.3      &$e^{\e t}(\p_t+\e(u+x)\p_u)            $&$u=\phi e^{\e t}-x                   $&$x                 $&$\phi_{\omega\omega}=\e\omega\phi_\omega                       $\\
\ref{RDE:f=1/(u_x+1),g=u}.4      &$e^{\e t}(\p_t+\e(u+x)\p_u)+\p_x       $&$u=\phi e^{\e t}-x-e^{-\e t}/(2\e)   $&$x+e^{-\e t}/\e    $&$\phi_{\omega\omega}=\phi_\omega(\e \omega-\phi_\omega)        $\\
\hline
\ref{RDE:f=u_x,g=u^2}.1          &$\p_x                                  $&$u=\phi                              $&$t                 $&$\phi_\omega=\phi^2                                            $\\
\ref{RDE:f=u_x,g=u^2}.2          &$\p_t                                  $&$u=\phi                              $&$x                 $&$\phi_\omega\phi_{\omega\omega}+\phi^2=0                       $\\
\ref{RDE:f=u_x,g=u^2}.3          &$\p_t+\p_x                             $&$u=\phi                              $&$x-t               $&$\phi_\omega\phi_{\omega\omega}+\phi^2+\phi_\omega=0           $\\
\ref{RDE:f=u_x,g=u^2}.4          &$t\p_t-u\p_u+\kappa\p_x                $&$u=\phi/t                            $&$x-\kappa\ln|t|    $&$\phi_\omega\phi_{\omega\omega}+\phi^2+\kappa\phi_\omega+\phi=0$\\
\ref{RDE:f=u_x,g=u^2}.5          &$(t^2{+}1)\p_t-(2tu{+}1)\p_u+\kappa\p_x$&$u=(\phi-t)/(t^2+1)                  $&$x-\kappa\arctan t $&$\phi_\omega\phi_{\omega\omega}+\phi^2+\kappa\phi_\omega+1=0   $\\
\hline
$\tilde{\ref{RDEPeculiarCase}}$.1&$\p_x                                  $&$u=\phi                              $&$t                 $&$\phi_\omega=0                                                 $\\
$\tilde{\ref{RDEPeculiarCase}}$.2&$t\p_x+\e^{-1}\p_u                     $&$u=\phi+\e^{-1}x/t                   $&$t                 $&$\phi_\omega+\omega^{-1}\phi=0                                 $\\
$\tilde{\ref{RDEPeculiarCase}}$.3&$\p_t                                  $&$u=\phi                              $&$x                 $&$\phi_\omega(\phi_{\omega\omega}-\e\phi)=0                     $\\
$\tilde{\ref{RDEPeculiarCase}}$.4&$\p_t+t\p_x+\e^{-1}\p_u                $&$u=\phi+\e^{-1}t                     $&$x-t^2/2           $&$\phi_\omega(\phi_{\omega\omega}-\e\phi)=\e^{-1}               $\\
$\tilde{\ref{RDEPeculiarCase}}$.5&$t\p_t-u\p_u+\kappa\p_x                $&$u=(\phi+\e^{-1}\kappa)/t            $&$x-\kappa\ln|t|    $&$\phi_\omega(\phi_{\omega\omega}-\e\phi)+\phi=-\e^{-1}\kappa   $\\
\hline
\end{tabular}
\end{center}
\end{table}

\medskip\par\noindent%
\textbf{Case~\ref{RDE:f=1/(u_x+1),g=u}.}
The maximal Lie invariance algebra~$\mathfrak g_{\ref{RDE:f=1/(u_x+1),g=u}}$ of the equation
\begin{gather}\label{eq:RDEs:f=1/(u_x+1),g=u}
u_t=\frac{u_{xx}}{u_x+1}+\e u, \quad \e=\pm1\bmod G^\sim_{\mathcal R},
\end{gather}
is isomorphic to the algebra~$\textup g_{4,8}$ with $h=0$ from Mubarakzyanov's classification,
which is denoted by~$A^0_{4.8}$ in~\cite{PopovychBoykoNesterenkoLutfullin2005}
and by~$A^0_{4,9}$ in~\cite{PateraWinternitz1977}.
To obtain canonical commutation relations, basis elements can be chosen in the following way:
\[
e_1=e^{\e t}\p_u,\quad
e_2=e^{\e t}(\p_t+\e(u+x)\p_u),\quad
e_3=\e^{-1}\p_x,\quad
e_4=\e^{-1}\p_t.
\]
An optimal list of one-dimensional subalgebras of~$\mathfrak g_{\ref{RDE:f=1/(u_x+1),g=u}}$ is exhausted by the subalgebras
$\langle e_1\rangle$, $\langle e_2\rangle$, $\langle e_3\rangle$, $\langle e_2+\e'\e e_3\rangle$ and $\langle e_4+\kappa e_3\rangle$,
where $\e'=\pm1$ and $\kappa$ is an arbitrary constant.
The equation~\eqref{eq:RDEs:f=1/(u_x+1),g=u} also admits
the discrete point symmetry~$I_{x,u}$ of simultaneously alternating signs of~$(x,u)$,
which generates an outer automorphism of the algebra~$\mathfrak g_{\ref{RDE:f=1/(u_x+1),g=u}}$
with the matrix ${\rm diag}(-1,1,-1,1)$ in the chosen basis.
Using this automorphism we can set $\e'=1$ in the fourth subalgebra.
The first subalgebra does not satisfy the transversality condition and thus cannot be used for Lie reduction.
Ansatzes constructed with the other subalgebras and the corresponding reduced equations are listed in Table~\ref{tab:RDELieReductions1}.
The general solutions of reduced equations~\ref{RDE:f=1/(u_x+1),g=u}.1, \ref{RDE:f=1/(u_x+1),g=u}.3 and~\ref{RDE:f=1/(u_x+1),g=u}.4
are respectively
\[\textstyle
\phi=c_1e^{\e t},\quad
\phi=c_1\int e^{\varepsilon\omega^2/2}\mathrm d\omega+c_2,\quad
\phi=\ln\big|\int e^{\varepsilon\omega^2/2}\mathrm d\omega+c_1\big|+c_2,
\]
where $c$'s are arbitrary constants, and integrals denote fixed antiderivatives.
The reduced equation~\ref{RDE:f=1/(u_x+1),g=u}.2 with $\kappa=0$ is integrated once to $\phi_\omega-\ln|\phi_\omega+1|=-\e\phi^2/2+c_1$.
The general solution of the last equation is represented in a parametric form with quadrature,
where $\phi_\omega$ plays the role of the parameter.

\par\medskip\noindent\textbf{Case~\ref{RDE:f=u_x,g=u^2}a.}
The maximal Lie invariance algebra~$\mathfrak g_{\ref{RDE:f=u_x,g=u^2}a}$ of the equation $u_t=u_xu_{xx}+u^2$
is a realization of the algebra~$\textup g_{3,5}+\textup g_1$ from Mubarakzyanov's classification,
which is denoted by~$A_{3,8}\oplus A_1$ in~\cite{PateraWinternitz1977}
and is merely ${\rm sl}(2,\mathbb R)\oplus A_1$~\cite{PopovychBoykoNesterenkoLutfullin2005}.
We choose the basis
\[
e_1=\p_t,\quad
e_2=t\p_t-u\p_u,\quad
e_3=t^2\p_t-(2tu+1)\p_u,\quad
e_4=\p_x.
\]
An optimal list of one-dimensional subalgebras of~$\mathfrak g_{\ref{RDE:f=u_x,g=u^2}a}$ is exhausted by the subalgebras
$\langle e_1\rangle$, $\langle e_4\rangle$, $\langle e_2+\e'e_4\rangle$, $\langle e_2+\kappa e_4\rangle$ and $\langle e_1+e_3+\kappa e_4\rangle$,
where $\e'=\pm1$, and $\kappa$ is an arbitrary nonnegative constant and an arbitrary constant in the fourth and the fifth subalgebras, respectively.
The equation $u_t=u_xu_{xx}+u^2$ also admits
the discrete point symmetry~$I_{t,u}$ of simultaneously alternating signs of~$(t,u)$,
which generates the outer automorphism of the algebra~$\mathfrak g_{\ref{RDE:f=u_x,g=u^2}a}$
with the matrix ${\rm diag}(-1,1,-1,1)$ in the chosen basis.
Using this automorphism we can set $\e'=1$ in the third subalgebra.
Ansatzes constructed with the above subalgebras and the corresponding reduced equations are listed in Table~\ref{tab:RDELieReductions1}.

The solutions of reduced equation~\ref{RDE:f=u_x,g=u^2}.1 are exhausted by
the general solution $\phi=-(t+c_1)^{-1}$ and the singular solution $\phi=0$.

Reduced equation~\ref{RDE:f=u_x,g=u^2}.2 is integrated to $-\int(\phi^3+c_1)^{-1/3}{\rm d}\phi=x+c_2$.
For the value $c_1=0$, we obtain the one-parameter singular solution family $\phi=\tilde c_2e^{-x}$.
If $c_1\ne0$, then the expression $(\phi^3+c_1)^{-1/3}{\rm d}\phi$ is a differential binomial $\phi^m(a+b\phi^n)^p{\rm d}\phi$
with $m=0$, $n=3$, $p=-1/3$, $a=c_1$ and $b=1$, i.e., $(m+1)/n+p=0$ is an integer.
According the Chebyshev theorem on the integration of differential binomials,
the integral of this differential binomial is reduced to the integral of a rational function by the substitution
$c_1\phi^{-3}+1=t^3$ and thus it can be expressed by elementary functions.
Finally, we construct the general solution of reduced equation~\ref{RDE:f=u_x,g=u^2}.2 in the implicit form
\[
\frac12\ln|(\phi^3+c_1)^{1/3}-\phi|+\frac1{\sqrt3}\arctan\frac{2(\phi^3+c_1)^{1/3}+\phi}{\sqrt3\phi}=x+c_2.
\]

Reduced equations~\ref{RDE:f=u_x,g=u^2}.3--\ref{RDE:f=u_x,g=u^2}.5 are of the general form
\begin{gather}\label{eq:RDECase6aGenFormOfReducedEqs}
\phi_\omega\phi_{\omega\omega}+\phi^2+\kappa\phi\phi_\omega+\mu\phi+\nu=0,
\end{gather}
where $(\kappa,\mu,\nu)=(1,0,0)$, $(\mu,\nu)=(1,0)$ and $(\mu,\nu)=(0,1)$
for reduced equation~\ref{RDE:f=u_x,g=u^2}.3, \ref{RDE:f=u_x,g=u^2}.4 and~\ref{RDE:f=u_x,g=u^2}.5.
By the standard substitution lowering orders of autonomous equations,
$\phi_\omega=p(y)$, where $y=\phi$ plays the role of the new independent variable,
the equation~\eqref{eq:RDECase6aGenFormOfReducedEqs} is reduced to the first-order ordinary differential equation
$p^2p_y+y^2+\kappa p+\mu y+\nu=0$.
The point transformation $\tilde y=p+y$, $\tilde p=y$ maps the last equation
to an Abel equation of the second kind,
\[\big((2\tilde y+\mu-\kappa)\tilde p-\tilde y^2+\kappa\tilde y+\nu\big)\tilde p_{\tilde y}+(\tilde p-\tilde y)^2=0,\]
see, e.g., \cite[substitution~1.4.4-1.3$^\circ$]{PolyaninZaitsev2003}.
If $\nu=0$, one can also use the point transformation $\tilde y=p/y$, $\tilde p=1/y$ resulting to
the simpler Abel equation of the second kind
$\big((\kappa\tilde y+\mu)\tilde p+\tilde y^3+1\big)\tilde p_{\tilde y}=\tilde y^2\tilde p.$
The above Abel equations are further reduced by point transformations
to Abel equations of the second (or first) kind in the canonical form.

\par\medskip\noindent\textbf{Case~\ref{RDEPeculiarCase}.}
Up to $G^\sim_{\mathcal R}$-equivalence, this is the only case
among cases of Lie-symmetry extensions within the subclass~$\mathcal C$
that admits non-fiber-preserving point symmetry transformations.
(Such symmetry transformations are typical for equations from the subclasses~$\mathcal F$ and~$\mathcal L$.)
The indication, in infinitesimal terms, of the presence of such symmetry transformations is that
the $x$-component of the Lie-symmetry vector field $e^{-\varepsilon t}(\p_t-\varepsilon u\p_x+\varepsilon u\p_u)$
of this case depends on~$u$.
To avoid Lie reductions with implicit ansatzes and to simplify the Lie reduction procedure in total,
it is convenient to map the corresponding equation
\begin{gather}\label{eq:RDEsPeculiarEq}
u_t=\frac{u_xu_{xx}}{(u_x+1)^3}+\e u, \quad \e=\pm1\bmod G^\sim_{\mathcal R},
\end{gather}
by the point transformation $\tilde t=\e^{-1}e^{\e t}$, $\tilde x=x+u$, $\tilde u=e^{-\e t}u$
to the equation
\begin{gather}\label{eq:RDEsPeculiarQuadratic}
{\tilde u}_{\tilde t}=\tilde u_{\tilde x}\tilde u_{\tilde x\tilde x}-\e\tilde u\tilde u_{\tilde x},
\end{gather}
which we denote by ``Case~$\tilde{\ref{RDEPeculiarCase}}$''.
We omit tildes of $t$, $x$ and~$u$ below.
The maximal Lie invariance algebra~$\mathfrak g_{\tilde{\ref{RDEPeculiarCase}}}$ of the equation~\eqref{eq:RDEsPeculiarQuadratic}
is spanned by the vector fields $e_1=\p_x$, $e_2=\p_t$, $e_3=t\p_x+\e^{-1}\p_u$ and $e_4=t\p_t-u\p_u$.
It is a realization of the algebra~$\textup g_{4,8}$ with $h=-1$ from Mubarakzyanov's classification,
which is denoted by~$A_{4,8}$ in~\cite{PateraWinternitz1977}
and by $A_{4.8}^{-1}$ in~\cite{PopovychBoykoNesterenkoLutfullin2005}.
An optimal list of one-dimensional subalgebras of~$\mathfrak g_{\tilde{\ref{RDEPeculiarCase}}}$ is exhausted by the subalgebras
$\langle e_1\rangle$, $\langle e_3\rangle$, $\langle e_2\rangle$, $\langle e_2+\e'e_3\rangle$ and $\langle e_4+\kappa e_1\rangle$,
where $\e'=\pm1$ and $\kappa$ is an arbitrary constant.
The equation~\eqref{eq:RDEsPeculiarQuadratic}  admits
the discrete point symmetries~$I_{t,u}$ and~$I_{x,u}$ of simultaneously alternating signs of~$(t,u)$ and of~$(x,u)$, respectively.
The involution~$I_{x,u}$ generates the outer automorphism of the algebra~$\mathfrak g_{\tilde{\ref{RDEPeculiarCase}}}$
with the matrix ${\rm diag}(-1,1,-1,1)$ in the fixed basis.
Using this automorphism we can set $\e'=1$ in the fourth subalgebra.
Ansatzes constructed with the above subalgebras and the corresponding reduced equations
are marked in Table~\ref{tab:RDELieReductions1} by~$\tilde{\ref{RDEPeculiarCase}}$,
where tildes of $t$, $x$ and~$u$ are still omitted.

Reduced equations~$\tilde{\ref{RDEPeculiarCase}}$.1 and~$\tilde{\ref{RDEPeculiarCase}}$.2
are trivially integrated to $\phi=c_0$ and $\phi=c_0/t$, respectively.

The solution set of reduced equation~$\tilde{\ref{RDEPeculiarCase}}$.3 splits into two families.
The first family $\phi=c_0$ is common for both the values of~$\e$.
The second family depends on~$\e$,
$\phi=c_1e^\omega+c_2e^{-\omega}$ if $\e=1$ and
$\phi=c_1\cos\omega+c_2\sin\omega$ if $\e=-1$.

Reduced equation~$\tilde{\ref{RDEPeculiarCase}}$.4 has the first integral $\phi_\omega^{\,2}-\e\phi^2-2\e^{-1}\omega=c_1$.
This equation is mapped by the substitution
$1/\phi_\omega+\phi=p(y)$, where $y=-1/\phi_\omega$ plays the role of the new independent variable,
to the Abel equation of the second kind in the canonical form $pp_y-p=\e/y^3$.

Similarly, reduced equation~$\tilde{\ref{RDEPeculiarCase}}$.5 is mapped by the substitution
$p(y)=\phi$, where $y=\phi_\omega$ plays the role of the new independent variable,
to the equation $((1-\e y)p+\e^{-1}\kappa)p_y=-y^2$,
which is an Abel equation of the second kind if $\kappa\ne0$ and
an equation with separable variables if $\kappa=0$.
In the second case, we obtain the equation
$
\phi^2=\e^{-1}\phi_\omega^{\,2}+2\e^{-2}\phi_\omega+2\e^{-3}\ln|\phi_\omega-\e^{-1}|+c_1
$
whose general solution is represented in a parametric form with quadrature,
where $\phi_\omega$ plays the role of the parameter.
If $\kappa\ne0$, the corresponding Abel equation is reduced to the canonical form
$\bar p\bar p_{\bar y}-\bar p=\e^{-3}(\bar y+\e^{-1}\kappa)^2/\bar y^3$
by the point transformation $\bar y=\e^{-1}\kappa/(1-\e y)$, $\bar p=p+\e^{-1}\kappa/(1-\e y)$.

The equation~\eqref{eq:RDEsPeculiarQuadratic} is of the form
admitting the generalized separation of variables~\cite{GalaktionovPosashkovSvirshchevskii1995}.
The nonlinear differential operator $\tilde u_{\tilde x}\tilde u_{\tilde x\tilde x}-\e\tilde u\tilde u_{\tilde x}$
from the right hand side of this equation preserves
the exponential linear space $\langle1,e^{\tilde x},e^{-\tilde x}\rangle$
or trigonometric linear spaces $\langle1,\cos\tilde x,\sin\tilde x\rangle$
if $\e=1$ or $\e=-1$, respectively, cf.~\cite{GalaktionovSvirshchevskii2007}.
Using the ansatz $\tilde u=\tau^0(\tilde t)+\tau^1(\tilde t)e^{\tilde x}+\tau^2(\tilde t)e^{-\tilde x}$ for $\e=1$
or $\tilde u=\tau^0(\tilde t)+\tau^1(\tilde t)\cos\tilde x+\tau^2(\tilde t)\sin\tilde x$ for $\e=-1$,
where $\tau$'s are the new unknown functions, we
construct three-parameter families of exact solutions of the equation~\eqref{eq:RDEsPeculiarQuadratic},
\[
\tilde u=c_0+c_1e^{\tilde x+c_0\tilde t}+c_2e^{-\tilde x-c_0\tilde t}\quad\mbox{if}\quad\e=1\quad\mbox{and}\quad
\tilde u=c_0+c_1\cos(\tilde x+c_0\tilde t+c_2)\quad\mbox{if}\quad\e=-1.
\]
At the same time, these solutions are Lie invariant solutions,
which can be constructed using reduction~$\tilde{\ref{RDEPeculiarCase}}$.3
and the action of Galilean boosts generated by the Lie-symmetry vector field~$e_3$.
No further extension of these families of exact solutions of the equation~\eqref{eq:RDEsPeculiarQuadratic}
with its Lie symmetries is possible.

Substituting $\tilde t=\e^{-1}e^{\e t}$, $\tilde x=x+u$, $\tilde u=e^{-\e t}u$ into
the above solutions of the equation~\eqref{eq:RDEsPeculiarQuadratic},
we construct implicit solutions of the equation~\eqref{eq:RDEsPeculiarEq}.

Note that the further hodograph transformation $\hat t=\tilde t$, $\hat x=\tilde u$, $\hat u=\tilde x$
reduces the equation~\eqref{eq:RDEsPeculiarQuadratic}
to the equation $\hat u_{\hat t}=\hat u_{\hat x}^{-3}\hat u_{\hat x\hat x}+\e x$,
which is the potential equation for the nonlinear diffusion--reaction equations
$\breve u_{\hat t}=(\breve u^{-3}\breve u_{\hat x})_{\hat x}+\e$, where $\breve u=\hat u_{\hat x}$.
See the first paragraph of this section for references on symmetry analysis of the last equation.

\section{Conclusion}\label{sec:RDEsConclusions}

Carrying out group classification of the class~$\mathcal R$ of (1+1)-dimensional nonlinear diffusion--reaction equations
with gradient-dependent diffusivity in the present paper,
we have corrected and enhanced results of~\cite{ChernihaKingKovalenko2016} in several aspects.

We have justified the necessity of studying the entire class~$\mathcal R$ and carefully analyzed its structure.
In the notation of the present paper, the authors of~\cite{ChernihaKingKovalenko2016}
considered the group classification problem only for the subclass~$\mathcal C$ with respect to its equivalence group,
having discarded other subclasses of~$\mathcal R$.
They include the subclass~$\mathcal H$ with known group classification,
the subclass~$\mathcal F$ whose group classification can be easily derived from known group classification of the subclass~$\mathcal F'$
and the subclass~$\mathcal L$ of linearizable equations with transformational properties
essentially different from other equations in the class~$\mathcal R$.
Nevertheless, the consideration of the subclass~$\mathcal C$ alone is not natural
since some equations in~$\mathcal C$ are related by point transformations to equations in~$\mathcal F$,
and the subclass~$\mathcal F$ intersects the subclasses~$\mathcal H$ and~$\mathcal L$.

We have computed the usual equivalence groups of the class~$\mathcal R$ and all the above subclasses
as well as the generalized equivalence group~$\bar G^\sim_{\mathcal F}$
and the effective generalized equivalence group~\smash{$\hat G^\sim_{\mathcal F}$} of the subclass~$\mathcal F$.
We have also checked the consistency of these groups for using in the course of group classification.
The group~$\hat G^\sim_{\mathcal F}$ gives the first nontrivial example of \emph{finite-dimensional effective generalized equivalence group}
in the literature.
Moreover, the class~$\mathcal F$ has another unexpected property formulated in Theorem~\ref{thm:RDEsOnContainingUsualEquivGroupOfF}:
any effective generalized equivalence group of the class~$\mathcal F$
does not contain the usual equivalence group of this class.
This phenomenon had not been observed before, and finding out it is the most interesting result of the present paper
although it was obtained as a by-product.
Since~\smash{$\hat G^\sim_{\mathcal F}$} is not a normal subgroup of~$\bar G^\sim_{\mathcal F}$,
it is obvious that \smash{$\hat G^\sim_{\mathcal F}$} is not a unique effective generalized equivalence group of the subclass~$\mathcal F$.
Whether this group is unique up to the subgroup similarity within~$\bar G^\sim_{\mathcal F}$ is still an open problem.

In the context of the group classification of the subclass~$\mathcal C$ given in~\cite{ChernihaKingKovalenko2016},
the most significant and also unexpected enhancement
is discovering Case~\ref{RDEPeculiarCase}, which was missed in~\cite{ChernihaKingKovalenko2016}.
It essentially differs from the other classification cases in the subclass~$\mathcal C$
since only equations fitting in Case~\ref{RDEPeculiarCase} up to $G^\sim_{\mathcal R}$-equivalence
admits non-fiber-preserving Lie-symmetry transformations.
An explanation of the presence of such transformations is
that these equations are reduced by non-fiber-preserving point transformations to a potential nonlinear diffusion--reaction equation.

The third direction of enhancements concerns additional equivalences between classification cases.
We have constructed additional equivalence transformations relating Cases~\ref{RDE:f=u_x,g=u^2}a--\ref{RDE:f=u_x,g=u^2}c,
which were missed in~\cite{ChernihaKingKovalenko2016}.
It is obvious that there also exist similar transformations
between multidimensional counterparts of Cases~\ref{RDE:f=u_x,g=u^2}a--\ref{RDE:f=u_x,g=u^2}c,
cf.\ Cases~6--8 of~\cite[Table~2]{ChernihaKingKovalenko2016}.
Using relevant equivalence relations, we have properly gauged all constants parameterizing classification cases
and first proved the completeness of the additional equivalence transformations presented in the footnote of Table~\ref{tab:RDEGroupClassification},
which leads to the group classification of the class~$\mathcal R$ up to the general point equivalence,
see Theorem~\ref{thm:RDEGroupClassificationUpToGenPointEquiv}.

The complex structure of the class~$\mathcal R$ required the application of various advanced techniques
of group analysis of differential equations.
Moreover, we needed to combine and develop them for an efficient solution of the group classification problem for the class~$\mathcal R$.
At the same time, the simple form of equations in this class has allowed us to present the obtained results in a clear~way.

Results of Section~\ref{sec:RDEEquivalenceTransformations} on various equivalence groups
of the class~$\mathcal R$ and of its subclasses~$\mathcal H$, $\mathcal L$, $\mathcal F$, $\mathcal F'$ and~$\mathcal C$
and the classification of Lie symmetries of equations from these classes
that is presented in Section~\ref{sec:RDEsClassGroupClassification}
jointly imply that these classes are not normalized in both the usual and the generalized sense.
(See~\cite{OpanasenkoBihloPopovych2017,Popovych2006,PopovychKunzingerEshraghi2010} for related definitions.)
The existence of additional equivalence transformations among cases of Lie-symmetries extension
in the subclasses~$\mathcal H$, $\mathcal L$ and~$\mathcal C$ means that
these subclasses as well as the entire class~$\mathcal R$ are even not semi-normalized.
This is why in the course of computing the above equivalence groups,
we have suggested and applied an optimized version of the direct method.
This version involves preliminary study of admissible transformations within the entire class
and the successive splitting of the determining equations for these transformations
with respect to the corresponding arbitrary elements and their derivatives,
depending on auxiliary constraints associated with each of the subclasses under consideration separately.
In order to make the group classifications of the subclasses~$\mathcal F$ and~$\mathcal F'$ consistent,
we have found the generalized equivalence group of~$\mathcal F$ and its effective counterpart.
The group classification of the class~$\mathcal L$ has been obtained
from the known group classification of Kolmogorov equations
up to general point equivalence~\cite[Corollary~7]{PopovychKunzingerIvanova2008}
using a technique based on mappings between classes of differential equations
via point transformations~\cite{VaneevaPopovychSophocleous2009}.

The arbitrary elements~$f$ and~$g$ depend on different arguments, $u_x$ and~$u$, respectively.
This is why in the course of solving the group classification problem for the subclass~$\mathcal C$,
we needed to use the double furcate splitting with respect to two pairs ``(an argument, an arbitrary element)'',
$(u_x,f)$ and $(u,g)$.
Moreover, it is impossible to directly apply furcate splitting to the classifying system~\eqref{eq:RDEsDetEqSystem}
within the subclass~$\mathcal C$, not to mention the entire class~$\mathcal R$.
The subsystem of non-classifying determining equations for Lie symmetries of equations from the class~$\mathcal R$
is exhausted by two equations,
which are common for all (1+1)-dimensional evolution equations
and constrain only the $t$-components of Lie-symmetry vector fields, $\tau_x=\tau_u=0$.
The dependence of the corresponding $x$- and $u$-components on~$u$ is not specified
and hence the furcate splitting with respect to $(u,g)$ cannot be initiated
without additional constraints on the arbitrary elements and without a preliminary preparation of the classifying equations. 
One more complication is that the classifying system~\eqref{eq:RDEsDetEqSystem} is (partially) coupled in~$(f,g)$.
Under the auxiliary inequalities $f_{u_x}\ne0$, $(u_x{}^{\!2}f)_{u_x}\ne0$ and $g_u\ne0$
singling out the subclass~$\mathcal C$,
the system~\eqref{eq:RDEsDetEqSystem} implies more equations,
\eqref{eq:RDEsSpecificDetEqsForClassC1}, not involving the arbitrary elements~$(f,g)$
and thus reduces to an uncoupled system for~$(f,g)$; see Lemma~\ref{lem:RDEsDetEqsForC} and the discussion after it.
An inconvenience of the reduced system is that it includes two equations for~$g$,
\eqref{eq:RDEsSpecificDetEqsForClassC2} and~\eqref{eq:RDEsSpecificDetEqsForClassC3}.
Then we have found out a property of Lie symmetry algebras that singles out
two classification cases being singular in the subclass~$\mathcal C$, Cases~\ref{RDE:f=1/(u_x+1),g=u} and~\ref{RDEPeculiarCase}.
In the course of computing these cases, we use the auxiliary furcate spitting with respect $(u_x,f)$.
For other classification cases, there are very restrictive determining equations, $\xi_t=\xi_u=\eta_x=0$,
which simplify the system of classifying equations
to the uncoupled systems of only two equations~\eqref{eq:RDEsReducedDetEqA} and~\eqref{eq:RDEsReducedSpecificDetEqsForClassC3}.
The last system is perfectly appropriate for the double furcate splitting with respect to $(u_x,f)$ and $(u,g)$.

For effectively constructing additional equivalence transformations
or for proving nonexistence of such transformations,
we have combined the direct method of computing point transformations between similar equations
with the algebraic method based on comparing the structure of the corresponding Lie invariance algebras.

\section*{Acknowledgments}

The research of SO was undertaken, in part, thanks to funding from the Canada Research Chairs
program and the NSERC Discovery Grant program.
The authors acknowledge the partial financial support provided by the NAS of Ukraine under the project 0116U003059.
The research of ROP was supported by the Austrian Science Fund (FWF), projects P25064 and P30233.
ROP is also grateful to the project No.\ CZ.$02.2.69\/0.0/0.0/16\_027/0008521$
``Support of International Mobility of Researchers at SU'' which supports international cooperation.

\footnotesize

\end{document}